\documentclass{article}
\usepackage{subfiles}
\usepackage{graphicx,times,fullpage} 
\usepackage{amsmath}  

\usepackage{amssymb} 
\usepackage{amsthm} 
\usepackage{url}
\usepackage{doi}
\usepackage{hyperref}
\usepackage{xurl} 
\usepackage{amsfonts}
\usepackage[round]{natbib}
\usepackage{booktabs}   
\usepackage{tabularx}   
\usepackage{subcaption} 
\usepackage{xcolor}
\usepackage{pgfplotstable}
\usepackage{bm}
\pgfplotsset{compat=1.18} 

\theoremstyle{plain} 

\newtheorem{lemma}{lemma}
\newtheorem{definition}{definition}

\title{Optimal Study Interior Partitioning (OSIP): An Optimization Approach to Observational Study Design under Fine Balance Constraints}
\author{Ron Adar\thanks{Faculty of Data and Decision Sciences, Technion Israel Institute of Technology, Haifa, Israel. \texttt{adar127@gmail.com}.} \and Asaf Levin\thanks{Faculty of Data and Decision Sciences, Technion Israel Institute of Technology, Haifa, Israel. \texttt{levinas@technion.ac.il}. {Partially supported by ISF - Israel Science Foundation grant number 1467/22.}}}

\begin{document}

\maketitle

\begin{abstract}
Covariate matching between treatment and control groups remains a foundational methodology in observational causal inference, designed to replicate the covariate balance of randomized controlled trials. Despite substantial methodological advancements, applied researchers continue to face an inherent trade-off between achieving rigorous balance and preserving the statistical power of the matched sample. We introduce a novel framework for a matching method named {\em Optimal Study Interior Partitioning (OSIP)}. OSIP frames observational matching as a joint optimization problem governed simultaneously by a maximum cardinality constraint on the number of strata, and a strict fine balance threshold on the matching carried out in each strata.

OSIP is based on two-stage optimization algorithm. In both stages it considers a projection of the (high-dimensional) data into one-dimensional space by evaluating an estimated propensity score for each unit. The method is based on enforcing a partition of the input to subsets of consecutive intervals of estimated propensity scores. The novelty arises because we consider this task as a global optimization problem. In the first stage, we employ an optimal partitioning algorithm with respect to an auxiliary (one-dimensional) goal function. In the second phase, we go back to the original distance function and apply heuristics approaches locally optimizing the solution from the first stage. We compare our framework with established methodologies from the literature across well-studied empirical benchmarks. We use the standard benchmark of Lindner cardiovascular dataset to demonstrate that OSIP provides a highly efficient “sweet spot” solution for the corresponding multi-criteria goal.  This conclusion is verified by considering also other benchmark datasets.
\end{abstract}

\section{Introduction}
\label{Intro}

Traditional experimental design assigns units to either a treated (also known as treatment) group, or a control group through randomization, comparing their outcomes to estimate the treatment effect. By blinding evaluators to treatment assignments, researchers can isolate the true treatment effect from potential biases. However, in many real-world scenarios, random assignment is impractical, infeasible, or unethical. Observational studies address these limitations, and designing such studies is our focus here.

\subsection{Background and Motivation}
In the context of observational studies, the objective is to infer causal relationships. Matching treated and control units is a common strategy for adjusting a vector of observed covariates $X$ in the design of observational studies. Such studies are particularly important in settings where controlled experimentation or random assignment to treatment conditions is infeasible \citep{cochran1965planning}). Matching may be performed one-to-one ($1:1$), in which each treated unit is matched to a single control unit, or in a one-to-many ($1:L$) setting, called variable-ratio matching, in which each treated unit is matched to L distinct control units. Or it can be some other variation. Let $\mathcal{T}$ and $\mathcal{C}$ denote the sets of treated and control units, respectively. The objective of the matching procedure is to construct treated-control pairs $(t,c)$, ($t \in \mathcal{T}$, $c \in \mathcal{C}$) such that the paired units are as similar as possible with respect to the observed covariates. The degree of similarity between the matched units is quantified through a covariate distance measure defined on the observed covariates.
Some matching procedures operate directly on multivariate covariate distances, while others first reduce the observed covariates to a scalar balancing score. 

\medskip

In the latter setting, the propensity score, defined as the conditional probability of treatment assignment given the observed covariates, provides a one-dimensional representation of the observed covariate information. Here, we suggest studying standard one-dimensional distance metrics. 
We propose using the $L_1$ or $L_2$ norm, to quantify the similarity between units based on these scores as an auxiliary objective.

For methods that operate on multi-dimensional covariate distance directly, we use the Mahalanobis distance metric. For most methods discussed below, we use its strict version. For OSIP, we also allow using the robust (rank-based) version. 
Matching may constitute a standalone optimization procedure, or serve as a key step within a multi-step framework. 

The second foundational technique in the design of observational studies is subclassification: the partitioning of a sample into a finite (small) set of strata based on observed covariates $\mathfrak{X}$. Although often discussed separately, these two techniques frequently overlap in implementation. For instance, \textbf{Coarsened Exact Matching} (CEM) is fundamentally a subclassification and weighting technique that does not inherently matching pairs.
Consequently, the distinction maintained in this description is intended for conceptual clarity, recognizing that the boundary between matching and subclassification is often vague in practice.

OSIP applies the two mentioned techniques. The strata for OSIP would also be called intervals or sub-intervals.

\subsection{Description of matching based methods}
Next, we consider the existing methods in the literature to which we compare ourselves.  First, the matching-based methods, and later the subclassification methods.

\indent \textbf{Optimal Matching}  is a $1:1$ matching method that constructs treated-control pairs by globally minimizing the total covariate distance across all matched pairs simultaneously, possibly subject to additional matching constraints. It is described in \cite{rosenbaum1989optimal}. Here, the landscape of computationally efficient procedures allow limited additional constraints, while allowing other (perhaps contradicting) constraints make the matching problem NP-hard, and under standard assumptions a problem that cannot be solved within a reasonable amount of time on a large instance.

\textbf{Optimal Full Matching}  is a matching algorithm, that uses all units for globally minimizing the total covariate distance (i.e., keeps all treated and control units), allowing treated units to be matched to multiple control units and vice-versa. Possibly subject to additional matching constraints. See  \citet{hansen2006optmatch} for a comprehensive discussion of this method.

\textbf{Exact matching} is another matching technique, that requires identical values of an observed covariate for matched individuals, e.g., boys matched to boys, girls matched to girls. Here, there might be additional covariates on which this identical value constraints are not enforced.

\textbf{Genetic Matching} (or \textbf{GenMatch}). \citet{diamond2013genetic}  suggested to utilize an evolutionary search algorithm to check and improve the covariate balance. This is achieved by optimizing a weight matrix, $W$, within a Generalized Mahalanobis Distance (GMD) framework.  Evolutanary search algorithms are used as meta-heuristics for optimizing functions subject to constraints in many research fields.  In every step it keeps a population of solutions, and allow two possible steps on a population.  The first such operation is crossover, while the second is mutation.  These are randomized operations and by keeping only a subset of the resulting solutions after each step, the algorithm can direct the population to include better solutions as the current-best solution. In GenMatch, the initialization is a heuristic starting population, the algorithm iteratively explores the covariate space $\mathfrak{X}$ to minimize a loss function, until some stopping criterion on the $p$-value (or the minimum loss) is met. In its default configuration, GenMatch seeks to optimize the minimum $p$-value observed across a variety of balance tests (such as $t$-tests and Kolmogorov-Smirnov tests). By focusing on the smallest $p$-value, the algorithm effectively minimizes the maximum imbalance across all covariates, though the specific loss function can be modified by the user. 

\textbf{Cardinality Matching}  is a two-step optimization matching method. At the balancing step, it identifies the largest matched sample that satisfies a set of pre-specified balance constraints. This is achieved through a set of linear inequality constraints. The result of this step is two balanced sets where the first consists of a subset of treatment units and the second subset consists of controls units (where the balance is achieved with respect to the pre-defined condition stated by the balance constraints).  This balancing step is followed by a pairing step with respect to a parameter named matching ratio. 
The matching ratio must be specified prior to the balancing step. The procedure may incorporate all treated units by assigning each to $L$ distinct controls (variable-ratio matching), or alternatively, it can enforce a $1:1$ ratio to facilitate exact pair-matching in the subsequent stage.
The pairing step is the actual matcher. Its goal is to minimize the total sum of distances between units in a pair (thereby minimizing the heterogeneity of the treated minus control pair differences) with respect to some distance metric, like the Mahalanobis distance. 
This method is introduced and analyzed in \citet{zubizarreta2014cardinality}. 

\subsection{Methods for Covariate Adjustment via Subclassification}

\indent \textbf{Subclassification} (or \textbf{stratification}) aims to reduce selection bias, by partitioning the sample into a finite set of strata based on observed covariates $X$.  It was introduced by  \citet{cochran1965planning} and \citet{cochran1968subclassification}. The goal of this approach is to achieve "local balance" by comparing treatment and control units within relatively homogeneous subsets rather than a global comparison. 
Combining subclassification with the propensity score is also common (see, e.g., \citealp{imbens2015causal}).
\textbf{Coarsened Exact Matching} (or \textbf{CEM}). \citep{cem2012}  developed a sub-method of the \textbf{Monotonic Imbalance Bounding}, see \cite{iacus2011mib}. This method works in 3 steps: 
1) \textbf{Coarsening}: Split (or coarsen) any of the original covariates into sub-categories, hence make a numeric value into categorical (i.e., years of education which is numeric transposed into categories like “elementary school”, “high school” “college” etc.). 
2) \textbf{Stratum Creation}: After this pre-processing step, any such category is used as a “key” to a separate stratum. Since in many scenarios, there is usually more than a single observed covariate, a stratum is defined by the Cartesian product of all sub-categories, across all the newly created categorical covariates. 
3)	\textbf{Estimation and Pruning}: In the final step, strata containing exclusively control units are discarded. Among strata that contain both treatment and control units, the sample average treatment effect on the treated units (SATT) is estimated using a weighted difference-in-means, where the control weights are scaled both globally, and per stratum. Regarding strata containing only treatment units (if such exist) they could either be ignored (as most implementations of the \textbf{CEM} package do) or used with extrapolated values of the control units. Here, when we evaluate CEM such strata containing only treated units are discarded.

\textbf{Quintile Subclassification}  is one of the first subclassification methods in the observational study’s literature (see \citep{cochran1968subclassification, rosenbaum1984quintile}).   \cite{cochran1968subclassification} showed that subclassification can be highly effective for bias reduction. Specifically, under a model assuming a normal distribution on the single continuous covariate and a linear relationship between the covariate and the outcome, partitioning the sample into five equal-sized subclasses (quintiles) removes approximately $90\%$ of the initial bias. While this result holds primarily for monotonic relationships and near-normal distributions, it established the 'rule of five' that remains a benchmark in observational study design since then. 

\textbf{Refined Quintile Subclassification} are discussed in \cite {pimentel2015large, brumberg2024optimal}. This method represents a modern adaption of the classical quintile subclassification approach. \cite{pimentel2015large} introduced a framework for optimal matching with \textbf{refined covariate balance} structured as a hierarchical system. In this approach, there’s a pre-defined lexicographic priority over a nested sequence of nominal covariates, $v_1, \dots, v_m$. The algorithm treats the first covariate in the sequence as the most critical to balance; each subsequent covariate acts as a "refinement" (a more fine partition) of the previous ones. By optimizing balance in this prioritized order, the method ensures that the most essential confounders are strictly equated, while constraints on finer, secondary variables are systematically relaxed only when perfect balance is infeasible. \citet{brumberg2024optimal} utilize propensity score quintile subclassification, 
the modern extension of Cochran's 'rule of five' as a starting point, motivated by the goal of pushing bias reduction from $90\%$ toward $100\%$. Their new approach, known as optimal refinement of strata, aims to reduce the residual bias that traditional quintile subclassification leaves behind. This is accomplished via two operations. First, formulating the split of a stratum into two strata as an integer program, that minimizes within-stratum covariate imbalance. 
Second, approximating the optimal integer solution through a randomized rounding procedure applied on a linear programming relaxation of the integer program mentioned above. Through these steps, the method identifies near-optimal boundaries that subdivide (or refine) the original strata, simultaneously improving balance across many covariates.

\subsection{Paper outlines}
In this paper, we introduce the OSIP framework in Section~\ref{OSIP} and evaluate its performance against established benchmark methods in the literature using standard statistical criteria. We begin by describing the Lindner dataset used to motivate and demonstrate our approach alongside the relevant methodological background. Afterwards, we conclude with a summary of our results on additional datasets.

\section{The Lindner dataset}
\label{LINDNER}

The Lindner dataset is considered e.g., in \cite{helmreich2009psagraphics, li2020doublyrobust, obenchain2020localcontrol}.  It contains observational medical data collected in $1997$ from $996$ patients undergoing an initial Percutaneous Coronary Intervention (PCI), commonly known as coronary angioplasty, at Ohio Heart Health, Christ Hospital in Cincinnati. The patients were followed for at least six months by the medical staff at the Lindner Center.  This dataset served as a standard benchmark in observational study design.  We use it as our main example of our approach to demonstrate the details of our method and its potential advantages.  In the appendix we provide additional support of our claimed advantages based on other well-established benchmark datasets. 

{\bf The causal inference challenge (selection bias).}
The dataset is known for showcasing severe treatment selection bias.
    Here, the treatment is the use of the drug {\em abciximab}.  This drug is a high-molecular-weight, expensive IIb/IIIa cascade blocker designed to prevent blood clots. The standard medical protocols causes the following well known bias.   
 Physicians deliberately assigned abciximab to patients who were significantly more severely diseased (e.g., those experiencing acute myocardial infarctions or displaying lower ejection fractions). Because treatment was not randomized, a naive comparison of treated and control groups would heavily biased the true treatment effect.

{\bf Dataset structure and key covariates.}  Next we list the covariates associated with each patient in this study.  Some approaches use the distinction between a continuous covariate and a binary covariate or a nominal covariate or an ordinal covariate, etc.. Thus, we present the covariates used in the Lindner dataset with their types. 

\begin{table}[htbp]
\centering
\caption{Description of Variables in the Lindner Dataset ($N = 996$)}
\label{tab:lindner_variables}
\begin{tabularx}{\textwidth}{l l X}
\toprule
\textbf{Variable Name} & \textbf{Data Type} & \textbf{Description} \\
\midrule
\texttt{abcix} & Binary (0/1) & Treatment Indicator: 1 if the patient received abciximab; 0 if they received usual care alone (298 patients in control, 698 patients treated). \\
\addlinespace
\texttt{lifepres} / \texttt{sixMonthSurvive} & Numeric value (0 or 11.4) & Primary Outcome: Originally recorded as years of life preserved; often recoded as 6-month survival rate. \\
\addlinespace
\texttt{cardbill} & Numeric value & Secondary Outcome: Total cardiac hospital billing costs in 1998 US dollars. \\
\addlinespace
\texttt{stent} & Binary (0/1) & Covariate:  Coronary stent deployment (1 = Yes, 0 = No). \\
\addlinespace
\texttt{height} & Numeric integer (108-196) &  Covariate: Patient height in centimeters. \\
\addlinespace
\texttt{female} & Binary (0/1) & Covariate: Biological sex (1 = Female, 0 = Male). \\
\addlinespace
\texttt{diabetic} & Binary (0/1) & Covariate: Diabetes diagnosis (1 = Yes, 0 = No). \\
\addlinespace
\texttt{acutemi} & Binary (0/1) & Covariate: Acute myocardial infarction within 24 hours (1 = Yes, 0 = No). \\
\addlinespace
\texttt{ejecfrac} & Numeric value (0-90) & Covariate: Left ventricular ejection fraction (percentage measure of heart pump function). \\
\addlinespace
\texttt{ves1proc} & Numeric integer (0-5) & Covariate: Number of coronary vessels involved in the initial PCI procedure (integers ranges from 0 to 5). \\
\bottomrule
\end{tabularx}
\end{table}

Following standard methodological practice established in the propensity score literature, we define the secondary outcome variable as the logarithm of the cardiac billing costs (\texttt{cardbill}) rather than using its raw monetary value \citep{helmreich2009psagraphics}. Consequently, this transformed outcome variable is evaluated across our matched strata as \texttt{log\_cardbill}.  Similarly, we use this adapted dataset across all methods used to evaluate this Lindner dataset.

{\bf The \texttt{ejecfrac} distribution.}
Among the continuous covariates evaluated, the left ventricular ejection fraction (\texttt{ejecfrac}) is widely recognized as a particularly challenging variable to balance \citep{kereiakes2000abciximab, helmreich2009psagraphics}. A statistical summary of the aggregate baseline data yields a mean of $50.97\%$ and a median of $55.00\%$, spanning a wide operational range from a minimum of $0.00\%$ to a maximum of $90.00\%$. This pronounced gap between the mean and median mathematically confirms that the variable follows a distinct left-skewed distribution, with the bulk of the population concentrated tightly within the upper-middle quartile bounds ($Q_1 = 45.00\%$, $Q_3 = 56.00\%$). To visually capture the underlying selection bias and structural differences between the cohorts before matching, Figure~\ref{fig:ejecfrac_baseline} illustrates the unadjusted probability density distributions for both the treated and control groups. It  demonstrates the left-skewed nature of this clinical covariate and the substantial initial layout imbalance between the groups

\begin{figure}[tbph]
    \centering
    \includegraphics[width=0.7\textwidth]{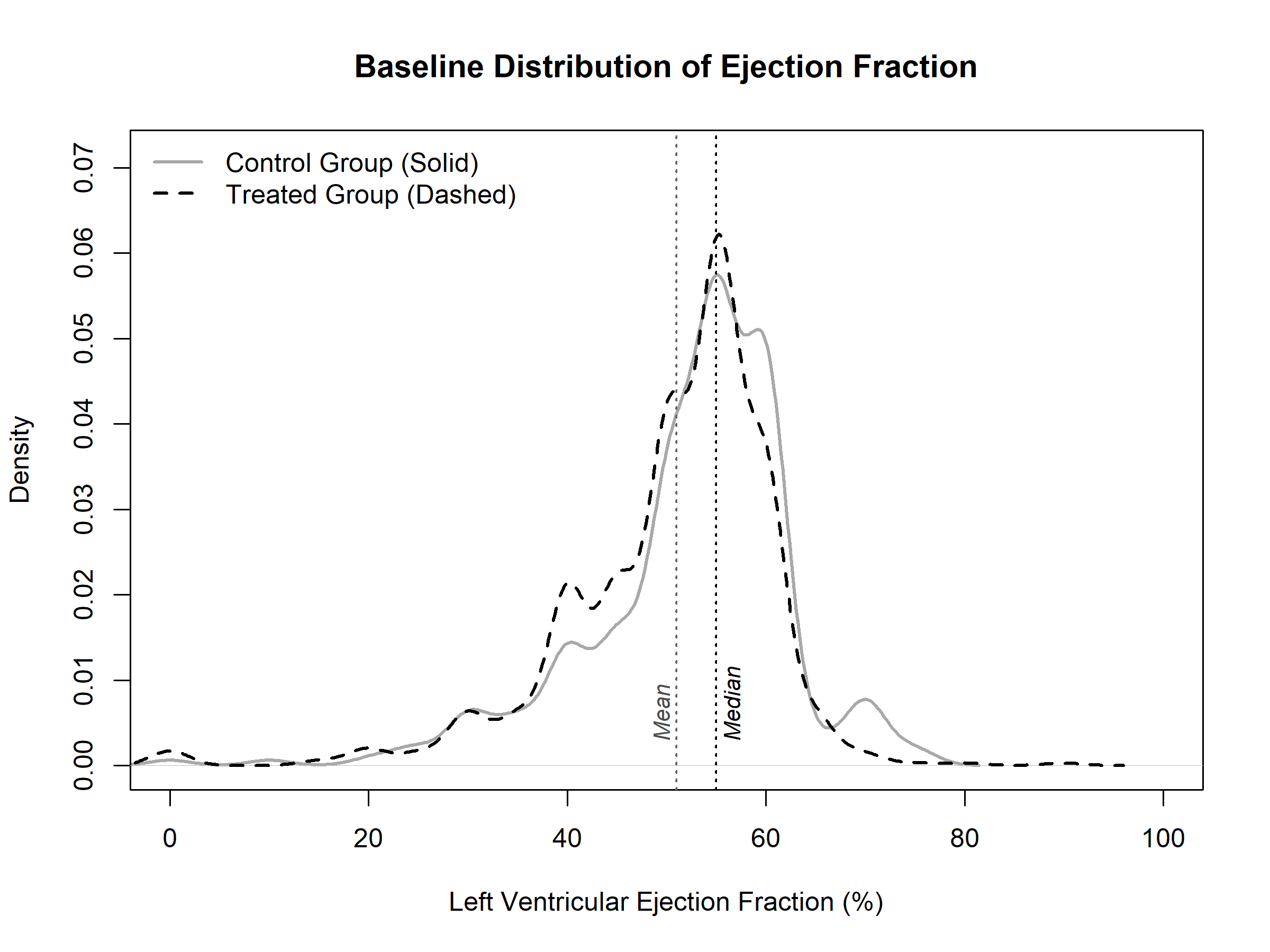}
    \caption{Baseline probability density distribution of left ventricular ejection fraction (\texttt{ejecfrac}) across the unadjusted Lindner dataset (we used R's density function to create the plot). The control cohort ($N_c = 298$) is represented by the solid gray line, and the treated cohort ($N_t = 698$) is denoted by the dashed black line. Vertical dotted reference markers identify the overall population mean ($50.97\%$) and median ($55.00\%$).}
    \label{fig:ejecfrac_baseline}
\end{figure}

\section{Preliminary Data Processing}
Our next goal is to present standard preprocessing steps applied by all methods.  We will demonstrate these steps using the Lindner dataset.
The primary objective of any matching or subclassification framework is to construct a balanced observational sample that closely mimics the properties of a randomized controlled trial \citep{rubin2007design}. Crucially, following the foundational tenets of observational study design, the outcome or response covariate is strictly blinded during this design phase and is only introduced after the balancing mechanism is finalized. Depending on the specific methodology used, this conditioning process yields either a set of discrete strata, a collection of optimized matched pairs, or a weighted sample where units are assigned strategic weights to ensure group comparability.

To prepare the empirical Lindner dataset—or any subsequent experimental data matrix—for optimization, we implement a sequence of four standardized preprocessing procedures. To ensure a completely consistent baseline for algorithmic comparison, these data-conditioning steps are executed uniformly across all datasets and across all design methods prior to initiating the design methods.

{\bf Data Harmonization and Pre-processing}
We implement a uniform pre-processing pipeline consisting of standard and well-established procedures. While the propensity score calculation (Step 1 of the pre-processing) is a prerequisite specifically for propensity-based methods and irrelevant for other methods, the subsequent trimming and standardization (Steps 2, 3, and 4 of the pre-processing) are applied to the global dataset. Consequently, every algorithm evaluated in this study operates on the same standardized and common-support-filtered data.

\begin{enumerate}
    
    \item \textbf{Propensity Score Projection.}     \label{ps}
    The propensity score is a well-established criterion in the matching literature \citep{rosenbaum1983central, rosenbaum2020modern, helmreich2009psagraphics, austin2011tutorial, imbens2015causal}. By projecting the multidimensional baseline covariate space onto a one-dimensional scalar probability, defined as $e(X) = P(T = 1 \mid X)$, it serves as a fundamental balancing score. Conditioning on this bounded metric ($[0,1]$) effectively aligns the multivariate distributions of the treated and control groups. Thereby mitigating selection bias, mimicking the properties of a randomized experiment.  Furthermore, it establishes the partition boundaries for the OSIP method, presented at Section \ref{OSIP}.
    However, regardless of the method to be used, it also serves as a preliminary step, needed for the next trimming step. 

    \medskip
    
    Note that the propensity score of a unit is unobserved, it cannot be evaluated exactly. Thus, we use the standard approach, and instead of using the propensity score we use an estimated value of it.
    In our implementation, we load the Lindner dataset using the R (CRAN) package \textbf{PSAgraphics} \citep{helmreich2009psagraphics}. 
    We use the widely used logistic regression model for estimating the propensity score \citep{rosenbaum1983central, helmreich2009psagraphics}. Specifically, the log-odds of receiving the treatment ($T_i = 1$) for an individual $i$ is modeled as a linear function of their baseline clinical characteristics:
    \begin{equation*}
    \ln\left( \frac{e(X_i)}{1 - e(X_i)} \right) = \beta_0 + \beta_1 \text{stent}_i + \beta_2 \text{height}_i + \beta_3 \text{female}_i + \beta_4 \text{diabetic}_i + \beta_5 \text{acutemi}_i + \beta_6 \text{ejecfrac}_i + \beta_7 \text{ves1proc}_i
    \end{equation*}
      where $e(X_i) = P(T_i = 1 \mid X_i)$ denotes the (estimated) propensity score, $\beta_0$ is the intercept, and $\beta_1$ through $\beta_7$ represent the regression coefficients estimated via maximum likelihood.
    
    \item \textbf{Common Support Trimming (Universal).}  \label{commonsupportstep}
    Common support trimming (also known as enforcing strict overlap) is a preprocessing procedure where units exhibiting extreme propensity score values are removed from the dataset. By eliminating these non-overlapping boundary cases, the analysis focuses exclusively on the region of high covariate overlap between the treated and control cohorts. This is a critical practice for ensuring internal validity and reducing model dependence \citep{crump2009dealing, rosenbaum2020modern, traskin2011defining, fogarty2016discrete, li2019propensity, stuart2010matching}.    
    To ensure that the comparison remains valid across all methods, we enforce a strict common support requirement. Let $\mathcal{P}_T$ and $\mathcal{P}_C$ represent the empirical ranges of the estimated propensity scores for the treated and control groups, respectively:
    \begin{align*}
        \mathcal{P}_T = [\min(e(X) \mid T=1), \max(e(X) \mid T=1)] \mbox{        and       }
        \mathcal{P}_C = [\min(e(X) \mid T=0), \max(e(X) \mid T=0)] \ .
    \end{align*}
    The common support region $\mathcal{S}$ is defined as the interval $[L_{\mathcal{S}}, R_{\mathcal{S}}]$, where the boundaries are given by
    $L_{\mathcal{S}} = \max(\min \mathcal{P}_T, \min \mathcal{P}_C)$ and $R_{\mathcal{S}} = \min(\max \mathcal{P}_T, \max \mathcal{P}_C)$.
    Any unit whose propensity score falls outside $I = [L_{\mathcal{S}}, R_{\mathcal{S}}]$ is removed. This ensures that all methods are evaluated only on the subset of the population where treatment and control groups overlap. 
    
    We remark that this step deviates from the approach in \cite{levin2024}, which considers the entire theoretical range of $[0,1]$.  This trimming step allows focusing the computational effort of the dynamic programming algorithm of OSIP strictly on the region of empirical overlap. 
     
    \item \textbf{Pooled Variance Standardization (Universal).}     \label{pooledvariancestep}
    To prevent covariates with naturally large scales from disproportionately dominating the distance metrics across the evaluated frameworks, all continuous numeric covariates are standardized. Specifically, we scale each covariate $X_m$ by its \textbf{pooled within-group standard deviation}, following the structural variance-weighting conventions \citep{rubin1980bias, rosenbaum1985constructing, stuart2010matching, austin2014moving}:
    \[ \sigma_{m,\text{pooled}} = \sqrt{\frac{S^2_{m,T} + S^2_{m,C}}{2}} \]
    where $S^2_{m,T}$ and $S^2_{m,C}$ represent the unbiased sample variances within the treated ($T$) and control ($C$) groups, respectively, computed as:
    \[ S^2_{m,T} = \frac{1}{N_T - 1} \sum_{i: T_i=1} (X_{im} - \bar{X}_{m,T})^2 \quad \text{and} \quad S^2_{m,C} = \frac{1}{N_C - 1} \sum_{j: T_j=0} (X_{jm} - \bar{X}_{m,C})^2 \ . \]
    Here, $N_T$ and $N_C$ denote the respective group sample sizes, while $\bar{X}_{m,T}$ and $\bar{X}_{m,C}$ represent the group-specific sample means. 
    
    To preserve maximal numeric fidelity for the optimization stages, the square root calculation is maintained at full double-precision floating-point representation without arbitrary rounding. However, to ensure numerical stability against invariant, near-constant, or missing-value covariates, a regularizing safety threshold is implemented: if $\sigma_{m,\text{pooled}}$ is mathematically undefined or falls below $10^{-10}$, the scaling factor defaults strictly to 1 to prevent division-by-zero errors. Each covariate value is subsequently transformed via $X'_{im} = \frac{X_{im}}{\sigma_{m,\text{pooled}}}$.

     \item \textbf{Assuming More Control Than Treated (Universal).} 
     \label{fliprolesstep}
     To simplify the presentation of our method, we would like to assume that the number of treated units is at most the number of control units. This idea is common practice  $1:1$ matching \citep{fredrickson2020comment}. 
     If this assumption does not hold, (with respect to the subset of units left after the common support trimming), we change the definition of treatment and control by switching their roles.  After the output designs are constructed, we can switch the treated and control roles back to their original meaning. This step is carried out for all methods that we consider. For many of those it has absolutely no impact, and for our method it means that our formulas presented below do not have different cases based on the condition (more control units than treated units) being satisfied or not. 
\end{enumerate}

\begin{figure}[htbp]
    \centering
    \small
    \begin{subfigure}{0.48\textwidth}
        \centering
        \includegraphics[width=\textwidth]{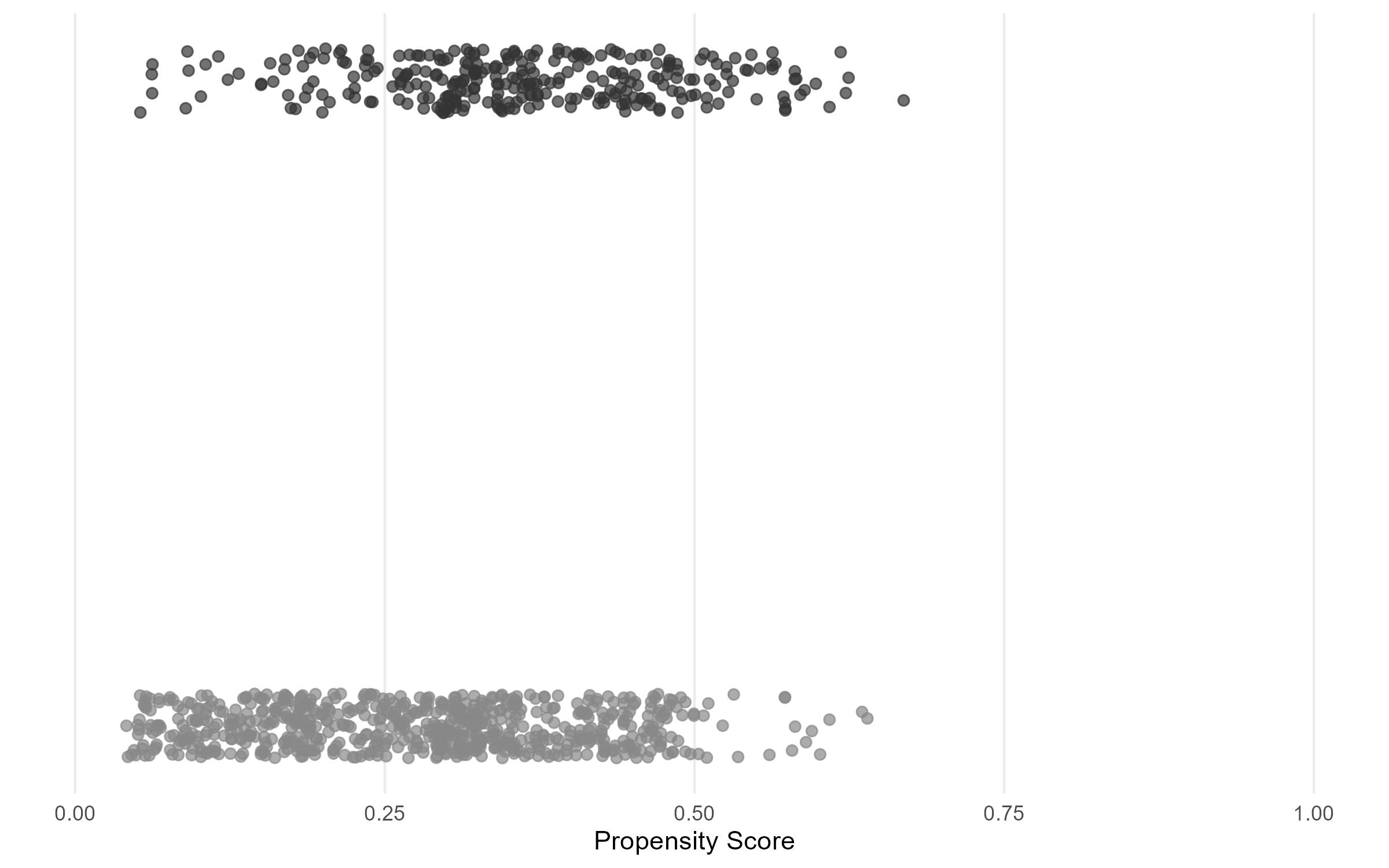}
        \caption{Full Sample (Before Trimming)}
        \label{fig:ps_untrimmed}
    \end{subfigure}
    \hfill 
    \begin{subfigure}{0.48\textwidth}
        \centering
        \includegraphics[width=\textwidth]{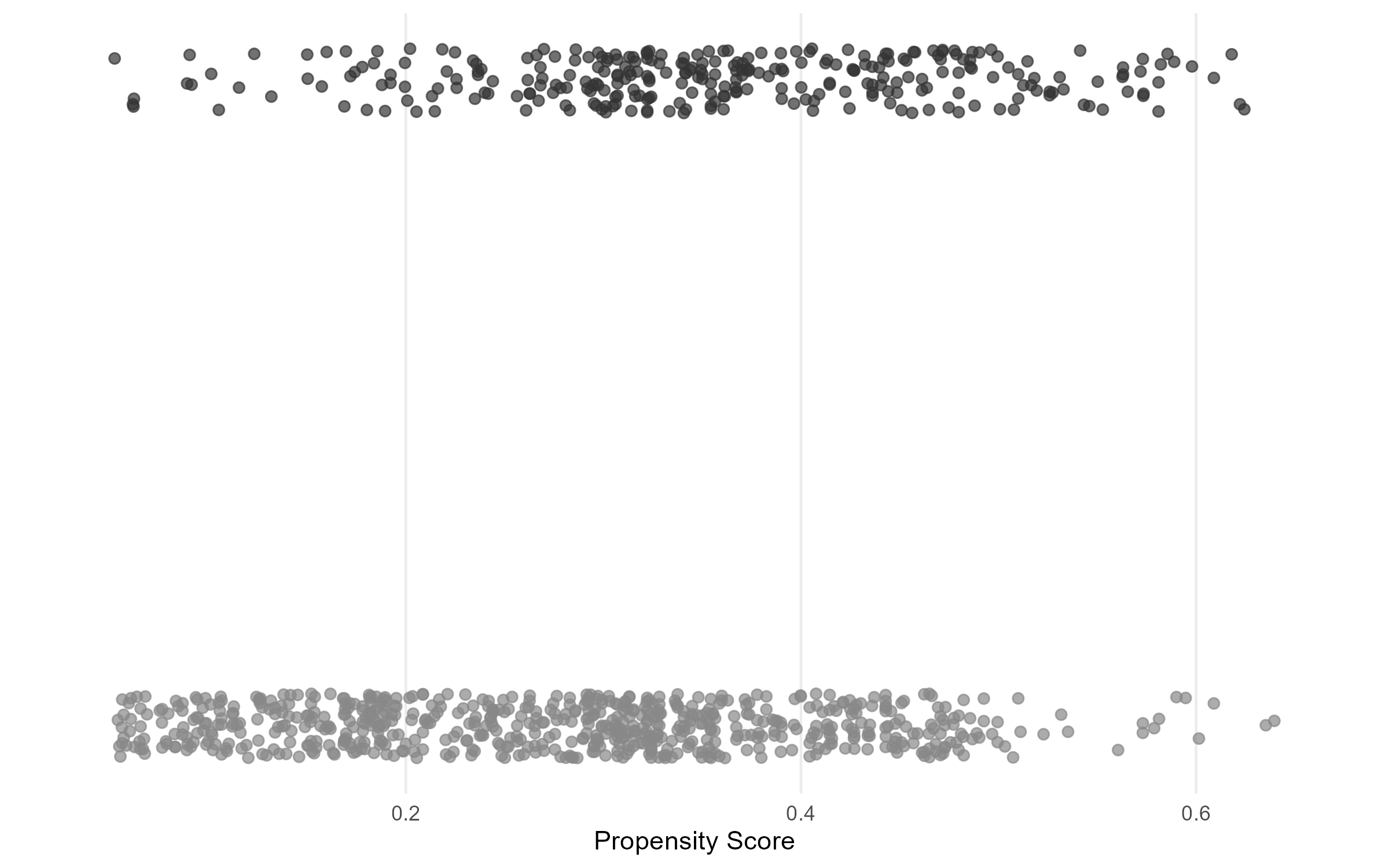}
        \caption{Common Support Sample (After Trimming)}
        \label{fig:ps_trimmed}
    \end{subfigure}

    \caption{Propensity score distributions for the Lindner dataset before and after enforcing common support constraints. Dark circles (top) denote treated units; light gray circles (bottom) denote control units. Trimming removes non-overlapping units at the support boundaries.}
    \label{fig:lindner_ps_comparison}
\end{figure}

For the Lindner dataset, applying step~\ref{commonsupportstep} shifts the interval boundaries from $[0,1]$ to $[0.0417, 0.6878]$. Figure~\ref{fig:lindner_ps_comparison} illustrates the propensity score distributions before and after trimming. Step~\ref{fliprolesstep} is used because the dataset originally contains $298$ control units and $698$ treated units. After restricting to the common support and reversing the group roles, the sample consists of $297$ treated units and $684$ control units.

\section{Optimal Study Interior Partitioning (OSIP) method}
\label{OSIP}
Optimal Study Interior Partitioning (OSIP) is a two-stage optimization algorithm designed to construct matched treated and control groups. The framework is motivated by the work of \cite{levin2024}. 
Although this work utilizes the general structure of Levin’s work, it serves as an independent study focused on application using a different algorithm. Here, we provide a stand-alone comparison of OSIP against the established methods mentioned in Section~\ref{Intro}, using various benchmark datasets. Because this work involves both the proposal and the evaluation of OSIP, we first define it. 

\subsection{OSIP overview}
As mentioned earlier, following the pre-processing steps mentioned above, OSIP applies a two-stage approach. The first stage is using dynamic-programming optimization to find a candidate feasible solution. The second step is using a heuristic search to try optimizing the solution achieved from the first stage. We now elaborate on the specific mechanics, objectives, and constraints of each stage. From a high-level perspective, OSIP identifies a near optimal $1:1$ matching between treated and control units by structurally restricting the search space. To establish a rigorous foundation for both stages, we first formalize the strict limitations governing the feasibility of any candidate solution. These limitations were suggested by \cite{levin2024}.

The feasibility is defined based on two parameters denoted as $K,\delta$.
We define a mathematically feasible solution to be a partition of the trimmed support interval $[L_{\mathcal{S}}, R_{\mathcal{S}}]$ into a sequence of $k$ contiguous, non-overlapping active sub-intervals (strata in our approach):
$I_1, I_2, \dots, I_k $
defined by a sequence of boundaries $L_{\mathcal{S}} = p_0 < p_1 < \dots < p_k = R_{\mathcal{S}}$ (where $k\leq K$). To ensure that every unit within the support domain is uniquely assigned to exactly one stratum, the nonempty sub-intervals are conventionally defined as left-open and right-closed, taking the form $I_m = (p_{m-1}, p_m]$ for $1 < m \le k$. To accommodate boundary constraints, the initial leftmost sub-interval is explicitly closed on both ends, denoted as $I_1 = [p_0, p_1]$. Beyond these standard partitions, alternative interval topologies may theoretically arise. Specifically, the framework admits the existence of empty strata, containing zero treated and zero control units, which manifest as open intervals of the form $(p_{m-1}, p_m)$. Furthermore, an empty interval may terminate at an arbitrary value $b$ within the support range, resulting in an open partition $(p_{m-1}, b)$, which subsequently requires the succeeding nonempty interval to be closed on both endpoints. While these empty configurations represent valid theoretical edge cases, we subsequently demonstrate that an appropriate choice of the parameter $\delta$ analytically eliminates the necessity of partitions of the form $(p_{m-1}, b)$. Moreover, empirical evidence from the comprehensive datasets utilized in this study indicates that empty intervals of the form $(p_{m-1}, p_m)$ are practically redundant when dealing with real-world applications.

As suggested in \cite{levin2024}, the structure of the partition space is bounded by two global structural parameters, $\delta$ and $K$ using the following three families of constraints.
\begin{itemize}
    \item \textbf{Length Constraint ($\delta$):} No sub-interval may exceed a maximum width of $\delta$ in the propensity score space, that is, we require $|I_m| = (p_m - p_{m-1}) \le \delta$ for all $m \in \{1, \dots, k\}$. This ensures local homogeneity by preventing treated units from being matched to controls with significantly different propensity scores.  That is, this constraint heuristically tries to limit ourselves to balanced matchings.
    
    \item \textbf{Partition Size Limit ($K$):} The parameter $K$ represents the maximum allowable granularity of the stratification, acting as a strict upper bound on the number of active sub-intervals such that $k \le K$. 
Similarly to the analysis in \cite{levin2024}, we can always assume that the pair ($\delta$, $K$) satisfies the structural baseline relationship:
\begin{equation}
\label{eq:delta_k_bound}
\frac{K}{2} \le \left\lceil \frac{p_k-p_0}{\delta} \right\rceil \le K .
\end{equation}
\item \textbf{Fine-balance constraint.} For the resulting partition to be considered feasible, every individual active sub-interval $I_m$ must satisfy the fine-balance constraint with respect to a new artificial covariate being the index of the sub-interval containing (the propensity score of) the unit. This condition ensures that within each stratum, a sufficient density of control units exists to locally satisfy a $1:1$ treatment-to-control ratio.  Formally, for each interval $I_m$, let $n_{m,T}$ be the number of treated units in $I_m$, and let  $n_{m,C}$ be a count of control units in $I_m$.
The fine balance constraint holds if, for all $m$, we have
\begin{equation}
\label{eq:fine_balance_capacity}
n_{m,C} \ge n_{m,T} \ . 
\end{equation}
\end{itemize}
\medskip

\subsection{Pairing and distance metrics} 

To formalize the optimization framework, we define the structural unit of our post-stratification matching space. Let $\mathcal{T}$ be the set of treated units, and $\mathcal{C}$ the set of control units, with respect to a given dataset. Here, we assume that a feasible solution is known based on context.

\begin{definition}[Matched Pair]
A matched pair $M_{ij} = (t_i, c_j)$ is a tuple consisting of exactly one treated unit $t_i \in \mathcal{T}$ and one control unit $c_j \in \mathcal{C}$ that have been assigned to the same stratum (interval) $I_m$.
\end{definition}

The following definition simulates the core idea of the OSIP method.  

\begin{definition}[Pairwise Distance]
Let $x_i$ and $x_j$ be the covariate vectors (or scalar propensity scores) for units $t_i$ and $c_j$, respectively. The distance between the matched pair $(t_i,c_j)$ is given by a generalized metric $d(x_i, x_j) \ge 0$, which quantifies the local covariate discrepancy.
\end{definition}

\subsubsection{\texorpdfstring{Global Match Discrepancy and $L_p$ Aggregation}{Global Match Discrepancy and Lp Aggregation}}
To minimize global imbalance across the entire matched sample, the objective function aggregates individual pairwise distances across all selected matches. For a complete matching configuration $\mathcal{M} = \{(i, j)\}$, the global match discrepancy is evaluated by applying a generalized $L_p$ aggregation norm over the individual pair distances:

\begin{equation}
\mathcal{D}_p(\mathcal{M}) = \left( \sum_{(i, j) \in \mathcal{M}} \left[ d(\mathbf{x}_i, \mathbf{x}_j) \right]^p \right)^{\frac{1}{p}}
\end{equation}
\noindent where $p \ge 1$, and $d(\mathbf{x}_i, \mathbf{x}_j)$ represents the chosen distance metric between unit $i$ and unit $j$. 

This formulation provides flexibility at both the unit level and the sample level.
 The function $d(\mathbf{x}_i, \mathbf{x}_j)$ can accommodate multivariate covariate vectors via metrics listed below, or it can be restricted to the one-dimensional case where $\mathbf{x}_i$ and $\mathbf{x}_j$ represent scalar propensity scores and $d(\mathbf{x}_i, \mathbf{x}_j)= |\mathbf{x}_i- \mathbf{x}_j|$.
 Setting $p=1$ minimizes the total absolute discrepancy across the sample. Conversely, setting $p=2$ represents a root-mean-square optimization, which more heavily penalizes large individual pair mismatches to protect against localized balance failures.

\subsubsection{Multi-Dimensional Mahalanobis Distance Metric} \label{multidimmet}
While the scalar propensity score defines the structural partition boundaries of the strata, matching purely on the score can leave residual imbalances within the intervals if covariates are highly non-linear or high-dimensional. To guard against this, the underlying distance metric $d(x_i, x_j)$ used for our cost function can be defined using the Mahalanobis distance.  We use it in the second stage of our optimization method.

The \textbf{Mahalanobis distance} \citep{rubin1979, rubin1980bias, rosenbaum1985constructing, stuart2010matching} between a treated unit vector $x_i$ and a control unit vector $x_j$ scale-adjusts the discrepancies by factoring in the sample covariance matrix:
\begin{equation}
d_M(x_i, x_j) = \sqrt{(x_i - x_j)^T \Sigma^{-1} (x_i - x_j)}
\end{equation}
where $\Sigma$ represents the pooled, multi-dimensional covariance matrix of the covariates calculated across the baseline sample:
$\Sigma = \frac{\Sigma_T + \Sigma_C}{2}$.

Unlike the standard Euclidean distance—which implicitly assumes all covariates are orthogonal and measured on identical scales—the Mahalanobis metric offers two critical structural advantages for a matching method.  These are described as follows.
\begin{enumerate}
    \item \textbf{Automatic Scale Normalization:} It multiplies the vector differences by $\Sigma^{-1}$, effectively dividing each covariate by its variance. This prevents variables with naturally large scales  from mathematically overwhelming variables with small scales. Since we apply the pre-processing steps, the first advantage occurs mainly in covariates which were not standardized.   
    \item \textbf{Correlation Correction:} By utilizing the off-diagonal entries of the covariance matrix, it accounts for the linear dependencies between covariates. If two covariates are highly collinear, the Mahalanobis metric avoids ``double-counting'' that specific dimension of imbalance during pair construction.
\end{enumerate}

The standard version of the Mahalanobis distance presented above is highly sensitive to the presence of outliers and the scale distortions introduced by rare binary covariates \citep{rosenbaum2020design}. To address these limitations, we employ the \textbf{rank-based Mahalanobis distance} (also referred to as the \textbf{robust Mahalanobis distance}) \citep{rosenbaum2020design}, which modifies the traditional calculation through three main steps:

\begin{enumerate}
    \item \textbf{Rank Transformation:} Each covariate is replaced, one at a time, by its corresponding ranks, utilizing average ranks to resolve ties.
    \item \textbf{Covariance Adjustment:} The covariance matrix of these ranks is pre- and post-multiplied by a diagonal matrix. The diagonal elements of this matrix are defined as the ratios of the standard deviation of untied ranks (the sequence $1 \dots n$, where $n$ is the total number of units) to the standard deviations of the actual tied ranks of the covariates.
    \item \textbf{Distance Calculation:} The final robust Mahalanobis distance is computed using the transformed ranks and this adjusted covariance matrix.
\end{enumerate}

\subsection{Scores Calculation}
Consider a solution, and let $I_m$ as before be the interval 
$I_m = (p_{m-1}, p_m]$. Assume that we have some pairwise distance metric $d$ defined, we can define the score (or cost) of $I_m$. 

\begin{definition}[True Interval Matching Cost]
Let $\mathcal{T}_m$ and $\mathcal{C}_m$ denote the subsets of treated and control units falling within the interval boundaries $I_m = (p_{m-1}, p_m]$. The (true) matching cost $\text{MatchCost}(I_m)$ with respect to the general distance metric $d(\cdot, \cdot)$ is defined as the minimum total discrepancy of a valid injection:
\begin{equation}
\text{MatchCost}(I_m) = \min_{\pi} \sum_{i \in \mathcal{T}_m} d(x_i, x_{\pi(i)})
\end{equation}
\noindent subject to the constraint that $\pi: \mathcal{T}_m \to \mathcal{C}_m$ is an injective mapping (1:1 matching without replacement). If no feasible matching configuration can be constructed due to sample starvation ($|\mathcal{C}_m| < |\mathcal{T}_m|$), the cost is set to infinity, $\text{MatchCost}(I_m) = \infty$.
\end{definition}

Given an interval $I_m$ we can compute the value of $\text{MatchCost}(I_m)$ in polynomial time by computing an optimal matching in a bipartite graph.  These algorithms tend to have a relatively large computation time (though polynomial time), and thus in the sequel we will also use a relaxed estimated cost $\text{EstCost}(I_m)$ defined below. 
It is trivial to note that for any valid interval $I_m$, the auxiliary heuristic score provides a mathematical lower bound to the true matching cost, i.e., $\text{EstCost}(I_m) \le \text{MatchCost}(I_m)$.

\begin{definition}[Auxiliary Heuristic Estimation Cost]
An auxiliary estimated cost, $\text{EstCost}(I_m)$, is defined by relaxing the injective constraint by allowing matching with replacement:
\begin{equation}
\text{EstCost}(I_m) = \sum_{i \in \mathcal{T}_m} \min_{j \in \mathcal{C}_m} d(x_i, x_j)
\end{equation}
\noindent where each treated unit is independently mapped to its single $1$-nearest neighbor ($1$-NN) in the control pool within the same stratum. If $\mathcal{C}_m = \emptyset$, then $\text{EstCost}(I_m) = \infty$. 
\end{definition}

\subsection{Stage 1: Dynamic Programming Interval Initial Selection}
\label{DP_Step}
The dynamic programming (DP) algorithm identifies the sequence of stratum boundaries $\{p_1, \dots, p_{k-1}\}$ that minimizes a global auxiliary objective function. Crucially, during this recursive search phase, the true matching cost $\text{MatchCost}(I_m)$ is not explicitly computed due to prohibitive computational overhead. Instead, the DP algorithm uses the fast heuristic proxy.  That is, our goal is to solve
$\min \sum_{m=1}^{k} \text{EstCost}(I_m)$ subject to the three constraints stated above.

\subsubsection{Interval Types}
To ensure complete clarity regarding the feasibility boundary, a strict operational distinction is maintained between three distinct structural states of an interval. Any feasible partition of some interval $I$ is combined of these three distinct types of sub-intervals: 
\begin{enumerate}
    \item \textbf{Empty (or Degenerate) Interval ($n_{m,T} = 0$ and $n_{m,C} = 0$):} An open interval that captures zero units of either type. Note 
    that it does not violate the fine-balance constraint. However, we will show in Lemma~\ref{lem:feasibility_bound} that if  $\delta$ is chosen in the correct range of values, such empty intervals are redundant. 
    \item \textbf{Controls-Only Interval ($n_{m,T} = 0$ and $n_{m,C} > 0$):} An interval that captures only control units and zero treated units. Because $n_{m,C} > 0$ satisfies the fine-balance constraint when $n_{m,T} = 0$, this case is feasible. Crucially, it does not count as an empty interval; rather, it represents a valid stratum of unmatched reservoirs of control units that the algorithm may utilize or bypass depending on global optimization goals.
    \item \textbf{Balanced Interval}: An interval that has both treated and control units, and satisfies the fine-balance constraint. 
\end{enumerate}

Note that an interval with $n_{m,T} > 0$ and $n_{m,C} = 0$ is infeasible as it explicitly violates the fine-balance constraint ($n_{m,C} \ge n_{m,T}$). 
With a slight abuse of notation, we consider empty intervals as well as controls-only intervals  as any balanced interval, and any one of them is adding one to the total number of sub-intervals in a partition of $I$. 

Figure \ref{fig:partition_example} shows a syntactic example of a feasible solution for the parameter values $\delta = 0.12$ and $K = 10$. The solution has $k=7$ intervals of the three types mentioned above.  

\begin{figure}[htbp]
    \centering
    \includegraphics[width=0.4\textwidth]{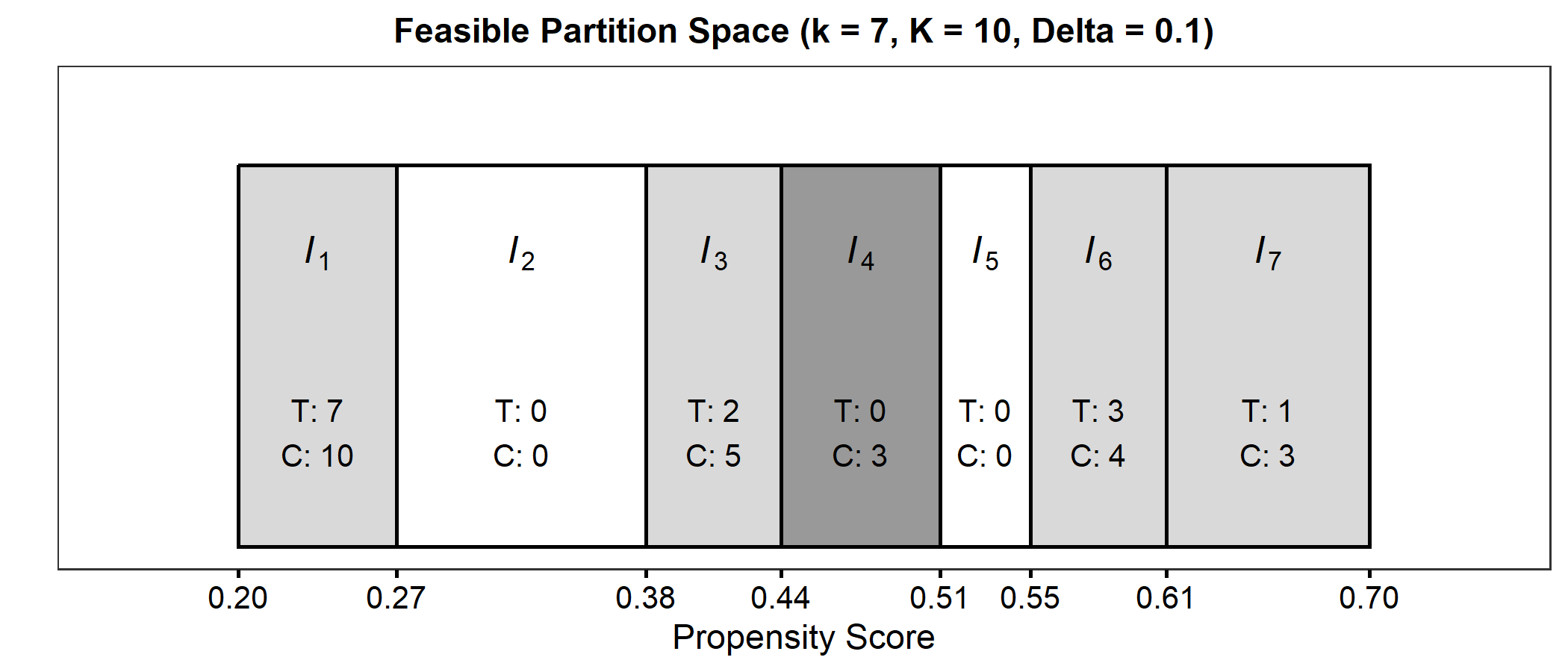}
    \caption{Illustrative example of a feasible solution. $[L_{\mathcal{S}}, R_{\mathcal{S}}] = [0.2, 0.7]$, $\delta = 0.12$, and $K = 10$. The intervals are 
    (1) $I_1 = [0.2, 0.27]$, a balanced interval containing $n_{1,T}=7$ treated and $n_{1,C}=10$ control units; 
    (2) $I_2 = (0.27, 0.38)$, an empty interval ($n_{2,T}=0, n_{2,C}=0$); 
    (3) $I_3 = [0.38, 0.44]$, a balanced interval containing $n_{3,T}=2$ treated and $n_{3,C}=5$ control units; 
    (4) $I_4 = (0.44, 0.51]$, a controls-only interval containing $n_{4,T}=0$ treated and $n_{4,C}=3$ control units; 
    (5) $I_5 = (0.51, 0.55)$, an empty interval ($n_{5,T}=0, n_{5,C}=0$); 
    (6) $I_6 = [0.55, 0.61]$, a balanced interval containing $n_{6,T}=3$ treated and $n_{6,C}=4$ control units; and 
    (7) $I_7 = (0.61, 0.70]$, a balanced interval containing $n_{7,T}=1$ treated and $n_{7,C}=3$ control units.}
    \label{fig:partition_example}
\end{figure}

\subsubsection{Gaps Handling and the Choice of \texorpdfstring{$\delta$}{Delta}}
\label{delta_choice}

We define an empty interval as any data-void stratum of the form $I_m = (p_{m-1}, p_m)$ (or an early truncation of the form $I_m = (p_{m-1}, b)$, where $b > p_{m-1}$ represents an intermediate value along the trimmed interval support) that contains zero sample observations from both the treated and control pools. To prevent the partitioning framework from artificially expanding the number of active strata by concatenating multiple small empty intervals back-to-back to bridge a sparse region, we define a canonical structural assumption: any sequence of consecutive empty intervals can be uniquely consolidated and replaced by a single, longer empty gap. 

Under this consolidation rule, we define a \textbf{large gap} as any single consolidated empty interval $I_m = (p_{m-1}, b)$ whose total width strictly exceeds the maximum allowed threshold, i.e., $b - p_{m-1} > \delta$. 
We first establish the geometric condition required to guarantee that the sample density is sufficient to prevent the occurrence of such large gaps.

Clearly, the choice of the threshold parameter $\delta$ is crucial to the structural feasibility of the partition. We define the baseline maximum individual pairing distance with respect to the local $L_p$ metric along the propensity score scale as:
\begin{equation}
\delta_{\text{pair}} = \max_{i \in \mathcal{T}} \left( \min_{j \in \mathcal{C}} \|x_i - x_j\|_p \right)
\end{equation}

\noindent where $x_i$ and $x_j$ represent the estimated propensity scores for treated unit $i$ and control unit $j$, respectively. Using this definition, we state the following lemma establishing the necessary and sufficient baseline bound on $\delta$ required to eliminate these structural data-voids.  Similarly we define \begin{equation}
\delta_{\text{cons}} = \max_{i \in \mathcal{T}\cup \mathcal{C}: x_i\neq R_{\mathcal{S}}} \left( \min_{j \in \mathcal{T}\cup \mathcal{C}: x_i<x_j} \|x_i - x_j\|_p \right).
\end{equation}

\begin{lemma}
\label{lem:feasibility_bound}
Let $\mathcal{S} = \{x_{(1)}, x_{(2)}, \dots, x_{(N)}\}$ represent the complete set of treated and control units pooled and ordered by their estimated propensity scores along the trimmed support $[L_{\mathcal{S}}, R_{\mathcal{S}}]$. Under the canonical consolidation rule, a valid post-stratification partition into active strata $I_m = (p_{m-1}, p_m]$ satisfying the maximum length constraint $|I_m| \le \delta$ is guaranteed to exist without the forced introduction of large 
gaps if 
$\delta \ge \delta_{\text{cons}}$.  On the other hand, if $\delta < \max\{ \delta_{\text{pair}},\delta_{\text{cons}}\} $, then such partition does not exist. Last, if there is a feasible solution to the problem satisfying all constraints with length parameter $\delta'$, then there is a feasible solution with length parameter $\delta'+\delta_{\text{cons}}$ of the same cost.
\end{lemma}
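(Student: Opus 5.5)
We prove the three claims separately. All three rest on one elementary fact about the pooled order statistics $x_{(1)}\le\dots\le x_{(N)}$: by definition of $\delta_{\text{cons}}$, every gap between consecutive distinct values satisfies $x_{(r+1)}-x_{(r)}\le\delta_{\text{cons}}$. Note also that $x_{(1)}=L_{\mathcal{S}}$ and $x_{(N)}=R_{\mathcal{S}}$, because the trimming step keeps the extreme units defining the common support.

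\textbf{Sufficiency.} Assume $\delta\ge\delta_{\text{cons}}$ and use the finest partition: take the breakpoints $p_m$ to be the distinct sorted scores, so each $I_m$ is either $[x_{(1)},x_{(1)}]$ or of the form $(x_{(r)},x_{(r+1)}]$. Each such stratum has length at most $\delta_{\text{cons}}\le\delta$. Each contains a unit at its right endpoint, so no empty interval, and in particular no large gap, is ever introduced. This shows the existence claim for the length constraint alone. The count bound $K$ and fine balance are not part of this first assertion, so I will remark that they are handled separately. If the intended reading includes fine balance, I would instead merge adjacent strata greedily from left to right, which keeps every length below $\delta$ while absorbing controls. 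I expect pinning down exactly which constraints ``valid partition'' carries to be the main obstacle.

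\textbf{Necessity.} Suppose $\delta<\delta_{\text{cons}}$. Take the pair $x_{(r)}<x_{(r+1)}$ that attains $\delta_{\text{cons}}$. The open region $(x_{(r)},x_{(r+1)})$ contains no data. Under the consolidation rule, whatever stratum covers points just to the right of $x_{(r)}$ must either reach $x_{(r+1)}$, and so have length greater than $\delta$ (if it contains $x_{(r)}$, or starts at or before it), or be an empty gap of length $\delta_{\text{cons}}>\delta$, which is a large gap.

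Now suppose $\delta<\delta_{\text{pair}}$. Take the treated unit $i$ attaining $\delta_{\text{pair}}$. Its stratum has length at most $\delta$, so it contains no control unit, since every control lies at distance at least $\delta_{\text{pair}}>\delta$ from $x_i$. That stratum therefore has $n_{m,T}\ge1$ and $n_{m,C}=0$, violating \eqref{eq:fine_balance_capacity}. So no feasible partition exists. Here I implicitly read ``such partition'' as including fine balance, consistent with the $\max$ in the statement.

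\textbf{Monotonicity.} Take a feasible solution with parameter $\delta'$ and cost $c$. Replace every empty interval (including any $(p_{m-1},b)$) by merging it into the adjacent nonempty stratum on its right, which then starts at the right end of the preceding nonempty stratum. Each merged stratum originally had length at most $\delta'$. The added empty stretch lies between two consecutive data points, so it has length at most $\delta_{\text{cons}}$. The new lengths are therefore at most $\delta'+\delta_{\text{cons}}$. The unit memberships of all nonempty strata are unchanged, so fine balance, $\text{EstCost}$ and $\text{MatchCost}$ are all preserved, and the number of strata does not increase, so $k\le K$ still holds.

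One case needs care: empty intervals after the last data point, or a left boundary with no data. These cannot occur because the data points at $L_{\mathcal{S}}$ and $R_{\mathcal{S}}$ exist, so trailing empties have zero length and can be merged leftward.
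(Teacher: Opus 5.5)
Your outline coincides with the paper's: sufficiency from the $\delta_{\text{cons}}$ bound on consecutive gaps (your finest partition is an explicit witness for what the paper asserts ``by definition''), necessity via an over-long data-free stretch or an isolated treated unit, and the last claim by absorbing each consolidated empty interval into a nonempty neighbour. The paper merges leftward and you merge rightward; both work because $L_{\mathcal{S}}$ and $R_{\mathcal{S}}$ are observed scores. Your $\delta_{\text{pair}}$ argument and the monotonicity part are correct.

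\textbf{The gap in the $\delta_{\text{cons}}$ half of necessity.} Your case split is not exhaustive. The stratum containing $x_{(r)}$ may end strictly inside $(x_{(r)},x_{(r+1)})$, and the stratum containing $x_{(r+1)}$ may begin strictly inside it. The data-free stretch can then be covered by two nonempty strata plus an empty piece shorter than $\delta_{\text{cons}}$, or with no empty piece at all. With arbitrary real cut points, which $I_m=(p_{m-1},p_m]$ and the paper's $(p_{m-1},b)$ gaps both permit, the claim is in fact false. Put one treated and one control unit at each of two scores $a<b$ and nothing else, so $\delta_{\text{pair}}=0$ and $\delta_{\text{cons}}=b-a$. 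For $\delta=0.6(b-a)$ and any $K\ge 2$, the partition $[a,\tfrac{a+b}{2}]$, $(\tfrac{a+b}{2},b]$ satisfies every constraint and contains no empty interval. The paper's one-line proof makes the same tacit assumption: it treats the whole stretch between consecutive scores as one empty interval. To close the gap, state explicitly that nonempty strata have observed scores as endpoints, as on the DP grid $\mathcal{P}_{\text{sorted}}$. Then every empty interval is exactly some $(x_{(r)},x_{(r+1)})$, and your dichotomy is exhaustive.

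\textbf{The fine-balance fallback.} Drop the claim that greedy merging gives a fine-balanced partition whenever $\delta\ge\delta_{\text{cons}}$. Your own $\delta_{\text{pair}}$ argument shows none exists when $\delta_{\text{cons}}\le\delta<\delta_{\text{pair}}$. That range is nonempty in the example following the lemma, where $\delta_{\text{cons}}=0.05<\delta_{\text{pair}}=0.06$. Hence the first claim holds only in the length-and-no-large-gap sense you actually proved. By contrast, ``such partition'' in the second claim must include fine balance for its $\delta_{\text{pair}}$ half. The statement uses ``valid'' in two senses, and your write-up should say so instead of hedging. A minor slip: the finest partition should start with $[x_{(1)},v]$, where $v$ is the second distinct score, not the degenerate $[x_{(1)},x_{(1)}]$, since $p_0<p_1$ is required.
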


\begin{proof}
First, suppose that the length constraint parameter $\delta$  
is selected such that $\delta < \max\{ \delta_{\text{pair}},\delta_{\text{cons}}\}$. By definition, either there exists an empty interval of length larger than $\delta$  using the fact that every sequence of consecutive empty intervals consists of at most one interval, or there exists at least one required counterfactual pair whose local discrepancy cannot be bridged within the allowed threshold width on the propensity score scale. 

Next, if $\delta \ge \delta_{\text{cons}}$, then it is impossible that a large gap occurs  by definition of $\delta_{\text{cons}}$. Last assume the feasibility of the problem with $delta$ value of $\delta'$.  By the canonical structural assumption, every non-empty interval is preceded by at most one empty interval whose length is at most $\delta_{\text{cons}}$, and then by uniting this possible empty interval to its neighbor to the left, we get a feasible solution with length bound of at most $\delta'+\delta_{\text{cons}}$ and the same partition of the treatment set as well as the same partition of the control set, and thus the resulting solution has the same cost.
\end{proof}

As a direct result of Lemma~\ref{lem:feasibility_bound} and the upper bound connecting the values of $\delta$ and $K$ stated in \eqref{eq:delta_k_bound}, we obtain a constructive approach for initializing these parameters
to verify that a feasible solution exists without the presence of large gaps. First, we pick a value of $\delta$ that is large enough and in particular satisfies the necessary selection criterion of Lemma~\ref{lem:feasibility_bound} to exclude the possibility of large gaps.

Let $I = [p_0, p_K]$ denote the full trimmed support of the pooled sample, where $p_0 = L_{\mathcal{S}}$ and $p_K = R_{\mathcal{S}}$ represent the lower and upper support bounds, respectively. Given $\delta$, we choose the number of active strata $K$ such that the pair $(\delta, K)$ satisfies the structural baseline relationship:
$K \ge \dfrac{|I|}{\delta}$. However, local gaps of the form $(p_{m-1}, p_m)$, defined as empty intervals whose length is smaller than $\delta$ and that begin and end at observed treated or control units, can still appear in some cases. An example of this behavior is illustrated in the following syntactic example.  Nevertheless, in dense, real-life datasets, such as all the datasets evaluated in this work, this type of gap is statistically rare. Hence, to simplify the dynamic programming formulation, we assume that no such gaps exist. 

{\bf Example.}
Consider a partition of the active interval $I=[0.2,0.37]$ with a partition into three intervals with boundary points $0.2=p_0<0.26=p_1< 0.31=p_2 < 0.37=p_3$ where there are no sample units with scores in the interval $(0.26,0.31)$. Assume that all control units in $[p_0,p_1]$ are at $p_1$, while the treated units in this interval are spread densely along this interval. Similarly, assume that all control units in $[p_2,p_3]$ are at $p_2$ and the treated units are spread densely along this sub-interval. In this example $\delta_{\text{pair}}=0.06$ while $\delta_{\text{cons}}=0.05$. Then a feasible solution satisfying the fine-balance constraint cannot partition $I_1$ or $I_3$.  For $\delta>0.11$, the given partition is a unique feasible solution and we cannot merge two intervals due to the length constraint. For $0.06\leq \delta<0.11$, the given partition is the unique feasible solution admitting an empty interval but no large empty interval.     

For the Lindner dataset, the maximum absolute distance between a control unit and its nearest treated neighbor is calculated as:
$|0.6878247 - 0.6740549| = 0.01376973$.
As illustrated in Figure~\ref{fig:ps_trimmed}, this maximum is driven by the propensity score values of the two most extreme units remaining in the trimmed distribution. Similarly, this is also the value of $\delta_{\text{pair}}$. Consequently, by Lemma~\ref{lem:feasibility_bound} selecting a threshold parameter $\delta < 0.01376973$ would fail to bridge this boundary gap, structurally generating an empty interval of the form $(p_{m-1}, b)$. So, we should pick a value of $\delta$ that is larger than $0.0138$, then pick a value of $K$ larger than or equal to $\lceil \dfrac{1}{\delta} \rceil$. In practice (See Section~\ref{sub:comp_delta}) we pick much larger values of $\delta$, so this condition is satisfied. 

Furthermore, this geometric observation allows us to rule out redundant splits within already feasible strata. Consider an active stratum $I_m = (p_{m-1}, p_m]$ that already satisfies the length constraint, i.e., $p_m - p_{m-1} \le \delta$. Introducing an intermediate boundary point $b \in (p_{m-1}, p_m)$ to divide $I_m$ into two non-empty sub-strata $(p_{m-1}, b]$ and $(b, p_m]$ yields no reduction in objective cost.

 Such a split yields zero reduction in the objective function while consuming an additional unit of the stratum allocation budget $K$. Consequently, our dynamic programming state space search completely circumvents these redundant states, significantly reducing computational complexity without loss of global optimality.

\subsubsection{Dynamic Programming Formulation and Constraints}

Given a pair of parameters $(\delta, K)$ selected in accordance with the prior framework such that all large gaps are structurally omitted, the optimization problem reduces to finding an optimal sequence of active cutting points over the ordered empirical grid of available propensity scores. Crucially, our dynamic programming formulation restricts the search space such that any interval boundary must start and end precisely on an observed unit's propensity score within the pooled, sorted grid $\mathcal{P}_{\text{sorted}} = \{p_1, p_2, \dots, p_{N_p}\}$, which comprises all unique treated and control scores, with respect to the \textit{trimmed} range of the propensity score.  

Furthermore, to maintain local stratum homogeneity, every candidate interval must satisfy the strict maximum length constraint. 

For a partition consisting of $m$ active strata where $m \in \{1, \dots, K\}$, we let $v_{m-1}$ and $v_m$ denote the grid indices in $\mathcal{P}_{\text{sorted}}$ corresponding to the left and right boundaries of the $m$-th interval. The length of any proposed sub-interval $(p_{v_{m-1}}, p_{v_m}]$ cannot exceed the length parameter threshold $\delta$.

In addition to this geometric threshold, the algorithm enforces a fundamental capacity check to satisfy the matching structure: the number of available control units falling within the boundaries of the candidate sub-interval $(p_{v_{m-1}}, p_{v_m}]$ must satisfy the fine-balance constraint established in \eqref{eq:fine_balance_capacity}. 
Any forward step violating one of these two conditions is structurally inadmissible and omitted from the state transitions.

Let $V(i, v, m)$ represent the value function yielding the minimum cumulative estimated matching cost given that the first $i$ ordered treated units have been covered, and the rightmost boundary of the $m$-th interval resides exactly at index $v$ of the pooled grid $\mathcal{P}_{\text{sorted}}$. Transitioning forward to establish the next stratum boundary at index $v_m$ from a previous one at index $v_{m-1}$, the recurrence relation is formalized from a forward-looking perspective as:
\begin{equation}
V(i_m, v_m, m) = \min_{v_{m-1} < v_m \,:\, p_{v_m} - p_{v_{m-1}} \le \delta} \left\{ V(i_{m-1}, v_{m-1}, m-1) + \text{EstCost}\left(p_{v_{m-1}}, p_{v_m}\right] \right\} 
\end{equation}
where the state transitions from a valid prior boundary index $v_{m-1}$ to a forward horizon index $v_m$. Such a transition captures the sequential block of treated units indexed from $i_{m-1} + 1$ through $i_m$, and incurs an incremental local \textit{estimated} matching cost defined over the interval $(p_{v_{m-1}}, p_{v_m}]$. Here, we impose the fine-balance constraint directly onto the state-space transitions by defining $\text{EstCost}(p_{v_{m-1}}, p_{v_m}] = \infty$ if the candidate interval fails to satisfy the unit capacity requirement established in \eqref{eq:fine_balance_capacity}.  While the state label contains the value of $i_m$ as its first coordinate, it is completely identified by the value of $v_m$ and we use it in the label for notational ease.

{\bf Base Cases and Grid Initialization.}
The base cases for the recurrence relation directly mirror the structural initialization of the multi-dimensional state list. We initialize the primary anchor state at zero cost:
$V(0, 1, 0) = 0$
which represents a baseline of zero matching cost for allocating zero units across zero intervals at the first index of the sorted propensity score grid. Conversely, all other configurations are initialized to infinity:
$V(i, v, m) = \infty \quad \forall (i, v, m) \neq (0, 1, 0)$
reflecting that it is structurally impossible to match sample observations without valid path progressions.  Using this base case and the recurrence formula, we compute the values of $V$ for every entry in the dynamic programming table (i.e., we increase the value of $v_m$ until we reach the right boundary of the trimmed common support).

{\bf The final solution obtained by the dynamic programming.}
The ultimate goal of the algorithm is to identify the globally optimal partition profile by evaluating the minimum over all feasible terminal states that span the entire trimmed support. A state is structurally recognized as terminal the moment its right interval boundary reaches the designated \texttt{right\_border} (corresponding to the terminal index $n_p$ of the grid), signaling that the entire support has been successfully partitioned. Formally, the optimization problem solves for
$V^* = \min_{1 \le m \le K} V(n_t, n_p, m)$
where $n_t$ represents the total count of matched treated units, $n_p$ corresponds to the terminal index of $\mathcal{P}_{\text{sorted}}$, and $K$ is the maximum permitted stratification capacity.

For the \texttt{Lindner} dataset, the optimal intervals after the operation of the dynamic programming step (\ref{DP_Step}), for $\delta = 0.2$, $K = 10$ is shown in Table~\ref{tab:best_intervals_step1}. It can be observed that the dynamic programming solution has only $4$ sub-intervals, and it satisfies all constraints. We present in Table \ref{tab:best_intervals_step1} the counts of treated and control units within each sub-interval, along with its lower and upper boundaries. The \texttt{estimated} cost for this solution, using the standard Mahalanobis distance metric, rounded to four decimal places, is $125.0851$.

\begin{table}[htbp]
  \centering
  \caption{Optimal Sub-interval Bounds At the End of Step 1}
  \label{tab:best_intervals_step1}
  \begin{filecontents*}{step1_optimal_partition_delta_0.20.csv}
start,end,treated,control
0.0417,0.2414,51,270
0.2414,0.4367,165,329
0.4367,0.6252,80,83
0.6252,0.6878,1,2
\end{filecontents*}
\pgfplotstabletypeset[
    col sep=comma,
    string type,
    every head row/.style={before row=\toprule, after row=\midrule},
    every last row/.style={after row=\bottomrule}
  ]{step1_optimal_partition_delta_0.20.csv}
\end{table}

\subsection{Stage 2: Heuristic search to optimize the solution}
\label{Heuristics_Step}

The Dynamic Programming (DP) algorithm identifies a solution 
with respect to the propensity score and the $L_p$ norm,  while strictly adhering to the previously described constraints. Because this approach satisfies the complete constraint set, it inherently yields a feasible solution for the original high-dimensional covariate space, as feasibility is independent of the underlying distance metric. 

However, this solution is not necessarily near-optimal when evaluated in the full covariate space, even with respect to the estimated matching cost. Incorporating a multi-dimensional distance function inherently increases the complexity of the underlying optimization problem. Therefore, we utilize the output of the dynamic programming algorithm as an initial baseline for the heuristic search procedures designed to refine the solution under the multi-dimensional distance function. Specifically, we first evaluate Stage~$1$ baseline solution using the multi-dimensional Mahalanobis distance metric (or rank-based Mahalanobis distance as an alternative), as defined in Section~\ref{multidimmet}.

 We then apply heuristic search methods to iteratively improve (in terms of the estimated cost) upon this solution with respect to the Mahalanobis 
distance (or using the rank-based Mahalanobis as an alternative), thereby lowering the total cost, while strictly preserving all 
relevant constraints to guarantee continued feasibility. For this purpose, we use three 
heuristic methods, namely, \textbf{local search}, \textbf{enhanced local search} and \textbf{simulated annealing}. The general approach of using a local search based heuristics is a standard approach in the design of heuristics for a combinatorial optimization problem but their details are problem-specific and we define those details next. In each of these heuristics, the algorithm maintains a feasible solution (to the underlying optimization problem) and defines a neighborhood (of feasible solutions) of any such feasible solution. 

The search algorithm systematically explores the local neighborhood of the current feasible solution. If an adjacent candidate solution yields an improved objective cost while preserving feasibility, the algorithm accepts the move, updates the current state, and constructs a new neighborhood around the accepted solution. This iterative process continues until no neighbor offers a strict improvement (or an alternative termination criterion is satisfied), at which point the algorithm terminates and returns the final local optimum as the heuristic output.

\subsubsection{Local Search (LS)}

While the dynamic programming framework provides a feasible solution, the resulting boundary set can be further refined to unlock the \textit{estimated} cost reductions by systematically exploring adjacent grid arrangements. To exploit these local adjustments, we implement a customized, multi-layered \textbf{Local Search} (LS) \citep{talbi2009metaheuristics} framework that operates directly on the discrete indices of our empirical grid.

Let $\mathcal{B} = \{b_1, b_2, \dots, b_M\}$ represent the vector of interior partition boundary indices mapped directly onto the sorted, pooled propensity score grid $\mathcal{P}_{\text{sorted}}$, where $n_{ps} = |\mathcal{P}_{\text{sorted}}|$. The neighborhood of the current partition, $\mathcal{N}(\mathcal{B})$, is explored by varying individual boundary indices. For any given interior boundary index $b_m$ (where $m \in \{2, \dots, M-1\}$), a candidate boundary index is generated by applying an integer step size along either direction:
\begin{equation}
b_{\text{cand},\gamma} = b_m + d \cdot \gamma
\end{equation}
where $d \in \{-1, 1\}$ dictates the sweep direction, and $\gamma \in \{1, 2, \dots, \gamma_{\max}\}$ represents a step size integer parameter. The upper bound $\gamma_{\max}$ is a user-defined parameter dictating the maximum allowable jump size. By modifying indices over the discrete vector $\mathcal{P}_{\text{sorted}}$, this coordinate-shifting operation translates to changing the cut-points to alternative unique propensity score values.

To ensure mathematical validity before executing a move, every candidate boundary configuration must maintain feasibility, that is,  satisfy three explicit structural guardrails:
\begin{enumerate}
    \item \textbf{Boundary Preservation:} The candidate index must remain strictly within the interior limits of the empirical array, satisfying that every component of $b_{\text{cand},\gamma}$ is strictly larger than $1$ and strictly smaller than $n_{ps}$.
    \item \textbf{Strict Monotonicity:} To prevent adjacent intervals from collapsing into zero-length domains or crossing over one another, the updated index must preserve strict inequality relative to its neighboring boundary points:
  $b_{m-1} < b_{\text{cand},\gamma} < b_{m+1}$.
    \item \textbf{Feasibility Hard-Constraints:} The distance between the candidate boundary and its unmodified neighbors on the scalar propensity score scale must not violate the length constraint:
    \begin{equation}
    p_{b_{\text{cand},\gamma}} - p_{b_{m-1}} \le \delta \quad \text{and} \quad p_{b_{m+1}} - p_{b_{\text{cand},\gamma}} \le \delta \ .
        \end{equation}
        Any candidate step that fails these conditions is instantly omitted from consideration.
\end{enumerate}

The search follows a \textit{First-Improvement} selection strategy. The algorithm sweeps through the grid neighborhoods to shift the index positions. If a boundary shift fails to yield an estimated cost reduction exceeding our precision tolerance $\epsilon_{\text{cost}} = 10^{-9}$, the local search rejects the proposed move and retains the original boundary position.
Crucially, the moment any valid neighborhood move decreases the global estimated cost by at least $\epsilon_{\text{cost}}$, the proposed configuration becomes  the new baseline solution and we repeat the process of defining its neighborhood, and restarts a brand-new outer search cycle beginning from the first interior index $m = 2$. This compounding, greedy reset loop continues iteratively until a complete left-to-right pass across all interior boundaries yields no modifications.

For the \texttt{Lindner} dataset, the final solution after the operation of the local-search (LS) heuristics for $\delta = 0.2$, $K = 10$ is shown in Table~\ref{tab:ls_intervals}. Note that the resulting solution still has only $4$ sub-intervals. The \texttt{estimated} cost for this solution, using the standard Mahalanobis distance metric, rounded to four decimal places is $121.6308$.

\begin{table}[htbp]
  \centering
  \caption{Sub-interval Bounds At the End of Step 2 (LS)}
  \label{tab:ls_intervals}
  \begin{filecontents*}{chapter4_schemes/step2_heuristic_local_search_strict_partition_delta_0.20.csv}
start,end,treated,control
0.0417,0.2389,47,266
0.2389,0.4312,165,329
0.4312,0.6252,84,86
0.6252,0.6878,1,3
\end{filecontents*}
\pgfplotstabletypeset[
    col sep=comma,
    string type,
    every head row/.style={before row=\toprule, after row=\midrule},
    every last row/.style={after row=\bottomrule}
  ]{chapter4_schemes/step2_heuristic_local_search_strict_partition_delta_0.20.csv}
\end{table}

\subsubsection{Enhanced Local Search (ELS)}

Exploration via \textbf{Enhanced Local Search} (ELS) builds directly upon the standard local search framework to offer a more flexible exploration strategy, i.e., we extend the neighborhood of each feasible solution. While it retains the core boundary shifting mechanism of the standard local search, ELS expands the neighborhood search space by introducing two additional topological operations: stratum splitting and stratum merging. Mechanically, ELS can be viewed as a specific instantiation of a broader meta-heuristic class designed to systematically transition across multiple distinct neighborhood structures, formally referred to as \textbf{Variable Neighborhood Descent} (VND) \citep{talbi2009metaheuristics}. Because the boundary shifting mechanism mirrors the standard local search logic described previously, we focus our exposition on the two new structural operations: stratum splitting and stratum merging.


\textbf{Stratum Splitting}:
When the shifting operator does not produce an improvement, the algorithm attempts to increase the granularity of the partition by adding a new boundary, expanding the number of strata from $k$ to $k+1$. This structural change is bounded by the global maximum partition size limit $K$, 
namely only if $k+1\leq K$.  So we still require that every intermediate solution satisfies this constraint (as well as all other constraints).

For each individual stratum $m \in \{1, \dots, k\}$ defined by its bounding array indices $b_m$ and $b_{m+1}$, a split is structurally feasible if and only if the number of unique elements between these boundaries allows for further partition adjustments. This requires the index distance to be strictly greater than two ($b_{m+1} - b_m > 2$). If this condition is met, a new candidate boundary index is calculated as the discrete midpoint:
\begin{equation*}
b_{\text{split}} = \left\lfloor \frac{b_m + b_{m+1}}{2} \right\rfloor .
\end{equation*}
Recalling that a partition of $k$ strata is defined by $M = k + 1$ boundaries, the new candidate partition vector $\mathcal{B}_{\text{cand}}$ is constructed by inserting $b_{\text{split}}$ into the sequence, expanding the total number of boundary indices to $M+1$:
\begin{equation}
\mathcal{B}_{\text{cand}} = \{b_1, \dots, b_m, b_{\text{split}}, b_{m+1}, \dots, b_{k+1}\}
\end{equation}
Here, the length constraint 
is satisfied by the candidate solution. This holds since we added a middle point inside an interval the maximum length of any sub-interval can only reduce,  
and the constraint was satisfied by the original solution. As a result, we just need to check whether the candidate solution satisfies the fine-balance constraint.

\textbf{Stratum Merging}:
Conversely, if neither the shifting nor the splitting operators reduce the \textit{estimated} cost objective, the algorithm attempts to decrease the granularity of the partition by deleting an interior boundary. This operator collapses two adjacent strata into a single, combined interval, reducing the number of strata from $k$ to $k-1$. 

To ensure the partition remains structurally valid, a merge is only attempted if the current number of strata is strictly greater than one ($k > 1$, or equivalently, the length of the boundary vector satisfies $M > 2$). The algorithm systematically evaluates each interior boundary index $b_m$ where $m \in \{2, \dots, k\}$. The candidate partition vector $\mathcal{B}_{\text{cand}}$ of reduced length $M-1$ is constructed by omitting that specific interior boundary from the sequence:
\begin{equation}
\mathcal{B}_{\text{cand}} = \{b_1, \dots, b_{m-1}, b_{m+1}, \dots, b_{k+1}\}
\end{equation}
By removing the cut-point $b_m$, the units between indices $b_{m-1}$ and $b_{m+1}$, which were previously divided into two separate strata, are seamlessly fused into a single larger stratum.  Here, we are guaranteed that the fine-balance constraint will be satisfied (as well as the upper bound on $K$ on the number of strata) since the original solution satisfied it, and we only need to check the length constraint of the merged stratum.

This completes the characterization of the neighborhood of the current feasible solution. Similarly to the standard local search, the ELS algorithm adopts a \textit{First-Improvement} strategy among its operators. The search routine sequentially evaluates transitions across its neighborhood layers: boundary shifting, stratum splitting, and stratum merging. 
As with the local search heuristic, (estimated) cost improvements at every stage are evaluated against our tolerance threshold, $\epsilon_{\text{cost}}$.

For the \texttt{Lindner} dataset, the final solution after the operation of the enhanced local-search (ELS) heuristic for $\delta = 0.2$, $K = 10$ is shown in Table~\ref{tab:els_intervals}. The \texttt{estimated} cost for this solution, using the standard Mahalanobis distance metric, rounded to four decimal places is $121.6308$. We can observe that for this case, LS and ELS found similar, though not identical, solutions while achieving the exact same \texttt{estimated} cost.  In particular, the partition of the treated units into the $4$ sub-intervals is identical for the two solutions while the partition of the control units differ between them.

\begin{table}[htbp]
  \centering
  \caption{Sub-interval Bounds At the End of Step 2 (ELS)}
  \label{tab:els_intervals}
  \begin{filecontents*}{chapter4_schemes/step2_heuristic_enhanced_ls_strict_partition_delta_0.20.csv}
start,end,treated,control
0.0417,0.2375,47,264
0.2375,0.4307,165,330
0.4307,0.6252,84,87
0.6252,0.6878,1,3
\end{filecontents*}
\pgfplotstabletypeset[
    col sep=comma,
    string type,
    every head row/.style={before row=\toprule, after row=\midrule},
    every last row/.style={after row=\bottomrule}
  ]{chapter4_schemes/step2_heuristic_enhanced_ls_strict_partition_delta_0.20.csv}
\end{table}

\subsubsection{Simulated Annealing (SA)}

Unlike the deterministic hill-climbing approaches of LS and ELS, \textbf{Simulated Annealing} (SA) \citep{metropolis1953equation, talbi2009metaheuristics} is a single-solution stochastic metaheuristic designed to escape sub-optimal local basins. While SA explores the same fundamental structural operation as ELS, namely boundary shifting, stratum splitting, and stratum merging, its execution architecture is fundamentally different. Rather than evaluating these operators within a strict, sequential layered hierarchy where subsequent operators act only as fallbacks, SA selects a single mutation type at each iteration by sampling from a categorical distribution governed by a set of operational probabilities, $P = \{p_{\text{shift}}, p_{\text{split}}, p_{\text{merge}}\}$, subject to $\sum p_i = 1$ (parameterized for this implementation as $0.5$, $0.3$, and $0.2$, respectively).

Furthermore, while ELS exhaustively searches an entire neighborhood to find the first valid improving move, SA utilizes a randomized neighborhood sampling scheme. Once an operator is selected, the algorithm stochastically draws a single candidate configuration vector, $\mathcal{B}_{\text{cand}}$, from that operator's specific transition space.

Like the baseline Local Search, the current baseline partition layout is defined as an ordered sequence of boundary indices
$\mathcal{B} = \{b_1, b_2, \dots, b_M\}$
where $M = k+1$. At each iteration, the algorithm tracks a global thermodynamic control parameter known as the system temperature, $T$. The temperature begins at a high initial value ($T_{\text{init}}$) and decays geometrically over time according to a cooling schedule:
\begin{equation}
T_{t+1} = \alpha \cdot T_t
\end{equation}
where $\alpha \in (0, 1)$ is the cooling rate.  That is, $T$ is decreased geometrically by a multiplicative factor $\alpha$ in each iteration.

\paragraph{The Acceptance Criterion}
Once a candidate partition vector $\mathcal{B}_{\text{cand}}$ is generated, its \textit{estimated} cost $f(\mathcal{B}_{\text{cand}})$ is evaluated against the current state cost $f(\mathcal{B})$. Moves that strictly improve the cost function are always accepted. Conversely, non-improving moves that degrade the estimated cost are accepted with some probability. Let $\Delta E$ be the difference in the estimated cost between the generated neighbor and the current solution:
$\Delta E = f(\mathcal{B}_{\text{cand}}) - f(\mathcal{B})$.
The probability $P(\Delta E, T)$ of accepting a non-improving state is a function of the temperature and the last difference and it is defined as:
$$P(\Delta E, T) = \exp\left(-\frac{\Delta E}{T}\right) = \mathrm{e}^{-\frac{f(\mathcal{B}_{\text{cand}}) - f(\mathcal{B})}{T}}. $$
This formulation ensures that when the system temperature $T$ is high, the algorithm maintains a high probability of accepting worse solutions to escape local minima. As the cooling schedule progresses and $T \to 0$, this probability decreases, causing the algorithm to freeze into a strictly greedy local exploitation phase. This acceptance criterion is known as the Metropolis criterion \citep{metropolis1953equation}. 

\paragraph{Stochastic Mutation Operators}
The rules sets for generating the candidate partition sequence $\mathcal{B}_{\text{cand}}$ depend on the mutation operator chosen:

\textbf{Random Boundary Shifting}: 
    This operator is structurally eligible for evaluation if and only if there is at least one interior boundary that can be shifted ($M > 2$, or equivalently, the number of strata $k > 1$).   
    If this condition is met, the algorithm randomly and uniformly selects a single interior boundary index position $m \in \{2, \dots, k\}$. It then samples a discrete displacement step temp-buffer uniformly from a bounded exploration window governed by a scaled function. The definition of $\textbf{temp-buffer}$ is the division of the current \textbf{system temperature}, $T$, by the $\textbf{temp-scale-factor}$ parameter: 
$\text{temp-buffer} = \left\lceil \frac{T}{\text{temp-scale-factor}} \right\rceil$.
A discrete displacement step, $s$, is then sampled uniformly from a bounded exploration window defined by the current mutation strength:
$s \in \{-\text{strength}, \dots, \text{strength}\}$, where  $\text{strength} = \max(1, \text{temp-buffer})$.
In our implementation, the initial system temperature value, $T_{\text{init}}$, is set to $50$. The scaling denominator variable, $\text{temp-scale-factor}$, which acts as a constant scalar parameter set to $10$, maps this initial high temperature down to a reasonable spatial search radius, while the ceiling operator ensures the search radius collapses to a minimum step size of $\pm 1$ index unit as $T \to 0$.  As a direct result of these initializations, the starting value of $\text{temp-buffer}$ is exactly $5$.

The proposed new position is then calculated directly as:
$b_{\text{prop}} = b_m + s$.
To preserve strict index sequence ordering and prevent boundaries from stepping over one another, the proposed index is clipped using its immediate neighbors as hard  constraints, namely, $b_{\text{new}} = \max(b_{m-1} + 1, \min(b_{m+1} - 1, b_{\text{prop}}))$.
    The candidate partition vector $\mathcal{B}_{\text{cand}}$ maintains a length of $M$ and is formed by substituting the original boundary $b_m$ with $b_{\text{new}}$, that is, $\mathcal{B}_{\text{cand}} = \{b_1, \dots, b_{m-1}, b_{\text{new}}, b_{m+1}, \dots, b_{k+1}\}$.

\textbf{Random Stratum Splitting}: 
    This structural expansion is only attempted if the current partition configuration has not yet reached the user-defined capacity threshold (i.e., $k < K_{\max}$).   
    If a split is permissible, the algorithm randomly and uniformly selects a target stratum index $m \in \{1, \dots, k\}$. To ensure that at least one unallocated unit index resides within the chosen interval to accommodate a new boundary, the bounding distance must satisfy $b_{m+1} - b_m > 1$. If this holds, a new cut-point position $b_{\text{new}}$ is sampled uniformly at random from the available discrete integer range within that stratum, that is, an integer in the interval in $\{b_m + 1, \dots, b_{m+1} - 1\}$.
    The candidate partition vector $\mathcal{B}_{\text{cand}}$ of expanded length $M+1$ is constructed by inserting $b_{\text{new}}$ into its ordered sequence position, that is, $ \mathcal{B}_{\text{cand}} = \{b_1, \dots, b_m, b_{\text{new}}, b_{m+1}, \dots, b_{k+1}\}$.

\textbf{Random Stratum Merging}: 
    This structural reduction collapses the partition landscape and is only evaluated if the current baseline vector contains more than two elements ($M > 2$, or equivalently, $k > 1$).     
    If a merge is valid, the algorithm randomly and uniformly selects a single interior boundary index position $m \in \{2, \dots, k\}$. The candidate partition vector $\mathcal{B}_{\text{cand}}$ of compressed length $M-1$ is constructed by dropping that specific boundary cut entirely:
$    \mathcal{B}_{\text{cand}} = \{b_1, \dots, b_{m-1}, b_{m+1}, \dots, b_{k+1}\}$.
    This operation fuses the units bounded between $b_{m-1}$ and $b_{m+1}$ into a single coarser stratum.
As in previous the deterministic heuristics (LS and ELS), estimated cost improvements at every stage are evaluated against a strict tolerance threshold, $\epsilon_{\text{cost}}$.

For the \texttt{Lindner} dataset, the final solution  after the operation of simulated annealing (SA) heuristic for a specific run (i.e., a realization of the random number used by the method) with $\delta = 0.2$, $K = 10$ is shown in Table~\ref{tab:sa_intervals}. In this solution SA resulted in $4$ sub-intervals. The \texttt{estimated} cost for this solution, using the standard Mahalanobis distance metric, rounded to four decimal places is $122.4996$. Hence, for this dataset, parameter values of $\delta$ and $K$, and this specific run, LS and ELS prove superior to SA, with respect to the estimated cost criterion. However, because SA is a randomized algorithm, across multiple runs it also identified a solution with the exact same \texttt{estimated} cost as LS and ELS.

\begin{table}[htbp]
  \centering
  \caption{Sub-interval Bounds At the End of Step 2 (SA)}
  \label{tab:sa_intervals}
  \begin{filecontents*}{chapter4_schemes/step2_heuristic_simulated_annealing_strict_partition_delta_0.20.csv}
start,end,treated,control
0.0417,0.2375,47,264
0.2375,0.4367,168,334
0.4367,0.6252,81,83
0.6252,0.6878,1,3
\end{filecontents*}
\pgfplotstabletypeset[
    col sep=comma,
    string type,
    every head row/.style={before row=\toprule, after row=\midrule},
    every last row/.style={after row=\bottomrule}
  ]{chapter4_schemes/step2_heuristic_simulated_annealing_strict_partition_delta_0.20.csv}
\end{table}

\subsubsection{Final Matching and Cost}
After the three search heuristics (LS, ELS, and SA) terminate, the structural partition layout with the lowest estimated cost is selected. If none of the heuristics improve upon the initial configuration found by the dynamic programming step, the baseline layout (i.e., the one obtained by the dynamic programming stage) is retained. 
Once these final stratum boundaries are established, the actual matching is performed using the \textbf{MatchIt} package in R \citep{ho2007matching}. Specifically, the treated and control units are strictly matched within their shared propensity score bin by invoking exact stratification. Within each isolated bin, an optimal pair matching configuration $1:1$ is computed by network flow optimization, using the strict or robust (multidimensional) Mahalanobis distance metric. Because the exact stratification restricts the matching space to within-bin candidate units, the overall optimization problem decomposes by partition. The global objective function minimized by the \textbf{MatchIt} package is the sum of the absolute pairwise distances aggregated across all $B$ distinct strata. The resulting total distance calculation serves as the definitive empirical cost of the OSIP solution:
\begin{equation}
\text{Cost}_{\text{OSIP}} = \sum_{b=1}^{B} \sum_{i \in \mathcal{M}_b} D(x_{1i}, x_{0i})
\end{equation}
where $B$ is the total number of optimized bins, $\mathcal{M}_b$ is the set of matched pairs within bin $b$, and $D(x_{1i}, x_{0i})$ represents the multidimensional Mahalanobis distance between the treated unit and its selected control matched unit.

\section{The comparison metrics}
\label{sec:metrics}
In this section, we present the primary performance metrics used for comparing OSIP against benchmark methodologies (and also comparing OSIP to itself, with different parameters). These metrics are generally categorized into three distinct classes: statistical significance measures, distance-based estimation measures (applicable strictly to outputs of the $1:1$ matching method), and covariate balancing comparisons.  

First, we present the metrics used for the comparison. Later, we show the results and analysis for the \texttt{Lindner} dataset.
For ease of presentation, the result of any metrics described here is rounded to four decimal digits.

Since some of the metrics are relevant only for $1:1$ matching methods, we partition our evaluated methods into $1:1$ and non-$1:1$ matching categories. The $1:1$ matching methods analyzed in this work are \textit{OSIP}, \textit{Quintile Subclassification}, \textit{Refined Quintile Subclassification}, \textit{Genetic Matching}, \textit{Optimal Matching}, and \textit{Cardinality Matching}. The non-$1:1$ methods are \textit{CEM} and \textit{Optimal Full Matching}. We note that while certain methods (such as \textit{Genetic Matching} and \textit{Cardinality Matching}) admit generalized variants that support non-$1:1$ matching ratios, we restrict our focus and analysis strictly to their $1:1$ matching variants in this paper.

\subsection{Statistical significance metrics}
To mathematically give our experimental comparisons a causal interpretation, we invoke the foundational unconfoundedness assumption (also frequently formalized as the \textit{conditional independence}, \textit{strong ignorability}, or \textit{selection-on-observables} assumption) \citep{rosenbaum1983central, imbens2015causal}. Formally, the assumption states that conditional on the observed baseline covariates $\mathbf{X}_i$, treatment assignment is statistically independent of the potential outcomes, that is, $(Y_i(1), Y_i(0)) \perp \!\!\! \perp Z_i \mid \mathbf{X}_i$.
Under this condition, we assume there are no unobserved confounding variables simultaneously influencing both the treatment assignment mechanism and the final outcome.  While we make this standard assumption, we try to bound the impact of such potential unobserved confounding variables. Within the OSIP framework, our multi-stage heuristic partitioning and matching pipeline operates directly on the observed covariate space $\mathbf{X}_i$ to ensure that this conditional independence holds empirically via robust structural balancing. Once balance is established across the joint distribution of $\mathbf{X}_i$, the unconfoundedness assumption allows us to unbiasedly isolate and estimate our target causal metrics defined as follows. 

    \textbf{1. Average Treatment Effect} (ATE). It measures the mean causal effect of the treatment across the entire population under evaluation \citep{imbens2015causal, rosenbaum1983central}. Formally, the true population parameter is defined as the difference in population expectations of the potential outcomes, 
    $\tau_{\text{ATE}} = E(Y(1)) - E(Y(0))$.
    Following the agnostic regression adjustment framework established by \citet{lin2013agnostic}, the operational sample estimation of this quantity is computed via Lin's covariate-adjusted linear regression estimator ($\hat{\tau}_{\text{adj}}$). Utilizing the post-match sample design weights ($w_i$) generated by the OSIP pipeline to anchor global balance, the data are fitted using a Weighted Least Squares (WLS) regression model:
    \begin{equation}
    \label{eq:ate_formula}
    Y_i = \alpha + \tau_{\text{ATE}} Z_i + \mathbf{\beta}^T(\mathbf{X}_i) + \epsilon_i
    \end{equation}
    where $Y_i$ represents the continuous outcome, $Z_i \in \{0,1\}$ is the binary treatment indicator, $\mathbf{X}_i$ is the vector of baseline covariates included for residual bias adjustment, and $\epsilon_i$ represents the random residual error term. By omitting treatment-covariate interaction terms, the model algebraically enforces uniform covariate slopes ($\mathbf{\beta}_1 = \mathbf{\beta}_0 = \mathbf{\beta}$) across both treatment assignments. As mathematically demonstrated by \citet{lin2013agnostic}, the extracted sample coefficient $\hat{\tau}_{\text{ATE}}$ consistently recovers the global population parameter $\tau_{\text{ATE}}$ under this agnostic framework, ensuring robustness to parametric model misspecification while absorbing residual variance to maximize statistical precision. 
    In our framework, the model in \eqref{eq:ate_formula} is implemented using the \texttt{lm()} function from the R \texttt{stats} package.

    \textbf{2. Standard Error of the Estimator} ($\text{SE}(\hat{\tau}_{\text{ATE}})$). This metric quantifies the sampling uncertainty and statistical precision of the estimated treatment effect. Extracted directly from the diagonal of the model's asymptotic variance-covariance matrix, it reflects the estimated standard deviation of $\hat{\tau}_{\text{ATE}}$'s sampling distribution conditional on the matched sample structure.

   \textbf{3. t-statistic}. The calculated test statistic used to evaluate the significance of the treatment coefficient against critical values:
    \begin{equation}
    t_{\text{stat}} = \frac{\hat{\tau}_{\text{ATE}}}{\text{SE}(\hat{\tau}_{\text{ATE}})}
    \end{equation}
    
    \textbf{4. Confidence Interval Lower Bound and Upper Bound} ($\text{ci-low}$ and $\text{ci-high}$). In this work we consider a $95\%$ confidence interval.   Its lower bound demarcating the lower limit of the causal effect precision boundary is computed as
$\text{ci-low} = \hat{\tau}_{\text{ATE}} - 1.96 \times \text{SE}(\hat{\tau}_{\text{ATE}})$, while its upper bound is similarly computed as $ \text{ci-high} = \hat{\tau}_{\text{ATE}} + 1.96 \times \text{SE}(\hat{\tau}_{\text{ATE}})$.
    Here, the critical multiplier $1.96$ and the symmetric interval limits assume asymptotic normality of the estimator's sampling distribution to isolate the central $95\%$ probability mass.

\textbf{5. p-value}. Evaluates the probability of observing the calculated treatment effect (or a more extreme value) under the null hypothesis of zero treatment effect ($H_0: \tau_{\text{ATE}} = 0$). It is calculated using the survival function of the $t$-distribution:
    \begin{equation}
    p\text{-value} = 2 \times \mathbb{P}\left( T_{df} \geq |t_{\text{stat}}| \right)
    \end{equation}
    where $df$ represents the residual degrees of freedom from the regression model. Namely, we compute the probability that a random variable with a $t$-distribution will have a value larger than the absolute value of $t_{\text{stat}}$. Twice this computed value is the $p$-value. Furthermore, the underlying sensitivity profile tabulates upper-bound $p$-values utilizing strict numerical inequalities; values dropping below $0.0001$ are explicitly expressed as $<0.0001$ rather than truncated to zero, ensuring transparent presentation of the asymptotic bounds.

\textbf{6. Rosenbaum Sensitivity Threshold} ($\Gamma_{\text{crit}}$). This metric quantifies the robustness of the calculated causal inferences against potential hidden selection bias from unobserved confounding variables \citep{rosenbaum2007sensitivity}. Operating under the framework of robust Huber-Maritz M-testing, the sensitivity parameter $\Gamma \ge 1$ bounds the maximum factor by which an unobserved covariate could alter the odds of treatment assignment between two matched units with identical observed baseline vectors $\mathbf{X}_i$ \citep{rosenbaum2007sensitivity, rosenbaum2014weighted}. Operational analysis evaluates the sharp null hypothesis of zero treatment effect across a spectrum of $\Gamma$ allocations using the \texttt{senmv} implementation in the R \texttt{sensitivitymv} package \citep{rosenbaum2015two}, setting the robust M-statistic configuration to Huber's $\psi$-function (\texttt{method = "h"}) to achieve resistance to outlier-induced variance. The critical threshold $\Gamma_{\text{crit}}$ is defined as the exact interpolated value at which the upper bound of the one-sided $p$-value crosses the standard significance limit:
$\Gamma_{\text{crit}} = \sup \left\{ \Gamma : p_{\text{upper}}(\Gamma) \le 0.05 \right\}$.
A higher value of $\Gamma_{\text{crit}}$ indicates a highly robust causal conclusion, requiring a larger magnitude of unmeasured confounding to alter the substantive conclusions of the statistical framework, whereas $\Gamma_{\text{crit}} = 1.0$ indicates an inference highly vulnerable to immediate unobserved selection bias. Empirical reporting of this sensitivity metric adheres to strict formatting and precision protocols. The critical threshold $\Gamma_{\text{crit}}$ is presented rounded to four decimal digits to preserve calibration accuracy. For methods utilizing non-$1:1$ matching designs (e.g., CEM or Full Matching) where structural pair-difference frameworks are mathematically undefined, the value is reported as not applicable (\texttt{NA}). In cases where a $1:1$ matching configuration yields no statistical significance even under the base condition of no hidden bias, $\Gamma_{\text{crit}}$ is bounded at a baseline value of $1.0000$.

\subsection{Distances}
\label{sub:dist}
Next, we consider distance measurements for the $1:1$ matching methods.  Let $r$ be the total number of matched pairs, and $d_{M}(i, j) = \sqrt{(\mathbf{x}_{1i} - \mathbf{x}_{0j})^T \mathbf{\Sigma}^{-1} (\mathbf{x}_{1i} - \mathbf{x}_{0j})}$ denote the Mahalanobis distance between the treated unit $i$ and its matched control unit $j$. Because these distance-based metrics assume a $1:1$ matching configuration, they are assigned a value of \texttt{NA} to any non-$1:1$ matching method. 

The \textbf{Mean Mahalanobis Distance} ($\text{Mean-Mahal}$) measures the average Mahalanobis distance across all matched treated-control pairs across the $B$ strata.  It is computed as $\text{Mean-Mahal} = \frac{1}{r} \sum_{p=1}^{r} d_{M}(p)$
    where $d_{M}(p)$ represents the distance corresponding to the $p$-th matched pair.    

The \textbf{Median Mahalanobis Distance} ($\text{Median-Mahal}$) represents the $50\text{th}$ percentile of the pair-specific Mahalanobis distance, and it serves as a robust alternative to the mean that is unaffected by extreme covariate outliers.

The \textbf{Maximum Mahalanobis Distance} ($\text{Max-Mahal}$) identifies the single worst multivariate match in the entire matched dataset, allowing investigators to audit the upper bound of pair mismatching.  It is computed as $\text{Max-Mahal} = \max_{p \in \{1, \dots, r\}} d_{M}(p)$.  

The \textbf{Standard Deviation of Mahalanobis Distances} ($\text{SD-Mahal}$) quantifies the spread and structural consistency of the Mahalanobis distance matching pairs, indicating whether the matching quality is uniform or highly variable. It is evaluated using
$\text{SD-Mahal} = \sqrt{\frac{1}{r-1} \sum_{p=1}^{r} \left( d_{M}(p) - \text{Mean-Mahal} \right)^2}$.

\subsection{Covariate balancing}
The next collection of metrics attempts to identify the impact of different covariates on the distances of the matched pairs. Let $\text{SMD}_{j}$ denote the standardized mean difference for covariate $j$, defined as:
    \begin{equation}
    \label{eq:SMD}
    \text{SMD}_{j} = \frac{\bar{X}_{1j} - \bar{X}_{0j}}{\sqrt{\frac{s_{1j}^2 + s_{0j}^2}{2}}}
    \end{equation}
    where $\bar{X}_{1j}, \bar{X}_{0j}$ and $s_{1j}^2, s_{0j}^2$ represent the sample means and variances of covariate $j$ for the matched treated and control pools, respectively. 
    This definition of $\text{SMD}_j$ assumes that the solution is an output of a $1:1$ matching method.  However, if we consider the sample means and variances of the subsets of retained treated units and retained control units (respectively) we get the definition for other methods like CEM or Full Matching.  These values are evaluated for every covariate, and based on these values the following metrics are defined.
    
    The \textbf{Maximum Standardized Mean Difference} ($\text{Max-SMD}$) isolates the single highest absolute standardized bias among all evaluated baseline $m$ covariates, namely it is $\text{Max-SMD} = \max_{j \in \{1, \dots, m\}} \left| \text{SMD}_j \right|$.

    The \textbf{Mean Standardized Mean Difference} ($\text{Mean-SMD}$) computes the arithmetic average of the absolute standardized mean differences across all $m$ baseline covariates, summarizing the overall success of the joint balancing effort, $\text{Mean-SMD} = \frac{1}{m} \sum_{j=1}^{m} \left| \text{SMD}_j \right|$.

The \textbf{Maximum Kolmogorov-Smirnov Statistic} ($\text{Max-KS}$) identifies the largest absolute vertical distance between the empirical cumulative distribution functions (eCDFs) of the treated and control groups across all baseline variables, signaling full-distribution mismatching. Let $F_{1j}(x)$ and $F_{0j}(x)$ represent the eCDFs of covariate $j$ for the retained treated units and retained control units, respectively. The metric is computed as:
    \begin{equation*}
    \text{Max-KS} = \max_{j \in \{1, \dots, m\}} \sup_{x} \left| F_{1j}(x) - F_{0j}(x) \right| .
\end{equation*}

The last balancing metrics we consider \textbf{N-Imbalanced-01 / N-Imbalanced-02} are the number of imbalanced covariates.  These metrics count how many baseline covariates fail to meet the standard strict academic threshold ($\left|\text{SMD}\right| \leq 0.1$) or the more relaxed, conservative balance boundary ($\left|\text{SMD}\right| \leq 0.2$), respectively. Formally calculated using indicator functions:
    \begin{equation*}
    \texttt{N-Imbalanced-01} = \sum_{j=1}^{m} \mathbb{I}\left( \left| \text{SMD}_j \right| > 0.1 \right)
   \quad  \text{and} \quad  \texttt{N-Imbalanced-02} = \sum_{j=1}^{m} \mathbb{I}\left( \left| \text{SMD}_j \right| > 0.2 \right)
    \end{equation*}
    where $\mathbb{I}(\cdot)$ evaluates to $1$ if the condition is satisfied and $0$ otherwise. In the tables below, \texttt{N-Imbalanced-01} and \texttt{N-Imbalanced-02} are combined into a single column, \texttt{Nib}, formatted as \texttt{N-Imbalanced-01; N-Imbalanced-02}.

As stated previously, whenever the framework evaluates a Mahalanobis distance metric ($d_{M}$), the calculation seamlessly accommodates either the standard formulation or the robust ranked variation depending on the underlying parameter matrices supplied to the pipeline.

\section{OSIP Analysis on the Lindner Dataset}
\label{sec:osip_lindner}

In this section, we present an internal comparative analysis of the \texttt{OSIP} method, with respect to the \texttt{Lindner} dataset, using the performance metrics detailed in Section~\ref{sec:metrics}. Specifically, we evaluate the sensitivity and performance of \texttt{OSIP} under various configurations across its two operational phases: the dynamic programming stage (Section~\ref{DP_Step}) and the heuristic search optimization stage (Section~\ref{Heuristics_Step}). In Section~\ref{sub:comp_delta}, we evaluate the performance of our approach across three distinct values of $\delta$ to assess its impact on the outcomes of Stage~1 and Stage~2. For Stage~2, this evaluation incorporates two distance metrics: strict Mahalanobis distance and robust Mahalanobis distance.

To analyze the mechanics of our heuristic search, we evaluate the intermediate solutions generated at each step (transition). By comparing this sequence of step-by-step solutions under a fixed $\delta$ across both strict and robust modes, we gain insight into the marginal contribution of each local heuristic change. Section~\ref{lindner-heuris-strict} presents this step-by-step evaluation for the strict mode.

Finally, we analyze the sensitivity of the Stage~1 dynamic programming algorithm (Step~\ref{DP_Step}) to the power parameter $p$ of the $L_p$ Minkowski distance metric. While the default setup uses the $L_1$ norm ($p = 1.0$), we evaluate performance across $p \in \{1.0, 1.2, \dots, 2.4\}$ in steps of $0.2$. Our results show that the resulting partition is virtually invariant to the choice of $p$. For instance, under strict mode with $\delta = 0.2$, all values of $p$ yield identical sub-intervals:$$[0.0417, 0.2389], (0.2389, 0.4312], (0.4312, 0.6252], (0.6252, 0.6878]$$ with the local search ($LS$) heuristic consistently producing the optimal outcome.

Similarly, in robust mode for this value of $\delta$, the local search ($LS$) heuristic almost universally produces the optimal outcome, yielding the intervals:
 $$[0.0417,0.2315], (0.2315,0.4312],(0.4312,0.6252],(0.6252,0.6878] .$$  Here the case of $p=2.2$ results in a unique exception where the solution is slightly modified and the output of OSIP is determined by the outcome of SA which is $$[0.0417,0.1857],(0.1857,0.3850],(0.3850,0.4312],(0.4312,0.6252],(0.6252,0.6878] .$$  Our conclusion is that the heuristic search in stage~2 serves as a safeguard for cases in which the stage~1 solution is not optimal. These heuristics allow convergence into a relatively good matching regardless of the exact starting point used (i.e., for different outcomes of the first step).

 Based on this empirical invariance across different inputs and choices of $\delta$, we fix $p = 1.0$ as the standard metric for the dynamic programming stage. This choice minimizes numerical computation error while exploiting the fact that varying $p$ has no material impact on the statistical output of \texttt{OSIP}.
 
{\bf Notation.} Our proposed framework variants are denoted as \texttt{OS} (\texttt{OSIP} strict mode) and \texttt{OR} (\texttt{OSIP} robust mode). Where relevant, the hyperparameter setting is directly attached without separation (e.g., \texttt{OS15} or \texttt{OR20} to indicate strict mode with $\delta = 0.15$ and robust mode with $\delta = 0.20$, respectively).

\subsection{A comparison using various \texorpdfstring{$\delta$}{Delta} values}
\label{sub:comp_delta}
In this section, we evaluate the performance of \texttt{OSIP} across three distinct values of $\delta$ ($0.15$, $0.20$, and $0.25$) on the \texttt{Lindner} dataset to assess their impact on Stage~1 and Stage~2 outcomes. For Stage~2, this evaluation incorporates both the standard Mahalanobis distance and its robust variant. For step $1$, we use the $L_1$ norm, that is $p=1$ as explained above. Since \texttt{OSIP} performs $1:1$ matching, we match $297$ treated units to $297$ control units (selected from $684$ available control units) under the constraint that matching occurs strictly within each assigned bin, as discussed earlier. 

This condition is satisfied across all solution designs generated by \texttt{OSIP} on the \texttt{Lindner} dataset, as well as across all other datasets evaluated in this study. Notably, researchers may select candidate solutions based on criteria beyond global cost minimization. For example, one may prioritize a \textbf{balanced} solution that minimizes the total estimated cost subject to exact balance constraints, namely $\text{N-Imbalanced-01} = \text{N-Imbalanced-02} = 0$ (as defined in Section~\ref{sub:dist}). 

For the \texttt{Lindner} dataset, every candidate solution generated by \texttt{OSIP} inherently satisfies this balance constraint, making further filtering unnecessary. However, as demonstrated across other benchmark datasets in the Appendix, this behavior does not always hold universally. Alternatively, users could select a balanced solution that minimizes the mean Mahalanobis distance rather than the estimated total cost. Ultimately, because trade-offs between covariate balance and distance metrics depend on domain context, we recommend that practitioners evaluate both the statistical significance and distance metrics of the candidate solutions generated by \texttt{OSIP} to select the optimal design for their specific experiment.

Table~\ref{tab:delta_all_est_conv} presents the estimated costs across all values of $\delta$ ($0.15$, $0.20$, and $0.25$) and both \texttt{OSIP} modes for each heuristic run. When a heuristic improves upon the initial solution from step $1$, the table reports the updated estimated cost along with the number of improving steps required. Otherwise, only the initial solution's cost is reported.  For $\delta=0.15$, neither the strict nor the robust mode yields any modifications across all three heuristics.  When $\delta$ increases this behavior does not repeat across all heuristics. In particular, when $\delta$ is $0.2$ or $0.25$ the search trajectories across the three heuristics diverge. Furthermore, for this dataset the numbers of improving iterations carried out by these heuristics are extremely small, with a maximum number of $9$ transitions carried out by LS and SA for $\delta=0.2$ in the robust mode. We will further explore the behavior of these heuristics on this dataset further in Section \ref{lindner-heuris-strict}. NC means no change with respect the step $1$ (DP) solution (i.e., no better solution was found by the heuristic).  
\begin{table}[htbp]
  \centering
  \caption{OSIP: Estimated cost and conversion rates for all $\delta$ values}
  \label{tab:delta_all_est_conv}
  \begin{filecontents*}{chapter6_schemes/cost_steps.csv}
method,DP,LS ,ELS ,SA
OS15,125.4073,NC,NC,NC
OR15,109.5982,NC,NC,NC
OS20,125.0851,121.6308; 7,121.6308; 4,122.4996; 1
OR20,113.9524,106.0668; 9,106.0668; 7,107.2759; 9
OS25,121.1525,120.9987; 2,120.9987; 2,NC
OR25,105.3259,NC,NC,103.4806; 1
\end{filecontents*}
\pgfplotstabletypeset[
    col sep=comma,
    string type,
    every head row/.style={before row=\toprule, after row=\midrule},
    every last row/.style={after row=\bottomrule}
  ]{chapter6_schemes/cost_steps.csv}
\end{table}

We now perform the actual bin-wise matching and compute the overall evaluation metrics (see Section~\ref{sec:metrics} for their definition). For readability, these metrics are structured across three separate tables: The significance table (Table \ref{tab:delta_all_sig_lindner}) where we report on the measures described as statistical significance metrics, and the distances table (Table \ref{tab:delta_all_dist_lindner}) where we report on the metrics described as distance metrics.  Furthermore, we take a closer look into the balance information via the covariate level standardized mean differences, that is, the $SMD_j$ values as a function of the covariate $j$.  We refer to Table \ref{tab:love_plot_tbl} for this information.

\begin{table}[htbp]
  \centering
  \caption{OSIP: Significance table for all values of $\delta$}
  \label{tab:delta_all_sig_lindner}
  \begin{filecontents*}{chapter6_schemes/delta_all_sig.csv}
method,ate,se(ate),p-value,gamma-shift,ci-low,ci-high,t-stat
OS15,-0.1766,0.0391,$<0.0001$,1.3753,-0.2532,-0.0999,-4.5166
OR15,-0.1697,0.0388,$<0.0001$,1.4764,-0.2457,-0.0937,-4.3737
OS20,-0.1644,0.039,$<0.0001$,1.2518,-0.2408,-0.0881,-4.2154
OR20,-0.1665,0.0389,$<0.0001$,1.3199,-0.2427,-0.0903,-4.2802
OS25,-0.1746,0.0389,$<0.0001$,1.4344,-0.2508,-0.0984,-4.4884
OR25,-0.1786,0.0391,$<0.0001$,1.5056,-0.2552,-0.1021,-4.5678
\end{filecontents*}
\pgfplotstabletypeset[
    col sep=comma,
    string type,
    every head row/.style={before row=\toprule, after row=\midrule},
    every last row/.style={after row=\bottomrule}
  ]{chapter6_schemes/delta_all_sig.csv}
\end{table}

Across all six reported solutions the output is statistical significant. The $p$-value is very small and in particular smaller than $0.0001$, the absolute value of the $t$-stat is always larger than $4$, the average treatment affect, i.e., the ATE value, is around $-0.17$ with a consistent standard errors near $0.039$.

Because the primary significance measures are virtually identical across all configurations, the sensitivity parameter ($\Gamma$-shift) serves as our primary tie-breaking criterion. Under this metric, both the choice of $\delta$ and the distance mode (robust versus strict) yield meaningful differences. Specifically, for any fixed $\delta$, the robust mode consistently achieves superior sensitivity to unobserved confounding, as indicated by higher $\Gamma$-shift values. Likewise, $\delta = 0.25$ provides the strongest robustness against unobserved confounding across both strict and robust modes.

Next, consider the distance measures and the covariance balance of these six solutions.
It is interesting to note that in \texttt{OS15} the $\text{SMD}$ measure of \texttt{ejecfrac} is $0$. Such exact balance occurs also with respect to \texttt{acutemi} for almost all configurations. When we consider the worse among the SMD values of the numerical covariates (\texttt{ejecfrac} and \texttt{ves1proc}), the better solution is for $\delta=0.25$ in the strict mode. For this value of $\delta$ in the strict mode the SMD value of \texttt{ves1proc} is minimized (among our $6$ runs) and it reached the value of $0.0355$ attaining the Max-SMD for this value of $\delta$ in the strict mode. 
It appears that the covariate balance achieved for \texttt{ves1proc} is noticeably worse than that for \texttt{ejecfrac}. Indeed, across all tested parameter configurations, the absolute standardized mean difference ($\text{SMD}$) for \texttt{ejecfrac} remains significantly lower than that of \texttt{ves1proc}. Setting aside the structural distinction between numerical and binary covariates, \texttt{OSIP} achieves its most pronounced balancing power on the \texttt{acutemi} covariate, where the SMD reaches exactly $0$ in five of the six solutions. 

Overall balance across the dataset is characterized by the Max-SMD and Mean-SMD metrics. Under the Max-SMD criterion, the optimal configuration is obtained when $\delta = 0.25$ in strict mode, whereas the $\delta = 0.2$ strict-mode solution performs best under the Mean-SMD metric.

Regarding the distribution of Mahalanobis distances, the strict mode consistently yields matchings containing a small number of paired outliers with high distance values. This is evidenced by lower median Mahalanobis distances compared to their robust counterparts, alongside higher mean and maximum Mahalanobis distance,   and the corresponding standard deviations.

\begin{table}[htbp]
  \centering
  \caption{OSIP: Distances table for all values of $\delta$}
  \label{tab:delta_all_dist_lindner}
  \begin{filecontents*}{chapter6_schemes/delta_all_dist.csv}
method,Mean-Mahal,Median-Mahal,Max-Mahal,SD-Mahal,Max-SMD,Mean-SMD,Max-KS
OS15,0.5715,0.2706,5.4913,0.8849,0.0576,0.0229,0.0539
OR15,0.4909,0.3348,3.8048,0.6932,0.0736,0.0248,0.064
OS20,0.5685,0.2473,4.4866,0.8385,0.0404,0.0223,0.0606
OR20,0.4795,0.3172,3.5798,0.6549,0.0539,0.0285,0.0572
OS25,0.5735,0.2706,4.4866,0.8674,0.0355,0.024,0.0572
OR25,0.4665,0.3245,3.2477,0.6314,0.0503,0.0242,0.0505
\end{filecontents*}
\pgfplotstabletypeset[
    col sep=comma,
    string type,
    every head row/.style={before row=\toprule, after row=\midrule},
    every last row/.style={after row=\bottomrule}
  ]{chapter6_schemes/delta_all_dist.csv}
\end{table}

\begin{table}[htbp]
  \centering
  \caption{OSIP Balance table. The numbers are the values of $SMD_j$ for the corresponding solution and covariate $j$.}
  \label{tab:love_plot_tbl}
  \begin{filecontents*}{chapter6_schemes/love_plot_table_combined.csv}
covariate,OS15,OR15,OS20,OR20,OS25,OR25
stent,-0.0337,-0.0303,-0.0303,-0.0269,-0.0303,-0.0202
height,0.0218,0.001,0.0289,-0.0063,0.0323,0.0089
female,0.0168,0.0269,0.0168,0.0303,0.0168,0.0269
diabetic,0.0303,0.0404,0.0337,0.0471,0.0337,0.0471
acutemi,0,0,0,0,0.0067,0
ejecfrac,0,-0.0011,0.0054,0.0064,0.0123,0.0161
ves1proc,0.0576,0.0736,0.0576,0.0812,0.0355,0.0503
\end{filecontents*}
\pgfplotstabletypeset[
    col sep=comma,
    string type,
    every head row/.style={before row=\toprule, after row=\midrule},
    every last row/.style={after row=\bottomrule}
  ]{chapter6_schemes/love_plot_table_combined.csv}
\end{table}

Given the importance of maximizing the covariate balance, we choose 
for the general comparison against other methods in Section~\ref{OSIP_vs_all_lindner}, to consider the strict solution for the \texttt{Lindner} dataset.  For the strict mode, the values of $\delta=0.2$ and $\delta=0.25$ offer a better tradeoff between the different measures than the tradeoff obtained for $\delta=0.15$.  However, before we make this general comparison we take a closer look on the behavior of the heuristics in stage~2 of \texttt{OSIP}. 

\subsection{Heuristics analysis step by step, strict mode} 
\label{lindner-heuris-strict}

In this section, we present the comprehensive results for $\delta = 0.20$ under the strict mode of \texttt{OSIP}, evaluating the Local Search ($LS$), Enhanced Local Search ($ELS$), and Simulated Annealing ($SA$) heuristics. For each heuristic, performance is reported across two companion tables: one reporting statistical significance and treatment effect estimation metrics, and the other detailing covariate balance and distance metrics (including SMD, KS statistics, and Mahalanobis distance distributions).

Our goal is to demonstrate that each local search operator (move) consistently yields meaningful improvements across established statistical evaluation metrics—even though \texttt{OSIP}'s optimization objective operates directly on the surrogate estimated cost. To evaluate this alignment, we inspect the intermediate solutions obtained by temporarily halting the heuristic search at the end of each step (transition). For each intermediate partition, we perform $1:1$ within-bin matching and evaluate the candidate design across all standard metrics before resuming the heuristic search based on the surrogate cost. Note that these intermediate evaluation checkpoints are used solely for diagnostic purposes in this analysis and are not part of the standard \texttt{OSIP} runtime procedure. Below, we focus on $\delta = 0.20$ in strict mode; the corresponding diagnostic analysis for robust mode is provided in Appendix~\ref{OR20-lindner}. Across all tables, ``Step~0'' denotes the initial partition produced at the conclusion of Stage~1 (dynamic programming). Any other row (labeled ``Step~$i$'') denotes the solution after $i$ iterations of the relevant heuristic.

Consider first the significance metrics. Now, we also specify the estimated cost of each solution with the purpose of showing the monotonicity of the estimated cost. This auxiliary goal function has no direct implication on the significance metric of the corresponding matching. We show the significance steps of LS (Table~\ref{tab:delta0.2_LS_steps_sig}), ELS (Table~\ref{tab:delta0.2_ELS_steps_sig}), and SA (Table~\ref{tab:delta0.2_SA_steps_sig}).
 Although the $p$-value remain uniformly significant throughout these heuristics, we can observe the behavior of the other metrics when a single step is carried out. For example, the magnitude of the $t$-stat is sometimes increased and sometimes decreased. Such increases occurs for example in step 4 in LS or in the last step of ELS.  On the other hand the magnitude of $t$-stat sometimes decreased like in step 2 of LS or in step 2 of ELS or in the unique step of SA.  The $\Gamma$-shift behavior during ELS also starts by decreasing after the first step and in the last iteration it increases to the maximum value obtained by \texttt{OSIP} for this value of $\delta$ and strict mode.  This type of non-monotonicity of the significance metrics occurs also for the robust mode, as can be seen in the appendix. 

\begin{table}[htbp]
  \centering
  \caption{LS significance metrics}
  \label{tab:delta0.2_LS_steps_sig}
  \begin{filecontents*}{chapter6_schemes/LS_delta0.2_sig.csv}
step,ate,se(ate),p-value,gamma-shift,ci-low,ci-high,t-stat,est-cost
0,-0.1774,0.0391,$< 0.0001$,1.3986,-0.2541,-0.1008,-4.5371,125.0851
1,-0.1673,0.0387,$< 0.0001$,1.2982,-0.2432,-0.0914,-4.3230,123.9023
2,-0.1678,0.0389,$< 0.0001$,1.3073,-0.2440,-0.0916,-4.3136,123.7556
3,-0.1680,0.0388,$< 0.0001$,1.3134,-0.2441,-0.0919,-4.3299,123.5722
4,-0.1714,0.0390,$< 0.0001$,1.3430,-0.2479,-0.0949,-4.3949,122.4996
5,-0.1677,0.0389,$< 0.0001$,1.2335,-0.2440,-0.0914,-4.3111,122.3761
6,-0.1697,0.0389,$< 0.0001$,1.2844,-0.2459,-0.0935,-4.3625,122.1213
7,-0.1644,0.0390,$< 0.0001$,1.2518,-0.2408,-0.0881,-4.2154,121.6308
\end{filecontents*}
\pgfplotstabletypeset[
    col sep=comma,
    string type,
    every head row/.style={before row=\toprule, after row=\midrule},
    every last row/.style={after row=\bottomrule}
  ]{chapter6_schemes/LS_delta0.2_sig.csv}
\end{table}

\begin{table}[htbp]
  \centering
  \caption{ELS significance metrics}
  \label{tab:delta0.2_ELS_steps_sig}
  \begin{filecontents*}{chapter6_schemes/ELS_delta0.2_sig.csv}
step,ate,se(ate),p-value,gamma-shift,ci-low,ci-high,t-stat,est-cost
0,-0.1774,0.0391,$< 0.0001$,1.3986,-0.2541,-0.1008,-4.5371,125.0851
1,-0.1745,0.0391,$< 0.0001$,1.4488,-0.2511,-0.0979,-4.4629,122.7543
2,-0.1675,0.0388,$< 0.0001$,1.3025,-0.2436,-0.0913,-4.3170,122.4996
3,-0.1678,0.0387,$< 0.0001$,1.2814,-0.2437,-0.0919,-4.3359,122.0070
4,-0.1745,0.0390,$< 0.0001$,1.4926,-0.2511,-0.0980,-4.4744,121.6308
\end{filecontents*}
\pgfplotstabletypeset[
    col sep=comma,
    string type,
    every head row/.style={before row=\toprule, after row=\midrule},
    every last row/.style={after row=\bottomrule}
  ]{chapter6_schemes/ELS_delta0.2_sig.csv}
\end{table}

\begin{table}[htbp]
  \centering
  \caption{SA significance metrics}
  \label{tab:delta0.2_SA_steps_sig}
  \begin{filecontents*}{chapter6_schemes/SA_delta0.2_sig.csv}
step,ate,se(ate),p-value,gamma-shift,ci-low,ci-high,t-stat,est-cost
0,-0.1774,0.0391,$< 0.0001$,1.3986,-0.2541,-0.1008,-4.5371,125.0851
1,-0.1675,0.0388,$< 0.0001$,1.3025,-0.2436,-0.0913,-4.317,122.4996
\end{filecontents*}
\pgfplotstabletypeset[
    col sep=comma,
    string type,
    every head row/.style={before row=\toprule, after row=\midrule},
    every last row/.style={after row=\bottomrule}
  ]{chapter6_schemes/SA_delta0.2_sig.csv}
\end{table}

We next examine the distance and covariate balance metrics for LS (Table~\ref{tab:delta0.2_LS_steps_dist}), ELS (Table~\ref{tab:delta0.2_ELS_steps_dist}), and SA (Table~\ref{tab:delta0.2_SA_steps_dist}). Although these metrics generally fluctuate across search iterations, the Mahalanobis distance parameters under LS, specifically Mean-Mahal, Median-Mahal, and SD-Mahal, exhibit an almost monotonic downward trend, ultimately attaining their minimum values at the final step. In contrast, ELS does not exhibit this steady convergence. For ELS, the intermediate design at Step~3 outperforms the final output (Step~4) across all distance and balance metrics except for the surrogate estimated cost. Nevertheless, this intermediate solution remains inferior to the final step in terms of the treatment effect sensitivity ($\Gamma$-shift).

In this work, we optimize the surrogate estimated cost during Stage~2 to significantly accelerate heuristic iterations. Evaluating the exact matching at every single heuristic step is computationally prohibitive, as computing the full bin-wise matching for each intermediate partition quickly becomes the primary execution bottleneck. While evaluating intermediate checkpoints yields a trajectory of candidate solutions, doing so requires substantial runtime. We argue that this extra computational budget is far better spent exploring additional global design parameters (e.g., evaluating a broader range of $\delta$ values). For instance, although the step-by-step diagnostic analysis yields a maximum ATE magnitude of $|-0.1774|$, a practitioner seeking a stronger effect estimate would achieve a superior outcome simply by configuring \texttt{OSIP} under $\delta = 0.25$ in robust mode \texttt{OR25} (See table \ref{tab:delta_all_sig_lindner}).

\begin{table}[htbp]
  \centering
  \caption{LS distances metrics}
  \label{tab:delta0.2_LS_steps_dist}
  \begin{filecontents*}{chapter6_schemes/LS_delta0.2_dist.csv}
step,Max-SMD,Mean-SMD,Max-KS,Mean-Mahal,Median-Mahal,Max-Mahal,SD-Mahal
0,0.0370,0.0262,0.0673,0.5866,0.2706,4.8626,0.8751
1,0.0370,0.0251,0.0707,0.5807,0.2632,4.4866,0.8640
2,0.0404,0.0226,0.0640,0.5760,0.2632,4.4866,0.8595
3,0.0404,0.0218,0.0572,0.5757,0.2632,4.4866,0.8596
4,0.0370,0.0227,0.0606,0.5723,0.2632,4.4866,0.8538
5,0.0370,0.0226,0.0640,0.5728,0.2632,4.4866,0.8539
6,0.0404,0.0219,0.0640,0.5762,0.2473,4.4866,0.8621
7,0.0404,0.0223,0.0606,0.5685,0.2473,4.4866,0.8385
\end{filecontents*}
\pgfplotstabletypeset[
    col sep=comma,
    string type,
    every head row/.style={before row=\toprule, after row=\midrule},
    every last row/.style={after row=\bottomrule}
  ]{chapter6_schemes/LS_delta0.2_dist.csv}
\end{table}

\begin{table}[htbp]
  \centering
  \caption{ELS distances metrics}
  \label{tab:delta0.2_ELS_steps_dist}
  \begin{filecontents*}{chapter6_schemes/ELS_delta0.2_dist.csv}
step,Max-SMD,Mean-SMD,Max-KS,Mean-Mahal,Median-Mahal,Max-Mahal,SD-Mahal
0,0.0370,0.0262,0.0673,0.5866,0.2706,4.8626,0.8751
1,0.0370,0.0222,0.0606,0.5721,0.2706,4.4866,0.8532
2,0.0370,0.0227,0.0606,0.5723,0.2632,4.4866,0.8538
3,0.0370,0.0208,0.0606,0.5633,0.2473,4.2636,0.8182
4,0.0404,0.0223,0.0606,0.5685,0.2473,4.4866,0.8385
\end{filecontents*}
\pgfplotstabletypeset[
    col sep=comma,
    string type,
    every head row/.style={before row=\toprule, after row=\midrule},
    every last row/.style={after row=\bottomrule}
  ]{chapter6_schemes/ELS_delta0.2_dist.csv}
\end{table}

\begin{table}[htbp]
  \centering
  \caption{SA distances metrics}
  \label{tab:delta0.2_SA_steps_dist}
  \begin{filecontents*}{chapter6_schemes/SA_delta0.2_dist.csv}
step,Max-SMD,Mean-SMD,Max-KS,Mean-Mahal,Median-Mahal,Max-Mahal,SD-Mahal
0,0.037,0.0262,0.0673,0.5866,0.2706,4.8626,0.8751
1,0.037,0.0227,0.0606,0.5723,0.2632,4.4866,0.8538
\end{filecontents*}
\pgfplotstabletypeset[
    col sep=comma,
    string type,
    every head row/.style={before row=\toprule, after row=\midrule},
    every last row/.style={after row=\bottomrule}
  ]{chapter6_schemes/SA_delta0.2_dist.csv}
\end{table}


\section{OSIP vs the Other Heuristics}
\label{OSIP_vs_all_lindner}
The goal of this section is to present a comprehensive comparison between the solutions generated by \texttt{OSIP} on the \texttt{Lindner} dataset and those produced by state-of-the-art methods. The benchmark methods included in this comparison are described in Section~\ref{Intro}.

\paragraph{Methodology and Naming Conventions}
To maintain conciseness across all empirical tables, matching algorithms are designated using standardized short keys. Unmatched baseline samples are denoted as \texttt{UM}, while standard benchmark algorithms are abbreviated as \texttt{Opt} (Optimal 1:1 Matching), \texttt{Card} (Cardinality 1:1 Matching), \texttt{CEM} (Coarsened Exact Matching), \texttt{FM} (Full Matching), \texttt{Gen} (Genetic 1:1 Matching), \texttt{Q} (Propensity Score Quintiles 1:1 Matching), and \texttt{RQ} (Refined Propensity Score Quintiles 1:1 Matching). 

Table~\ref{tab:all_sig} presents the statistical significance metrics.  We also report  the sample retention counts across all evaluated matching algorithms for the \texttt{Lindner} dataset.  The two outputs of \texttt{OSIP} (\texttt{OS20} and \texttt{OS25}) retain the complete set of treated units ($N_T = 297, N_C = 297$).  This is a straightforward corollary of the feasibility of the partition of the common support interval into sub-intervals satisfying the fine-balance constraint.  Thus, \texttt{OSIP} solutions avoid the substantial sample loss incurred by non-1:1 or trimmed benchmarks such as \texttt{CEM} ($N_T = 212$), \texttt{Q} ($N_T = 282$), and \texttt{RQ} ($N_T = 227$).  
Next, consider the significance metrics.  \texttt{OS25} yields the largest treatment effect estimate ($\text{ATE} = -0.1782$ with a moderate standard error of $0.0391$, achieving strong statistical significance ($t = -4.5575, p < 0.0001$).  \texttt{OS20} estimates a moderate effect ($\text{ATE} = -0.1644$ with a standard error of $0.0390$, and $t = -4.2154$), outperforming \texttt{Card} ($\text{ATE} = -0.1398$) and \texttt{Gen} ($\text{ATE} = -0.1573$) in treatment effect sensitivity. Furthermore, increasing the threshold parameter $\delta$ from $0.20$ to $0.25$ markedly improves sensitivity to unobserved confounding: the $\Gamma$-shift increases from $1.2518$ for \texttt{OS20} to $1.4999$ for \texttt{OS25}, closely matching the highest robustness observed in \texttt{Opt} ($\Gamma = 1.5145$) while far exceeding \texttt{Card} ($\Gamma = 1.1425$) and \texttt{RQ} ($\Gamma = 1.1044$).  While the solution of \texttt{FM} has the highest absolute value of $t$-stat ($-4.8946$) it has significantly smaller treatment effect estimate of $-0.1532$, and thus we find \texttt{OS25} is statistically more significant than \texttt{FM}.  Similar comparison with \texttt{UM} reveals that  \texttt{OS25} is even more significant than the unmatched baseline and it is our top priority solution with respect to statistical significance. 

\begin{table}[htbp]
  \centering
  \small
  \caption{Significance table for all models, Lindner}
  \label{tab:all_sig}
  \begin{filecontents*}{chapter7_schemes/consolidated_results_lindner_delta_0.25_power_1_sig_transposed.csv}
method,treated,control,ate,se(ate),p-value,gamma-shift,ci-low,ci-high,t-stat
UM,297,684,-0.1492,0.0321,$< 0.0001$,NA,-0.212,-0.0863,-4.648
OS20,297,297,-0.1644,0.039,$< 0.0001$,1.2518,-0.2408,-0.0881,-4.2154
OS25,297,297,-0.1782,0.0391,$< 0.0001$,1.4999,-0.2548,-0.1015,-4.5575
Opt ,297,297,-0.1723,0.0392,$< 0.0001$,1.5145,-0.2492,-0.0955,-4.3954
Card,297,297,-0.1398,0.0391,0.0004,1.1425,-0.2164,-0.0631,-3.5754
CEM,212,389,-0.1576,0.0404,0.0001,NA,-0.2367,-0.0785,-3.901
FM,297,684,-0.1532,0.0313,$< 0.0001$,NA,-0.2146,-0.0918,-4.8946
Gen ,297,297,-0.1573,0.0393,$< 0.0001$,1.4419,-0.2344,-0.0802,-4.0025
Q ,282,282,-0.1746,0.0398,$< 0.0001$,1.5059,-0.2526,-0.0966,-4.3869
RQ,227,227,-0.1458,0.0448,0.0012,1.1044,-0.2336,-0.0579,-3.2545
\end{filecontents*}
\pgfplotstabletypeset[
    col sep=comma,
    string type,
    every head row/.style={before row=\toprule, after row=\midrule},
    every last row/.style={after row=\bottomrule}
  ]{chapter7_schemes/consolidated_results_lindner_delta_0.25_power_1_sig_transposed.csv}
\end{table}

Table~\ref{tab:all_dist_lindner} presents the pairwise Mahalanobis distance metrics and overall balance summary statistics across all benchmark models for the \texttt{Lindner} dataset. Both \texttt{OSIP} configurations (\texttt{OS20} and \texttt{OS25}) achieve complete covariate balance ($\text{Nib} = 0;0$), maintaining maximum standardized mean differences well below the conventional $0.10$ threshold. In contrast, \texttt{Card} leaves two imbalanced covariates ($\text{Max-SMD} = 0.1468; \text{Nib} = 2;0$) and \texttt{FM} leaves one ($\text{Max-SMD} = 0.1096; \text{Nib} = 1;0$).

Focusing on methods that retain the full matched sample (excluding \texttt{CEM}, \texttt{Q}, and \texttt{RQ}, which suffer from sample attrition), \texttt{OS20} achieves the lowest \text{Max-SMD} ($0.0404$), followed by \texttt{Opt} ($0.0539$) and \texttt{OS25} ($0.0576$). Both \texttt{OSIP} variants substantially outperform \texttt{Gen} ($\text{Max-SMD} = 0.0925$). Under the Mean-SMD metric, \texttt{OS20} ($0.0223$) and \texttt{OS25} ($0.0252$) outperform all full-sample competitors. Finally, for the Max-KS criterion, \texttt{OS20} achieves the single best overall value ($0.0606$), while \texttt{OS25} ties with \texttt{Gen} for the second-best performance ($0.0640$).

The Mahalanobis distance measures are reported only for the $1:1$ matching methods.  Here, again we exclude from the comparison the solutions suffering sample loss, and analyze the resulting subset of solutions.  Here, \texttt{Opt} has the smallest Mean-Mahal value of $0.5075$ while \texttt{OS25} and \texttt{OS20} are second and third best options. The smallest Median-Mahal value is $0.2473$ obtained by \texttt{OS20} while the solutions of \texttt{OS25} and \texttt{Opt} have a common value of $0.2706$ ranked second.  For both metrics the solutions of \texttt{Card} and \texttt{Gen} are significantly worse than the ones achieved by these three solutions (\texttt{OS20}, \texttt{OS25}, and \texttt{Opt}).

\begin{table}[htbp]
  \centering
  \small
   \caption{Distances table for all models, Lindner}
  \label{tab:all_dist_lindner}
  \begin{filecontents*}{chapter7_schemes/consolidated_results_lindner_all_power_1_dist.csv}
method,Mean-Mahal,Median-Mahal,Max-Mahal,SD-Mahal,Max-SMD,Mean-SMD,Max-KS,Nib
UM,NA,NA,NA,NA,0.3817,0.1261,0.1749,4;1
OS20,0.5685,0.2473,4.4866,0.8385,0.0404,0.0223,0.0606,0;0
OS25,0.5598,0.2706,5.4913,0.8552,0.0576,0.0252,0.064,0;0
Opt ,0.5075,0.2706,3.8415,0.7055,0.0539,0.0264,0.0707,0;0
Card,1.0649,0.6107,5.4297,1.1288,0.1468,0.0722,0.0943,2;0
CEM,NA,NA,NA,NA,0.0494,0.0114,0.0827,0;0
FM,NA,NA,NA,NA,0.1096,0.0375,0.0954,1;0
Gen ,0.6725,0.3706,6.7185,0.9441,0.0925,0.0360,0.0640,0;0
Q ,0.5233,0.2473,4.0080,0.8061,0.0390,0.0193,0.0674,0;0
RQ,0.4391,0.2650,4.2275,0.6624,0.0213,0.0093,0.0705,0;0
\end{filecontents*}
\pgfplotstabletypeset[
    col sep=comma,
    string type,
    every head row/.style={before row=\toprule, after row=\midrule},
    every last row/.style={after row=\bottomrule}
  ]{chapter7_schemes/consolidated_results_lindner_all_power_1_dist.csv}
\end{table}

We next examine the covariate-level balance achieved by the different matching methods. Table~\ref{tab:bal_lindner_all} reports the standardized mean differences ($\text{SMD}$) across all seven covariates for the \texttt{Lindner} dataset. The baseline unmatched sample (\texttt{UM}) exhibits severe initial imbalance, primarily driven by \texttt{ves1proc} ($\text{SMD} = -0.3817$), \texttt{ejecfrac} ($\text{SMD} = 0.1642$), \texttt{stent} ($\text{SMD} = -0.1174$), and \texttt{acutemi} ($\text{SMD} = -0.1061$). 

Benchmark methods achieve mixed success in mitigating these initial disparities. \texttt{FM} corrects the imbalance across six covariates but leaves \texttt{ejecfrac} imbalanced ($\text{SMD} = 0.1096$). \texttt{Card} fails to satisfy the standard $0.10$ threshold on multiple covariates, including \texttt{height} ($-0.1468$) and \texttt{ves1proc} ($-0.1286$). Although \texttt{CEM} eliminates imbalance across all covariates, it suffers from severe sample loss. 

Conversely, both \texttt{OS20} and \texttt{OS25} reduce all seven covariate imbalances well below the $0.10$ threshold, with no individual covariate $j$ exceeding $|\text{SMD}_j| = 0.0576$. Both variants achieve exact balance ($\text{SMD} = 0$) on \texttt{acutemi}, matching the performance of \texttt{Opt}, \texttt{CEM}, and \texttt{RQ}. Specifically, \texttt{OS20} provides tighter control on \texttt{stent} ($-0.0269$) and \texttt{ves1proc} ($0.0214$), whereas \texttt{OS25} achieves better balance on \texttt{female} ($0.0202$), \texttt{diabetic} ($0.0269$), and \texttt{ejecfrac} ($0.0140$). On the problematic \texttt{ejecfrac} covariate, both \texttt{OSIP} variants ($\text{SMD} \le 0.0267$) outperform all other full-sample alternatives. Overall, both \texttt{OSIP} models deliver fine-grained, global covariate balance comparable to trimmed methods like \texttt{RQ} and \texttt{CEM} without discarding matched pairs or compromising full sample retention.

\begin{table}[htbp]
  \centering
  \small
   \caption{Balance table for all models, Lindner}
  \label{tab:bal_lindner_all}
  \begin{filecontents*}{chapter7_schemes/love_plot_table_delta_all.csv}
method,stent,height,female,diabetic,acutemi,ejecfrac,ves1proc
UM,-0.1174,-0.0027,0.0505,0.0599,-0.1061,0.1642,-0.3817
OS20,-0.0269,0.0136,0.0269,0.0404,0,0.0267,0.0214
OS25,-0.0404,0.0169,0.0202,0.0269,0,0.014,0.0576
Opt ,-0.0539,0.0455,0.0034,0.0034,0,0.0507,-0.0283
Card,-0.0707,-0.1468,0.0606,-0.037,-0.0337,0.028,-0.1286
CEM,0,0.0304,0,0,0,-0.0494,0
FM,-0.0152,0.078,0.0057,0.0098,-0.0084,0.1096,-0.0356
Gen ,-0.0135,0.0492,0.0135,0.0168,-0.0202,0.0463,-0.0925
Q ,-0.039,-0.0082,0.0035,0.0106,0.0071,0.0377,0.0292
RQ,-0.0088,0.0151,-0.0088,0,0,0.0154,0.0086
\end{filecontents*}
\pgfplotstabletypeset[
    col sep=comma,
    string type,
    every head row/.style={before row=\toprule, after row=\midrule},
    every last row/.style={after row=\bottomrule}
  ]{chapter7_schemes/love_plot_table_delta_all.csv}
\end{table}

Comparing the two \texttt{OSIP} configurations highlights a clear structural trade-off. \texttt{OS20} prioritizes covariate closeness and global balance, yielding lower overall balance error ($\text{Max-SMD} = 0.0404$, $\text{Mean-SMD} = 0.0223$) and tighter spatial matching ($\text{Median-Mahal} = 0.2473$, $\text{Max-Mahal} = 4.4866$). Conversely, \texttt{OS25} exhibits slightly higher distances ($\text{Mean-Mahal} = 0.5598$, $\text{Max-Mahal} = 5.4913$), trading minor spatial proximity to obtain the enhanced treatment effect sensitivity and $\Gamma$-shift robustness reported in Table~\ref{tab:all_sig}. Overall, both \texttt{OSIP} models, together with \texttt{Opt}, emerge as the clear winners regarding the distance measures and the covariate balance, outperforming the other reported methods.

\section{Conclusions}
In this work we present a novel matching method in observational study design and we demonstrate it on several standard benchmarks.  Our evaluation shows that our method provides an effective framework for causal design, competing favorably with state-of-the-art matching methodologies in the literature.  In the main part of the paper we tested and demonstrated the new method on the \texttt{Lindner} dataset, while in the appendix we consider several additional standard benchmark datasets. Below, we briefly summarize the results for these additional datasets.
\begin{itemize}

\item The {\bf NSW-Mixtape} dataset \citep{cunningham2021causal, lalonde1986evaluating, dehejia1999causal}.
Downloaded from R's \texttt{causaldata} package, it is derived from the National Supported Work (NSW) demonstration job-training program experiment, where treated participants were guaranteed employment for 9–18 months. The dataset comprises $445$ units and $11$ covariates. Trimming to common support $[0.2379, 0.6433]$ yields $181$ treated units and $249$ control units. We evaluated \texttt{OSIP} across both robust and strict modes for $\delta \in \{0.15, 0.2, 0.25\}$. Among these six \texttt{OSIP} runs, configurations with $\delta = 0.25$ perform best. For $\delta = 0.25$, the robust mode is preferred when Mahalanobis distance measures are paramount, whereas the strict mode is superior when prioritizing statistical significance. Incorporating standard benchmark methods into the comparison reveals an elite candidate set of three top-performing designs: \texttt{Gen} is preferred if achieving the most balanced solution is required, \texttt{OR25} dominates when prioritizing Mahalanobis distance, and \texttt{OS25} leads when prioritizing statistical significance. Thus, two of the three elite candidate solutions are produced by \texttt{OSIP}.

\item The {\bf IHDP} dataset originates from the Infant Health and Development Program. Downloaded from R's \texttt{bartcs} package, it derives from a randomized experiment conducted from $1985$ to $1988$ that evaluated the impact of home visits on infant cognitive test scores. First introduced to the causal inference literature by \citep{hill2011bayesian}, it was subsequently adopted as a standard benchmark by \citep{louizos2017causal} and \citep{shalit2017estimating}. The dataset comprises $134$ treated units, $556$ control units, and a larger covariate dimension ($25$ covariates). Trimming to common support $[0.0335, 0.5749]$ retains the matched sample. We evaluated \texttt{OSIP} across both strict and robust modes for $\delta \in \{0.1, 0.15, 0.2\}$. All six \texttt{OSIP} solutions achieved high statistical significance with minimal variation across significance metrics. However, when prioritizing Mahalanobis distance and overall covariate balance, \texttt{OR20} emerges as the clear best candidate design. The selection of \texttt{OR20} as the primary candidate solution remains robust even when comparing against all state-of-the-art alternative methods.

\item Our final dataset is the {\bf NHEFS} dataset (National Health and Nutrition Examination Survey) 
\newline
\citep{hernan2020causal, cdc_nhefs}. Specifically, we analyze its complete form, 
\texttt{nhefs\_complete}, provided by R's \texttt{causaldata} package, which originates from the Epidemiologic Follow-up Study initiated jointly by the National Center for Health Statistics (NCHS) and the National Institute on Aging. The dataset is widely utilized as a benchmark in causal inference literature \citep{hernan2020causal} to evaluate the effect of smoking cessation on weight gain. The \texttt{nhefs\_complete} subset contains $1556$ observations across $67$ variables. Trimming to the common support region $[0.0753, 0.6347]$ yields $402$ treated units and $1153$ control units. Here, we evaluated \texttt{OSIP} for $\delta \in \{0.15, 0.175, 0.2\}$ across both strict and robust modes. Because significance metrics are comparable across all \texttt{OSIP} configurations, we rely on distance and covariate balance measures as our tie-breaking criteria: \texttt{OR175} achieves the most refined covariate balance, whereas \texttt{OR20} minimizes Mahalanobis distance metrics. This candidate solution set remains optimal even when evaluating against state-of-the-art alternative methods from the literature.
\end{itemize}

We conclude that the solutions obtained by \texttt{OSIP} are highly competitive and, in many cases, provide significantly better designs than those existing in the scientific literature prior to our work.  Below, we provide practical guidelines for researchers and practitioners looking to apply our method and open-source implementation to their own datasets.

\begin{itemize}
    \item 
    \textbf{Selecting $\delta$:} Choose $\delta$ in accordance with the theoretical bounds established in Section~\ref{delta_choice}. Specifically, as established in Lemma 1, $\delta$ must satisfy $\delta \ge \max\{\delta_{\text{pair}}, \delta_{\text{cons}}\}$; selecting any $\delta < \max\{\delta_{\text{pair}}, \delta_{\text{cons}}\}$ guarantees that no valid post-stratification partition exists, rendering the Stage 1 optimization problem in \texttt{OSIP} infeasible. Across all real-world benchmark datasets tested, setting baseline defaults to $\delta = \frac{p_k - p_0}{4}$ or $\delta = \frac{p_k - p_0}{3}$ serves as an effective starting point, where $p_0$ and $p_k$ represent the lower ($L_S$) and upper ($R_S$) boundaries of the trimmed support interval. Once an initial feasible $\delta$ is identified, practitioners should evaluate a small candidate set around this value. In our experiments, a candidate set of three $\delta$ values provided sufficient flexibility to explore the balance-distance trade-off. While values close to the theoretical minimum maximize fine-grained balance control, setting $\delta$ excessively large relaxes interval bounds too much, reducing \texttt{OSIP}'s ability to leverage localized propensity score proximity for covariate alignment.

    \item \textbf{Setting $K$ 
    :} Given a value of $\delta$, the theoretical upper bound on the number of active strata ($K$) can be directly computed analytically via Equation~\eqref{eq:delta_k_bound} and is automatically evaluated by our implementation. For most datasets, using this theoretical upper bound guarantees global feasibility, while maintaining efficient computation. For larger datasets where solver runtimes may increase, practitioners can test smaller values of $K$ to reduce the search space and accelerate convergence without significantly compromising solution quality.

    \item \textbf{Log Files and Output Tracking:} Detailed execution logs and summary statistics are automatically saved to text files formatted as \texttt{main\_<dataset\_name>\_output\_delta\_<delta\_value>\_power\_1.txt} (for instance, \texttt{main\_lindner\_output\_delta\_0.20\_power\_1.txt}). We recommend evaluating paired solutions across both strict and robust modes for every $\delta$ in the candidate set. Once this family of \texttt{OSIP} outputs is generated, practitioners can rank the candidate designs according to domain-specific priorities (e.g., prioritizing spatial proximity, global balance, or statistical sensitivity) to select the optimal final matching structure.
    
    \item \textbf{Implementation and Usage:} For complete execution instructions and code documentation, please refer to the \texttt{README.md} file in our project repository at: 
    \newline
    \url{https://github.com/RonAd-1/Initial-release-OSIP-method-pipeline-and-experimental-code}.

\end{itemize}

\bibliography{references} 
\bibliographystyle{abbrvnat}

\clearpage

\appendix
\section{Omitted analysis for OSIP on the Lindner dataset - the careful look at OR20}
\label{OR20-lindner}

We now present the heuristic optimization results for $\delta = 0.20$ in robust mode on the \texttt{Lindner} dataset. Here, the local search heuristics yield clear improvements over the initial Stage 1 solution. Under the robust Mahalanobis distance metric, the baseline Stage 1 solution has a cost (rounded to $4$ decimal digits) of $113.9524$. Both the LS and ELS procedures (reported in Table~\ref{tab:ls_delta0.20}) converge to the same global optimum with an improved cost of $106.0668$. The principal difference lies in their convergence efficiency: ELS reaches this final solution in $7$ improving iterations, whereas LS requires $9$ iterations.

\begin{table}[htbp]
  \centering
  \caption{The intervals for $\delta = 0.20$ in the robust mode, output of LS and ELS}
  \label{tab:ls_delta0.20}
  \begin{filecontents*}{chapter6_schemes/step2_heuristic_local_search_robust_partition_delta_0.20.csv}
start,end,treated,control
0.0417,0.2315,40,255
0.2315,0.4312,172,340
0.4312,0.6252,84,86
0.6252,0.6878,1,3
\end{filecontents*}
\pgfplotstabletypeset[
    col sep=comma,
    string type,
    every head row/.style={before row=\toprule, after row=\midrule},
    every last row/.style={after row=\bottomrule}
  ]{chapter6_schemes/step2_heuristic_local_search_robust_partition_delta_0.20.csv}
\end{table}

Simulated annealing best solution is shown in Table~\ref{tab:sa_delta0.20}. It has an estimated cost of $107.2759$, using $9$ improving iterations. Down below we would analyze these improving iterations in details for all three heuristics.

\begin{table}[htbp]
  \centering
  \caption{Optimal intervals for $\delta = 0.20$ robust, SA}
  \label{tab:sa_delta0.20}
  \begin{filecontents*}{chapter6_schemes/step2_heuristic_simulated_annealing_robust_partition_delta_0.20.csv}
start,end,treated,control
0.0417,0.1713,18,155
0.1713,0.2973,67,213
0.2973,0.4312,127,227
0.4312,0.6252,84,86
0.6252,0.6878,1,3
\end{filecontents*}
\pgfplotstabletypeset[
    col sep=comma,
    string type,
    every head row/.style={before row=\toprule, after row=\midrule},
    every last row/.style={after row=\bottomrule}
  ]{chapter6_schemes/step2_heuristic_simulated_annealing_robust_partition_delta_0.20.csv}
\end{table}

Observe that the final outputs obtained by the heuristics use a very small number of intervals: $4$ for the LS/ELS solution and $5$ for the SA output. Furthermore, because the last two intervals of the SA solution match those of the LS/ELS solution, the distinction between them lies entirely in how the sub-interval $[0.0417, 0.4312]$ is partitioned.

Consider the significance metrics tables of the three heuristics in Tables \ref{tab:delta0.2_LS_steps_sig_rob}, \ref{tab:delta0.2_ELS_steps_sig_rob}, and \ref{tab:delta0.2_SA_steps_sig_rob}.  The non-monotonicity we saw in the strict mode holds also in the robust mode.  Here, the SA table reveals that although the SA heuristic is randomized and allow to move into solutions with a higher estimated cost, all changes carried out by SA strictly improve the estimated cost, so this random acceptance of weaker solutions ability of SA was not used, and it still provides a different solution than the one of ELS.  This structural divergence stems from differing solution-evaluation orderings between the deterministic ELS procedure and the randomized SA search.  Regarding the ATE value, observe that the best value of this measure in the LS or ELS heuristics is obtained by their final step, but this is not the case for SA that achieves its best value at step $5$.

\begin{table}[htbp]
  \centering
  \caption{LS significance metrics}
  \label{tab:delta0.2_LS_steps_sig_rob}
  \begin{filecontents*}{appendix_schemes/lindner/LS_delta0.2_sig_rob.csv}
step,ate,se(ate),p-value,gamma-shift,ci-low,ci-high,t-stat,est-cost
0,-0.1647,0.0391,$< 0.0001$,1.3695,-0.2414,-0.0880,-4.2123,113.9524
1,-0.1657,0.0389,$< 0.0001$,1.4067,-0.2420,-0.0893,-4.2596,112.0846
2,-0.1635,0.0390,$< 0.0001$,1.2954,-0.2399,-0.0871,-4.1923,111.8522
3,-0.1652,0.0389,$< 0.0001$,1.3592,-0.2415,-0.0889,-4.2468,108.7426
4,-0.1644,0.0389,$< 0.0001$,1.3049,-0.2407,-0.0882,-4.2262,108.6861
5,-0.1629,0.0390,$< 0.0001$,1.3892,-0.2394,-0.0865,-4.1769,108.4619
6,-0.1655,0.0389,$< 0.0001$,1.5074,-0.2417,-0.0892,-4.2545,108.4159
7,-0.1640,0.0390,$< 0.0001$,1.5142,-0.2404,-0.0877,-4.2051,107.8283
8,-0.1591,0.0389,$< 0.0001$,1.2697,-0.2353,-0.0829,-4.0900,106.9268
9,-0.1665,0.0389,$< 0.0001$,1.3199,-0.2427,-0.0903,-4.2802,106.0668
\end{filecontents*}
\pgfplotstabletypeset[
    col sep=comma,
    string type,
    every head row/.style={before row=\toprule, after row=\midrule},
    every last row/.style={after row=\bottomrule}
  ]{appendix_schemes/lindner/LS_delta0.2_sig_rob.csv}
\end{table}

\begin{table}[htbp]
  \centering
  \caption{ELS significance metrics}
  \label{tab:delta0.2_ELS_steps_sig_rob}
  \begin{filecontents*}{appendix_schemes/lindner/ELS_delta0.2_sig_rob.csv}
step,ate,se(ate),p-value,gamma-shift,ci-low,ci-high,t-stat,est-cost
0,-0.1647,0.0391,$< 0.0001$,1.3695,-0.2414,-0.088,-4.2123,113.9524
1,-0.1644,0.0389,$< 0.0001$,1.3049,-0.2407,-0.0882,-4.2262,108.6861
2,-0.1579,0.0388,$< 0.0001$,1.2618,-0.2339,-0.0819,-4.0696,108.4026
3,-0.1553,0.0387,$< 0.0001$,1.2528,-0.2312,-0.0793,-4.0129,108.3566
4,-0.1644,0.0389,$< 0.0001$,1.3431,-0.2406,-0.0883,-4.2262,108.0987
5,-0.1653,0.039,$< 0.0001$,1.2855,-0.2417,-0.089,-4.2385,107.4966
6,-0.1572,0.0389,$< 0.0001$,1.2579,-0.2334,-0.0809,-4.0411,106.491
7,-0.1665,0.0389,$< 0.0001$,1.3199,-0.2427,-0.0903,-4.2802,106.0668
\end{filecontents*}
\pgfplotstabletypeset[
    col sep=comma,
    string type,
    every head row/.style={before row=\toprule, after row=\midrule},
    every last row/.style={after row=\bottomrule}
  ]{appendix_schemes/lindner/ELS_delta0.2_sig_rob.csv}
\end{table}

\begin{table}[htbp]
  \centering
  \caption{SA significance metrics}
  \label{tab:delta0.2_SA_steps_sig_rob}
  \begin{filecontents*}{appendix_schemes/lindner/SA_delta0.2_sig_rob.csv}
step,ate,se(ate),p-value,gamma-shift,ci-low,ci-high,t-stat,est-cost
0,-0.1647,0.0391,$< 0.0001$,1.3695,-0.2414,-0.088,-4.2123,113.9524
1,-0.1593,0.0389,$< 0.0001$,1.3278,-0.2356,-0.083,-4.0951,113.8026
2,-0.1627,0.0389,$< 0.0001$,1.3215,-0.2389,-0.0864,-4.1825,110.4605
3,-0.1711,0.0387,$< 0.0001$,1.5227,-0.247,-0.0952,-4.4212,108.8631
4,-0.1678,0.0388,$< 0.0001$,1.5077,-0.2437,-0.0918,-4.3247,108.4884
5,-0.1737,0.0387,$< 0.0001$,1.5137,-0.2496,-0.0978,-4.4884,108.171
6,-0.1648,0.0386,$< 0.0001$,1.3539,-0.2404,-0.0891,-4.2694,107.7145
7,-0.1609,0.0386,$< 0.0001$,1.3076,-0.2365,-0.0852,-4.1684,107.4288
8,-0.1673,0.0385,$< 0.0001$,1.4041,-0.2429,-0.0918,-4.3455,107.3314
9,-0.1653,0.0385,$< 0.0001$,1.3619,-0.2408,-0.0898,-4.2935,107.2759
\end{filecontents*}
\pgfplotstabletypeset[
    col sep=comma,
    string type,
    every head row/.style={before row=\toprule, after row=\midrule},
    every last row/.style={after row=\bottomrule}
  ]{appendix_schemes/lindner/SA_delta0.2_sig_rob.csv}
\end{table}

Next, consider the distance and balance metrics across the heuristics in Tables~\ref{tab:delta0.2_LS_steps_dist_rob}, \ref{tab:delta0.2_ELS_steps_dist_rob}, and \ref{tab:delta0.2_SA_steps_dist_rob}. Here, we again observe a strong alignment between lower estimated cost and reduced mean Mahalanobis distance. Across all three heuristics, Max-Mahal improves from an initial value of $4.8636$ to a common final value of $3.5798$. However, while LS and ELS achieve this terminal value in a single step (step 1), SA's Max-Mahal initially increases to $5.5436$ at step 1 before decreasing to $3.5798$ at step 3. Neither Median-Mahal nor SD-Mahal undergoes a net increase after executing any heuristic to completion. Regarding covariate balance, the $\text{Max-SMD}$ metric reaches $0.0505$ during intermediate steps of ELS and throughout the final iterations of SA. Although the baseline \texttt{OR20} solution exhibits a higher $\text{Max-SMD}$, transitioning to \texttt{OR25} (adjusting $\delta$) or \texttt{OS20} (changing the OSIP mode) yields even smaller $\text{Max-SMD}$ values (See Table~\ref{tab:delta_all_dist_lindner}).

\begin{table}[htbp]
  \centering
  \caption{LS distances metrics}
  \label{tab:delta0.2_LS_steps_dist_rob}
  \begin{filecontents*}{appendix_schemes/lindner/LS_delta0.2_dist_rob.csv}
step,Max-SMD,Mean-SMD,Max-KS,Mean-Mahal,Median-Mahal,Max-Mahal,SD-Mahal
0,0.0539,0.0261,0.0606,0.4979,0.3245,4.8636,0.6975
1,0.0505,0.0225,0.0606,0.4905,0.3245,3.5798,0.6785
2,0.0539,0.0233,0.0539,0.4898,0.3245,3.5798,0.6761
3,0.0539,0.0239,0.0539,0.4843,0.3245,3.5798,0.6654
4,0.0539,0.0234,0.0539,0.4842,0.3173,3.5798,0.6664
5,0.0572,0.0260,0.0572,0.4875,0.3245,3.5798,0.6709
6,0.0572,0.0266,0.0572,0.4873,0.3348,3.5798,0.6709
7,0.0572,0.0272,0.0572,0.4816,0.3245,3.5798,0.6581
8,0.0539,0.0268,0.0572,0.4810,0.3173,3.5798,0.6495
9,0.0539,0.0285,0.0572,0.4795,0.3172,3.5798,0.6549
\end{filecontents*}
\pgfplotstabletypeset[
    col sep=comma,
    string type,
    every head row/.style={before row=\toprule, after row=\midrule},
    every last row/.style={after row=\bottomrule}
  ]{appendix_schemes/lindner/LS_delta0.2_dist_rob.csv}
\end{table}

\begin{table}[htbp]
  \centering
  \caption{ELS distances metrics}
  \label{tab:delta0.2_ELS_steps_dist_rob}
  \begin{filecontents*}{appendix_schemes/lindner/ELS_delta0.2_dist_rob.csv}
step,Max-SMD,Mean-SMD,Max-KS,Mean-Mahal,Median-Mahal,Max-Mahal,SD-Mahal
0,0.0539,0.0261,0.0606,0.4979,0.3245,4.8636,0.6975
1,0.0539,0.0234,0.0539,0.4842,0.3173,3.5798,0.6664
2,0.0505,0.0252,0.0539,0.4861,0.3245,3.5798,0.6596
3,0.0505,0.0259,0.0539,0.4858,0.3245,3.5798,0.6595
4,0.0505,0.0267,0.0539,0.4887,0.3245,3.5798,0.6676
5,0.0515,0.0276,0.0539,0.4843,0.3245,3.5798,0.664
6,0.0515,0.0288,0.0572,0.4794,0.3173,3.5798,0.6504
7,0.0539,0.0285,0.0572,0.4795,0.3172,3.5798,0.6549
\end{filecontents*}
\pgfplotstabletypeset[
    col sep=comma,
    string type,
    every head row/.style={before row=\toprule, after row=\midrule},
    every last row/.style={after row=\bottomrule}
  ]{appendix_schemes/lindner/ELS_delta0.2_dist_rob.csv}
\end{table}

\begin{table}[htbp]
  \centering
  \caption{SA distances metrics}
  \label{tab:delta0.2_SA_steps_dist_rob}
  \begin{filecontents*}{appendix_schemes/lindner/SA_delta0.2_dist_rob.csv}
step,Max-SMD,Mean-SMD,Max-KS,Mean-Mahal,Median-Mahal,Max-Mahal,SD-Mahal
0,0.0539,0.0261,0.0606,0.4979,0.3245,4.8636,0.6975
1,0.0471,0.0224,0.0606,0.4971,0.3245,5.5436,0.7172
2,0.0505,0.0228,0.0539,0.4909,0.3245,5.5436,0.7049
3,0.0539,0.026,0.0572,0.4889,0.3348,3.5798,0.6649
4,0.0539,0.0252,0.0572,0.4883,0.3348,3.5798,0.6645
5,0.0539,0.0253,0.0572,0.4872,0.3348,3.5798,0.6643
6,0.0505,0.0252,0.0572,0.4881,0.3245,3.5798,0.6569
7,0.0505,0.0249,0.0572,0.486,0.3245,3.5798,0.6552
8,0.0505,0.0257,0.0572,0.4863,0.3348,3.5798,0.6556
9,0.0505,0.0255,0.0572,0.4861,0.3245,3.5798,0.6556
\end{filecontents*}
\pgfplotstabletypeset[
    col sep=comma,
    string type,
    every head row/.style={before row=\toprule, after row=\midrule},
    every last row/.style={after row=\bottomrule}
  ]{appendix_schemes/lindner/SA_delta0.2_dist_rob.csv}
\end{table}

\clearpage

\section{The NSW-MIXTAPE dataset}
\label{app:mixtape}

We now describe the \texttt{OSIP} comparison on another dataset, the  \texttt{NSW-Mixtape} dataset \citep{cunningham2021causal, lalonde1986evaluating, dehejia1999causal}.
Downloaded from R's causaldata package, it is related to the data national supported work demonstration (NSW) job-training program experiment, where those treated were guaranteed a job for 9-18 months. It has $445$ units and $11$ covariates.  First, consider in more details the set of variables characterizing each unit in Table \ref{tab:nsw_variables}.

\begin{table}[htbp]
\centering
\caption{Description of Variables in the NSW-Mixtape Dataset ($N = 445$)}
\label{tab:nsw_variables}
\begin{tabularx}{\textwidth}{l l X}
\toprule
\textbf{Variable Name} & \textbf{Data Type} & \textbf{Description} \\
\midrule
\texttt{data\_id} & Numeric Identifier & Individual ID. \\
\addlinespace
\texttt{treat} & Binary (0/1) & Treatment Indicator: 1 if assigned to the National Supported Work Demonstration Job Training Program; 0 otherwise (185 treated, 260 control). \\
\addlinespace
\texttt{age} & Numeric integer & Covariate: Age in years. \\
\addlinespace
\texttt{educ} & Numeric integer & Covariate: Years of completed education. \\
\addlinespace
\texttt{black} & Binary (0/1) & Covariate: Race (1 = Black, 0 = Otherwise). \\
\addlinespace
\texttt{hisp} & Binary (0/1) & Covariate: Ethnicity (1 = Hispanic, 0 = Otherwise). \\
\addlinespace
\texttt{marr} & Binary (0/1) & Covariate: Marital status (1 = Married, 0 = Otherwise). \\
\addlinespace
\texttt{nodegree} & Binary (0/1) & Covariate: High school degree indicator (1 = Has no high school diploma, 0 = Has diploma). \\
\addlinespace
\texttt{re74} & Numeric value & Baseline Covariate: Real earnings in 1974 (US dollars). \\
\addlinespace
\texttt{re75} & Numeric value & Baseline Covariate: Real earnings in 1975 (US dollars). \\
\addlinespace
\texttt{re78} & Numeric value & Primary Outcome: Real earnings in 1978 post-intervention (US dollars). \\
\bottomrule
\end{tabularx}
\end{table}

As with the \texttt{Lindner} dataset, we first perform the trimming to common support step. This step results in $181$ treated units and $249$ control units. Following this step the common support of the propensity scores becomes $[0.2379, 0.6433]$.

We present the \texttt{OSIP} results across all tested values of $\delta \in \{ 0.15,0.2, 0.25\}$. 
Similarly to the layout in Section~\ref{sub:comp_delta},
we now present the corresponding evaluation tables for the \texttt{NSW-Mixtape} dataset. 

First, Table~\ref{tab:delta_all_est_conv_nsw} reports heuristic performance regarding the auxiliary objective of minimizing the estimated solution cost alongside the number of improving steps. Entries designated as ``NC'' indicate no change from the Stage~1 baseline solution. At $\delta = 0.25$, the heuristics fail to improve upon the initial Stage~1 output for either mode. For $\delta = 0.15$ and $\delta = 0.20$, Simulated Annealing (SA) consistently yields the lowest estimated costs across both modes. Across all values of $\delta$, the estimated cost in robust mode is substantially lower than in strict mode; for instance, at $\delta = 0.25$, robust mode achieves an estimated cost of $107.4958$ compared to $130.1412$ in strict mode. Overall, the number of improving steps executed remains modest, peaking at $4$ steps for SA under \texttt{OS15} and \texttt{OR20}. 

\begin{table}[htbp]
  \centering
  \small
  \caption{OSIP: Estimated cost and conversion rates for all $\delta$ values, NSW-Mixtape}
  \label{tab:delta_all_est_conv_nsw}
  \begin{filecontents*}{appendix_schemes/mixtape/cost_steps_mixtape.csv}
method,step 1,LS ,ELS ,SA
OS15,138.7039,136.3921; 2,136.3921; 1,133.8031; 4
OR15,120.668,119.2838; 2,119.2838; 2,119.1758; 1
OS20,131.1793,NC ,NC ,130.6109; 2
OR20,109.4921,108.8462;3,108.8462; 2,107.3594; 4
OS25,130.1412,NC ,NC ,NC 
OR25,107.4958,NC ,NC ,NC 
\end{filecontents*}
\pgfplotstabletypeset[
    col sep=comma,
    string type,
    every head row/.style={before row=\toprule, after row=\midrule},
    every last row/.style={after row=\bottomrule}
  ]{appendix_schemes/mixtape/cost_steps_mixtape.csv}
\end{table}

Next, Table~\ref{tab:delta_all_sig_nsw} summarizes the statistical significance metrics across the six resulting matched samples. Estimated ATE values range from $1530.991$ (under \texttt{OR25}) to $1858.7901$ (under \texttt{OS25}). The \texttt{OS25} configuration achieves the lowest $p$-value ($0.0094$) and highest $t$-statistic ($2.613$) among all candidate solutions. Conversely, its robust counterpart (\texttt{OR25}) exhibits the highest $p$-value ($0.0345$) and lowest $t$-statistic ($2.1222$), though it remains statistically significant at the $5\%$ level. Sensitivity to unobserved confounding, measured by the $\Gamma$-shift, peaks under strict mode with $\texttt{OS20}$ ($\Gamma = 1.0846$), followed closely by \texttt{OS15} ($\Gamma = 1.0740$) and \texttt{OS25} ($\Gamma = 1.0720$), whereas \texttt{OR25} yields the lowest bound ($\Gamma = 1.0260$). Consequently, when prioritizing statistical power and unobserved confounding robustness, \texttt{OS25} represents the primary design candidate, while \texttt{OR25} serves as a secondary alternative.

\begin{table}[htbp]
  \centering
  \small
  \caption{OSIP: Significance table for all values of $\delta$, NSW-Mixtape}
  \label{tab:delta_all_sig_nsw}
  \begin{filecontents*}{appendix_schemes/mixtape/nsw_mixtape_delta_all_sig.csv}
method,ate,se(ate),p-value,gamma-shift,ci-low,ci-high,t-stat
OS15,1836.8409,714.0825,0.0105,1.074,437.2391,3236.4426,2.5723
OR15,1660.3614,713.3819,0.0205,1.0447,262.1329,3058.5898,2.3275
OS20,1578.0865,719.0185,0.0288,1.0846,168.8102,2987.3628,2.1948
OR20,1643.0731,712.8842,0.0218,1.0549,245.8201,3040.3262,2.3048
OS25,1858.7901,711.361,0.0094,1.072,464.5226,3253.0576,2.613
OR25,1530.991,721.4283,0.0345,1.026,116.9916,2944.9905,2.1222
\end{filecontents*}
\pgfplotstabletypeset[
    col sep=comma,
    string type,
    every head row/.style={before row=\toprule, after row=\midrule},
    every last row/.style={after row=\bottomrule}
  ]{appendix_schemes/mixtape/nsw_mixtape_delta_all_sig.csv}
\end{table}

Turning to spatial proximity and covariate balance, Tables~\ref{tab:delta_all_dist_nsw} and \ref{tab:love_plot_tbl_mixtape} report the corresponding distance and imbalance metrics across all configurations. First, examining Mahalanobis distance measures ($\text{Mean-Mahal}$, $\text{Median-Mahal}$, $\text{Max-Mahal}$, and $\text{SD-Mahal}$), robust solutions consistently achieve superior spatial matching relative to their strict counterparts across all values of $\delta$. In robust mode, distance metrics improve monotonically with $\delta$, reaching their global minimum at $\delta = 0.25$ (\texttt{OR25}). In contrast, strict mode lacks monotonicity, as $\delta = 0.20$ (\texttt{OS20}) yields lower distance metrics than either $\delta = 0.15$ or $\delta = 0.25$. 

Covariate balance, as measured by $\text{Max-SMD}$, presents the defining trade-off for this dataset. Solutions configured with $\delta = 0.20$ exhibit substantial imbalance, with $\text{Max-SMD}$ exceeding $0.1188$. Notably, \texttt{OS20} leaves $3$ out of $8$ thresholded covariates imbalanced ($\text{Nib} = 3;0$). Because balance preservation is prioritized in observational design, $\delta = 0.20$ should be avoided. Granular examination of individual covariates in Table~\ref{tab:love_plot_tbl_mixtape} reveals that the \texttt{educ} covariate drives maximum imbalance in most configurations, with \texttt{age} representing the second most problematic variable (and primary driver of $\text{Max-SMD}$ in \texttt{OS25}). Conversely, the indicator covariate \texttt{hisp} achieves perfect balance ($\text{SMD} = 0$) in half of the six evaluated configurations.

\begin{table}[htbp]
  \centering
  \small
  \caption{OSIP: Distances table for all values of $\delta$, NSW-Mixtape}
  \label{tab:delta_all_dist_nsw}
  \begin{filecontents*}{appendix_schemes/mixtape/nsw_mixtape_delta_all_dist.csv}
method,Mean-Mahal,Median-Mahal,Max-Mahal,SD-Mahal,Max-SMD,Mean-SMD,Max-KS,Nib
OS15,1.0755,0.6443,5.8313,1.307,0.0737,0.0395,0.0773,0;0
OR15,0.8788,0.525,3.481,0.977,0.0936,0.0381,0.0718,0;0
OS20,0.9467,0.5476,4.9690,1.0626,0.1188,0.0627,0.0939,3;0
OR20,0.8315,0.4875,3.8280,0.8870,0.1339,0.0454,0.0994,1;0
OS25,1.0318,0.6049,5.457,1.2222,0.075,0.0389,0.0663,0;0
OR25,0.81,0.4875,3.105,0.8827,0.0876,0.0349,0.0718,0;0
\end{filecontents*}
\pgfplotstabletypeset[
    col sep=comma,
    string type,
    every head row/.style={before row=\toprule, after row=\midrule},
    every last row/.style={after row=\bottomrule}
  ]{appendix_schemes/mixtape/nsw_mixtape_delta_all_dist.csv}
\end{table}

\begin{table}[htbp]
  \centering
  \small
  \caption{OSIP Balance table for NSW-Mixtape}
  \label{tab:love_plot_tbl_mixtape}
  \begin{filecontents*}{appendix_schemes/mixtape/love_plot_table_combined.csv}
Covariate,OS15,OR15,OS20,OR20,OS25,OR25
age,0.0734,0.0581,0.1049,0.0435,0.075,0.0509
educ,0.0737,0.0936,0.0834,0.1339,0.0678,0.0876
black,0.0166,0.011,0.0055,-0.011,0.0166,0.0055
hisp,-0.011,0,0,0.0055,-0.011,0
marr,-0.0331,-0.0221,0.0166,0.0166,-0.0331,-0.0221
nodegree,-0.0442,-0.0497,-0.0663,-0.0994,-0.0442,-0.0442
re74,0.0367,-0.0122,0.1188,0.0277,0.0307,0.0187
re75,-0.027,-0.0582,0.1061,-0.0258,-0.0328,-0.0499
\end{filecontents*}
\pgfplotstabletypeset[
    col sep=comma,
    string type,
    every head row/.style={before row=\toprule, after row=\midrule},
    every last row/.style={after row=\bottomrule}
  ]{appendix_schemes/mixtape/love_plot_table_combined.csv}
\end{table}

In summary, because solutions at $\delta=0.20$ fail covariate balance requirements, they are discarded.  Among the remaining configurations, the ones corresponding to $\delta=0.25$ are the optimal candidate set to be evaluated in the next step in comparison to the solutions obtained by other methods. The robust mode is preferred if the Mahalanobis distance measures are paramount, while the strict mode is preferred when prioritizing statistical significance.  Both solutions are retained for the next step when we evaluate \texttt{OSIP} versus other methods with respect to the \texttt{NSW-Mixtape} dataset.

We now compare the two selected \texttt{OSIP} configurations (\texttt{OS25} and \texttt{OR25}) against state-of-the-art alternative methods on the \texttt{NSW-Mixtape} dataset, following the comparative structure outlined in Section~\ref{OSIP_vs_all_lindner}. Table~\ref{tab:delta0.25_all_mixtape_sig} summarizes the statistical significance metrics across all evaluated methods (see also Figure~\ref{fig:ate_plot_mixtape} for a graphical representation of the estimated ATE values and their corresponding confidence intervals).

Both \texttt{OSIP} variants retain all $181$ treated units in the trimmed common support along with $181$ matched control units under $1:1$ matching. While benchmark algorithms such as \texttt{Opt}, \texttt{Card}, \texttt{FM}, and \texttt{Gen} also preserve the full set of $181$ treated units, methods such as \texttt{CEM}, \texttt{Q}, and \texttt{RQ} suffer sample loss by discarding treated observations. 

With respect to estimated treatment effects, \texttt{CEM} and \texttt{FM} report the highest ATE estimates. \texttt{OS25} achieves the third-largest ATE overall ($1858.7901$), paired with a strong $p$-value ($0.0094$), a $t$-statistic of $2.6130$, and the highest sensitivity bound against unobserved confounding ($\Gamma = 1.0720$). Meanwhile, \texttt{OR25} outperforms \texttt{Opt} across every significance metric except $\Gamma$-shift, strictly dominates \texttt{Card} across all significance metrics, and presents a competitive trade-off relative to \texttt{Gen}. 

Evaluating purely on statistical significance, we exclude \texttt{CEM}, \texttt{Q}, and \texttt{RQ} due to sample attrition, identify \texttt{FM} as the top unconstrained benchmark, rank \texttt{OS25} as our primary 1:1 matching solution, and jointly rank \texttt{OR25} and \texttt{Gen} as secondary choices superior to remaining 1:1 matching benchmarks.

\begin{table}[htbp]
  \centering
  \small
  \caption{Significance table for all models on NSW-Mixtape}
  \label{tab:delta0.25_all_mixtape_sig}
  \begin{filecontents*}{appendix_schemes/mixtape/consolidated_results_nsw_mixtape_delta_0.25_power_1_sig.csv}
method,treated,control,ate,se(ate),p-value,gamma-shift,ci-low,ci-high,t-stat
UM,181,249,1673.2903,643.0512,0.0096,NA,412.9099,2933.6707,2.6021
OS25,181,181,1858.7901,711.3610,0.0094,1.0720,464.5226,3253.0576,2.6130
OR25,181,181,1530.9910,721.4283,0.0345,1.0260,116.9916,2944.9905,2.1222
Opt ,181,181,1493.3520,724.7933,0.0401,1.0498,72.7571,2913.9469,2.0604
Card,181,181,1357.8293,723.6923,0.0614,1.0061,-60.6076,2776.2662,1.8763
CEM,103,147,2571.7431,824.7806,0.0020,NA,955.1731,4188.3131,3.1181
FM,181,249,2072.9004,637.9304,0.0012,NA,822.5568,3323.2439,3.2494
Gen ,181,181,1621.2878,719.3277,0.0248,1.0149,211.4055,3031.1700,2.2539
Q ,169,169,1587.4455,705.0883,0.0250,1.0544,205.4725,2969.4185,2.2514
RQ,118,118,1857.8031,891.9752,0.0384,1.0702,109.5317,3606.0744,2.0828
\end{filecontents*}
\pgfplotstabletypeset[
    col sep=comma,
    string type,
    every head row/.style={before row=\toprule, after row=\midrule},
    every last row/.style={after row=\bottomrule}
  ]{appendix_schemes/mixtape/consolidated_results_nsw_mixtape_delta_0.25_power_1_sig.csv}
\end{table}

\begin{figure}[htbp]
  \centering
  \small
  \includegraphics[width=0.85\textwidth]{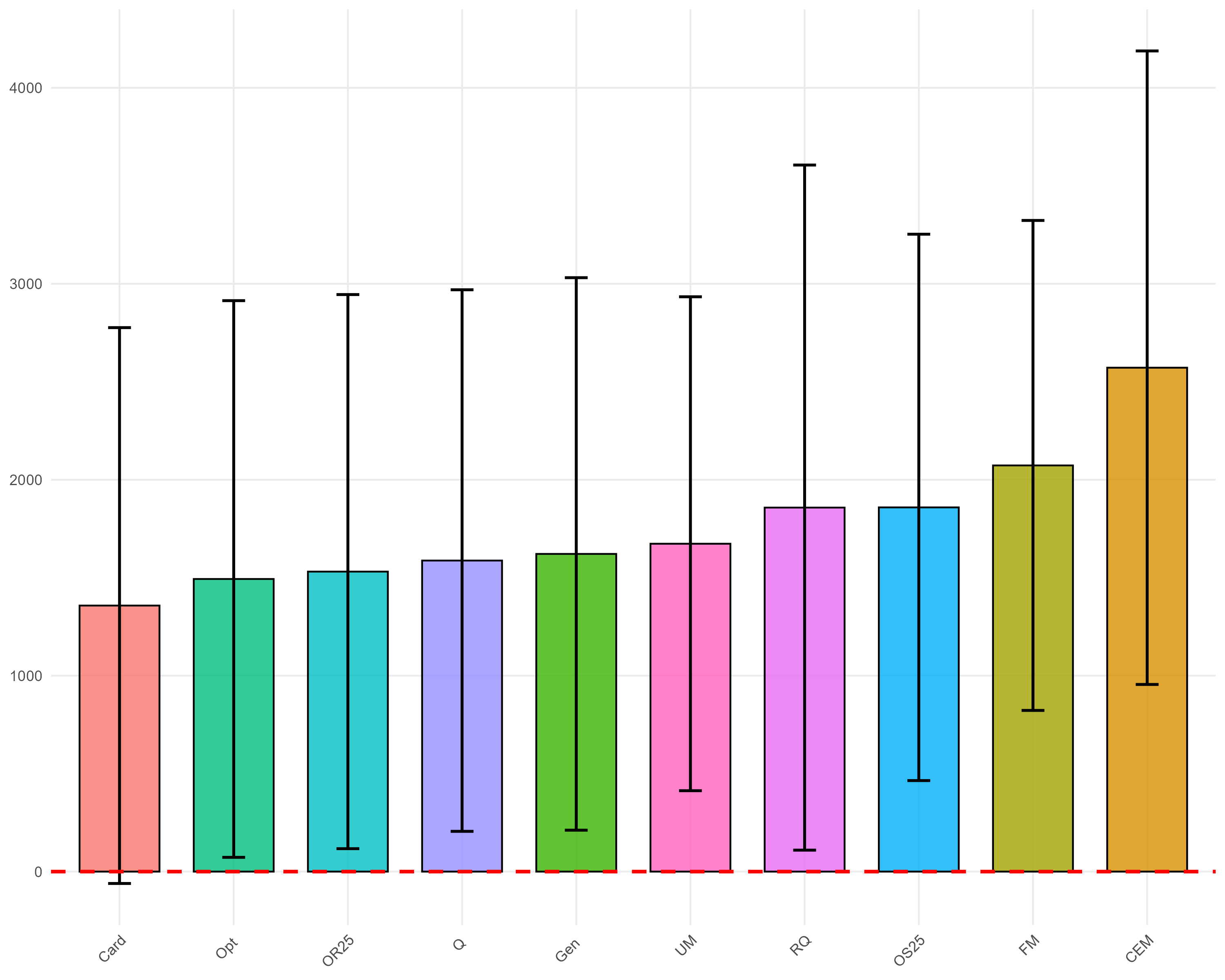} 
  \caption{ATE plot for all models, NSW-Mixtape}
  \label{fig:ate_plot_mixtape}
\end{figure}

Table~\ref{tab:delta0.25_all_mixtape_dist} summarizes the comparative matching performance across all methods on the \texttt{NSW-Mixtape} dataset in terms of distance and balance measures. First, consider $\text{Max-SMD}$ and the number of imbalanced covariates (encoded by $\text{Nib}$). Here, the unmatched baseline (\texttt{UM}) admits two imbalanced covariates, and two competing methods, namely \texttt{Opt} and \texttt{Card}, fail to decrease this count. With respect to $\text{Max-SMD}$, \texttt{Gen} is the clear winner with a very low value of $0.0497$, while \texttt{OS25} and \texttt{OR25} rank second and third, respectively, among solutions retaining the entire set of treated units. With respect to other balance measures, such as $\text{Mean-SMD}$, \texttt{FM} performs better than \texttt{OS25}, but \texttt{OR25} achieves an even lower value, with \texttt{Gen} remaining the top performer. As for the $\text{Max-KS}$ metric, \texttt{FM} achieves the best value, while \texttt{OS25} and \texttt{Gen} tie for second best, followed by the \texttt{OR25} solution.


The distance measures for \texttt{OR25} and \texttt{OS25} are substantially superior to those of competing $1:1$ matching methods that retain the full set of treated units while achieving complete covariate balance ($\text{Nib} = 0;0$). For example, the $\text{Mean-Mahal}$ value for \texttt{OR25} is $0.8100$ and for \texttt{OS25} is $1.0318$, whereas \texttt{Gen} yields $1.4815$. Direct comparison with \texttt{Gen} demonstrates similar dominance across all other Mahalanobis distance measures reported in Table~\ref{tab:delta0.25_all_mixtape_dist}. In fact, \texttt{OR25} achieves the lowest Mahalanobis distance metrics across all full-sample $1:1$ matching benchmarks. Among these methods, \texttt{Opt} ranks second-best in distance metrics, while \texttt{OS25} ranks third; however, unlike both \texttt{OSIP} solutions, \texttt{Opt} fails to balance all covariates ($\text{Nib} = 2;0$).


\begin{table}[htbp]
  \centering
  \small
   \caption{Distances table for $\delta=0.25$ all models, NSW-Mixtape}
  \label{tab:delta0.25_all_mixtape_dist}
  \begin{filecontents*}{appendix_schemes/mixtape/consolidated_results_nsw_mixtape_delta_0.25_power_1_dist.csv}
method,Mean-Mahal,Median-Mahal,Max-Mahal,SD-Mahal,Max-SMD,Mean-SMD,Max-KS,Nib
UM,NA,NA,NA,NA,0.1169,0.0508,0.1091,2;0
OS25,1.0318,0.6049,5.457,1.2222,0.075,0.0389,0.0663,0;0
OR25,0.81,0.4875,3.105,0.8827,0.0876,0.0349,0.0718,0;0
Opt ,0.9128,0.5811,4.0873,1.0008,0.1297,0.065,0.0939,2;0
Card,1.1431,0.7408,5.7514,1.2293,0.1578,0.0592,0.0994,2;0
CEM,NA,NA,NA,NA,0.0547,0.0113,0.0453,0;0
FM,NA,NA,NA,NA,0.0967,0.0363,0.0497,0;0
Gen ,1.4815,0.8912,7.6683,1.6207,0.0497,0.0198,0.0663,0;0
Q ,0.9878,0.473,5.7306,1.2529,0.0658,0.0233,0.0769,0;0
RQ,0.7902,0.4621,5.18,0.9051,0.0579,0.0159,0.0339,0;0
\end{filecontents*}
\pgfplotstabletypeset[
    col sep=comma,
    string type,
    every head row/.style={before row=\toprule, after row=\midrule},
    every last row/.style={after row=\bottomrule}
  ]{appendix_schemes/mixtape/consolidated_results_nsw_mixtape_delta_0.25_power_1_dist.csv}
\end{table}

Table~\ref{tab:delta0.25_bal_mixtape} presents the covariate-level standardized mean differences ($\text{SMD}$) across all evaluated models for the \texttt{NSW-Mixtape} dataset (see also Figure~\ref{fig:love_plot_mixtape} for a Love plot illustration). The baseline unmatched sample (\texttt{UM}) exhibits notable initial imbalance, particularly regarding \texttt{educ} ($\text{SMD} = 0.1169$) and \texttt{nodegree} ($\text{SMD} = -0.1091$). Standard benchmark methods struggle to reconcile the joint covariate distribution without shifting imbalance onto other variables. Specifically, \texttt{FM} overcomes baseline imbalances only at the cost of inducing substantial imbalance in the financial covariates \texttt{re74} ($\text{SMD} = -0.0806$) and \texttt{re75} ($\text{SMD} = -0.0967$, nearly three times the initial baseline imbalance of \texttt{re75}). \texttt{Opt} resolves initial baseline issues but severely unbalances \texttt{age} ($0.1220$) and \texttt{re75} ($0.1297$). Similarly, while \texttt{Card} achieves exact balance on \texttt{age} ($0$) and rectifies baseline deficiencies, it causes severe imbalance in both earnings variables, \texttt{re74} ($0.1048$) and \texttt{re75} ($0.1578$).

In contrast, \texttt{Gen} demonstrates strong balancing capability, resolving baseline issues and achieving exact balance on \texttt{black}, \texttt{hisp}, and \texttt{marr} without causing severe variance spillover to remaining covariates, all while preserving the full treated sample. In particular, \texttt{Gen} outperforms \texttt{FM} in balancing every covariate except \texttt{age} and \texttt{nodegree}. Finally, both \texttt{OS25} and \texttt{OR25} successfully keep all eight covariates well within the standard $0.10$ threshold, effectively managing the trade-off between demographic and financial variables. \texttt{OS25} records its maximum imbalance on \texttt{age} ($0.0750$), whereas \texttt{OR25} records its maximum on \texttt{educ} ($0.0876$) alongside achieving exact balance on \texttt{hisp} ($0$).

\begin{table}[htbp]
  \centering
  \small
   \caption{Balance table all models, NSW-Mixtape}
  \label{tab:delta0.25_bal_mixtape}
  \begin{filecontents*}{appendix_schemes/mixtape/love_plot_table_delta_0.25_mixtape_transposed.csv}
method,age,educ,black,hisp,marr,nodegree,re74,re75
UM,0.0703,0.1169,0.0009,-0.0155,0.0257,-0.1091,0.0367,0.0313
OS25,0.075,0.0678,0.0166,-0.011,-0.0331,-0.0442,0.0307,-0.0328
OR25,0.0509,0.0876,0.0055,0,-0.0221,-0.0442,0.0187,-0.0499
Opt ,0.122,0.0806,0,0,0.0331,-0.0773,0.0774,0.1297
Card,0,0.0618,0.011,-0.0221,0.0497,-0.0663,0.1048,0.1578
CEM,-0.0201,-0.0044,0.0000,0.0000,0.0000,0.0000,0.0114,-0.0547
FM,0.0377,0.0257,0.0110,-0.0110,0.0055,-0.0221,-0.0806,-0.0967
Gen ,0.0411,-0.0213,0,0,0,-0.0497,-0.005,-0.0414
Q ,0.0658,0.0373,0.0059,-0.0059,-0.0237,-0.0118,0.0308,-0.005
RQ,0.0371,-0.0579,0.0085,0,0,0,-0.0128,-0.0111
\end{filecontents*}
\pgfplotstabletypeset[
    col sep=comma,
    string type,
    every head row/.style={before row=\toprule, after row=\midrule},
    every last row/.style={after row=\bottomrule}
  ]{appendix_schemes/mixtape/love_plot_table_delta_0.25_mixtape_transposed.csv}
\end{table}

\begin{figure}[htbp]
  \centering
  \small
  \includegraphics[width=0.85\textwidth]{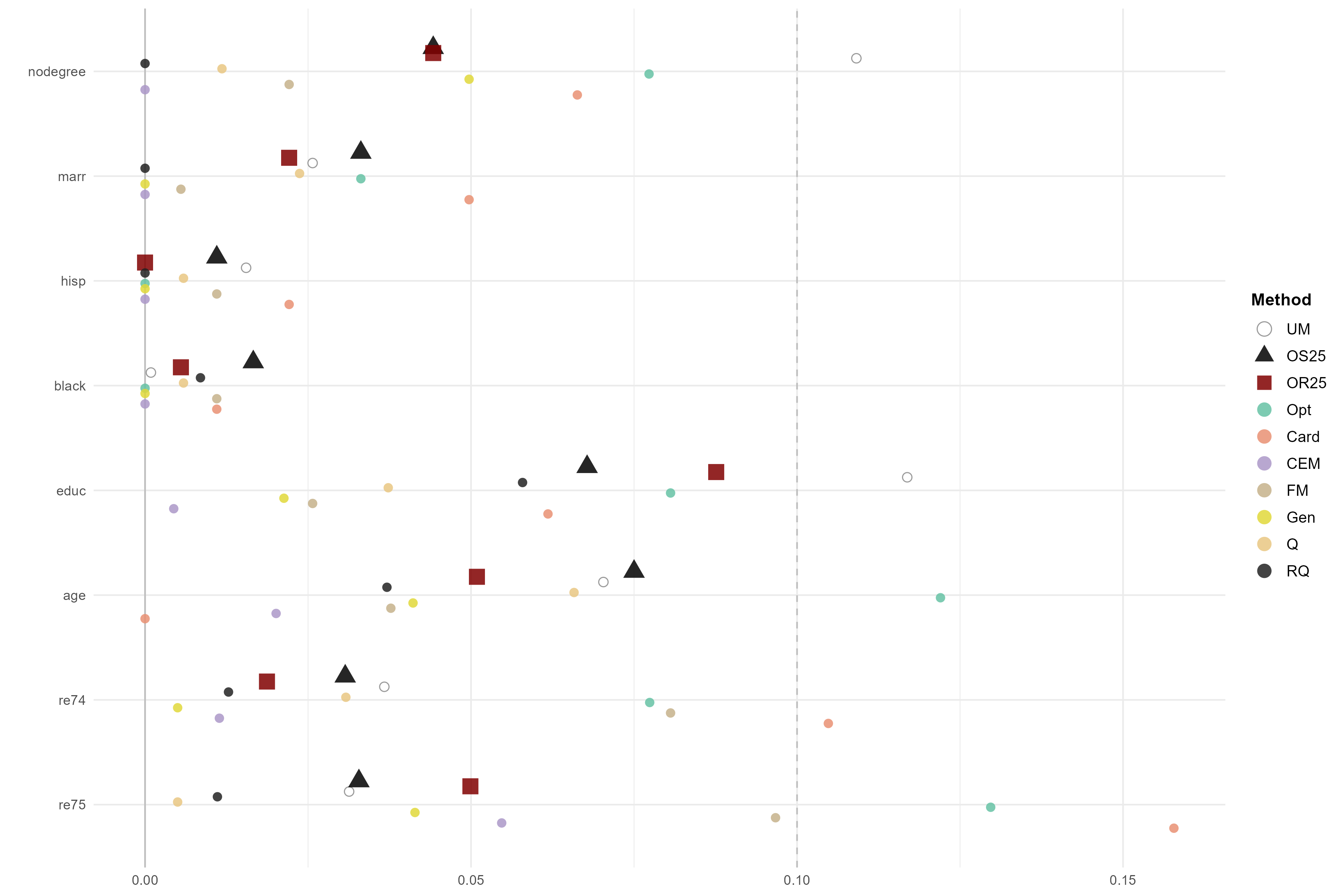} 
  \caption{Love plot for all models, NSW-Mixtape}
  \label{fig:love_plot_mixtape}
\end{figure}

With respect to the balance measures, \texttt{Gen} shows a good alternative to our proposed solutions \texttt{OR25} and \texttt{OS25} but it has significantly larger Mahalanobis distance measures. The other alternatives retaining all treated units perform worse with respect to this crucial balancing requirement. Thus, for this dataset, our proposed solution is to use \texttt{OR25} if Mahalanobis distance measures are prioritized, \texttt{OS25} is preferred when prioritizing statistical significance, while we suggest keeping \texttt{Gen} as a reasonable alternative to this pair of new solutions obtained by \texttt{OSIP} if balancing measures are prioritized.

\section{IHDP Dataset}
\label{ihdp}
We next evaluate our framework on the Infant Health and Development Program (\texttt{IHDP}) dataset, obtained via the \texttt{bartcs} R package. Originating from an intensive randomized targeted intervention conducted between $1985$ and $1988$, the study evaluated the effect of home visits and specialized educational care on cognitive test scores of low birth-weight infants. Initially introduced to the causal inference literature by \citep{hill2011bayesian}, this dataset has served as a standard benchmark in subsequent methodological studies \citep[e.g.,][]{shalit2017estimating, louizos2017causal}. 
Because the
IHDP dataset originates from a randomized trial with a strong treatment signal, all matching configurations exhibit
overwhelming statistical significance.
The dataset comprises $134$ treated units, $556$ control units, and $25$ covariates. Trimming the sample to common support yields a propensity score interval of $[0.0335, 0.5749]$. The dataset includes the following covariates:


\begin{itemize}
    \item \textbf{treatment}: Binary indicator of treatment assignment ($1 =$ treated, $0 =$ control).
    \item \textbf{y\_factual}: Observed (factual) outcome.
    \item \textbf{y\_cfactual}: Counterfactual (potential) outcome given the opposite treatment assignment.
    \item \textbf{mu0}: Control conditional mean ($\mathbb{E}[Y(0) \mid X]$).
    \item \textbf{mu1}: Treated conditional mean ($\mathbb{E}[Y(1) \mid X]$).
    \item \textbf{X1 -- X6}: Continuous confounders/covariates.
    \item \textbf{X7 -- X25}: Binary confounders/covariates.
\end{itemize}

For the purpose of matching, only $X_{1}\text{--}X_{25}$ and the treatment indicator are used. First, we evaluate \texttt{OSIP} solutions for $\delta \in \{0.1, 0.15, 0.2\}$ under both strict and robust modes. For this evaluation, consider the heuristic performance regarding the auxiliary goal of minimizing estimated cost and the number of improving steps carried out by each heuristic in Table~\ref{tab:delta_all_est_conv_ihdp}. Here, "NC" denotes no change from the solution obtained at the end of Step~1. Regarding the estimated cost of the resulting solutions, for every heuristic and every value of $\delta$, the solution obtained under strict mode yields a smaller estimated cost than the corresponding solution under robust mode. Furthermore, for every heuristic, increasing $\delta$ decreases the estimated cost for the same operational mode. 

Note that the solutions for $\delta = 0.2$ obtained at Step~1 were not improved by any heuristic, whereas for smaller values of $\delta$, every heuristic achieved further cost reductions. The maximum number of transition moves executed by any heuristic is $9$, performed by SA when constructing the solution for $\delta = 0.15$ in robust mode. Note that for this parameter configuration (robust mode and $\delta = 0.15$), the final \texttt{OR15} solution is obtained by LS in only $4$ iterations while achieving a lower cost ($495.6527$) than SA ($495.6673$).


\begin{table}[htbp]
  \centering
  \small
  \caption{OSIP: Estimated cost and conversion rates for all $\delta$ values, IHDP}
  \label{tab:delta_all_est_conv_ihdp}
  \begin{filecontents*}{appendix_schemes/ihdp/cost_steps_ihdp.csv}
method,step 1,LS ,ELS ,SA
OS10,511.7039,508.8898; 4,508.8898; 2,506.7678; 8
OR10,524.2019,522.4704; 2,522.4704; 2,518.0531; 5
OS15,486.4641,483.7097; 1,483.7097; 2,483.7097; 2
OR15,499.4293,495.6527; 4,496.5762; 4,495.6673; 9
OS20,466.2632,NC,NC,NC
OR20,478.4633,NC,NC,NC
\end{filecontents*}
\pgfplotstabletypeset[
    col sep=comma,
    string type,
    every head row/.style={before row=\toprule, after row=\midrule},
    every last row/.style={after row=\bottomrule}
  ]{appendix_schemes/ihdp/cost_steps_ihdp.csv}
\end{table}

Next, consider the statistical significance metrics of the OSIP solutions presented in Table~\ref{tab:delta_all_sig_ihdp}. Our results demonstrate strong stability in causal effect estimation across all model variants. The estimated ATE remains strictly within the narrow range of $3.8779$ to $3.9682$, the $p$-value is consistently $< 0.0001$, and all $t$-statistics comfortably exceed $25$. Furthermore, every solution achieves a $\Gamma$-shift value greater than $5$, confirming that the estimated treatment effect is exceptionally resilient against potential unobserved selection bias. Because all configurations display such overwhelming statistical significance, 
differentiating between them based on significance metrics alone is uninformative. Given these metrics, we instead prioritize and evaluate the solutions using covariate balance and distance measures.


\begin{table}[htbp]
  \centering
  \small
  \caption{OSIP: Significance table for all values of $\delta$, IHDP}
  \label{tab:delta_all_sig_ihdp}
  \begin{filecontents*}{appendix_schemes/ihdp/ihdp_delta_all_sig.csv}
method,ate,se(ate),p-value,gamma-shift,ci-low,ci-high,t-stat
OS10,3.8779,0.1359,$< 0.0001$,$>5$,3.6116,4.1443,28.5350
OR10,3.9269,0.1438,$< 0.0001$,$>5$,3.6451,4.2086,27.3081
OS15,3.8900,0.1423,$< 0.0001$,$>5$,3.6111,4.1689,27.3366
OR15,3.9507,0.1481,$< 0.0001$,$>5$,3.6605,4.2409,26.6759
OS20,3.9682,0.1404,$< 0.0001$,$>5$,3.6929,4.2434,28.2635
OR20,3.9274,0.1446,$< 0.0001$,$>5$,3.6440,4.2107,27.1604
\end{filecontents*}
\pgfplotstabletypeset[
    col sep=comma,
    string type,
    every head row/.style={before row=\toprule, after row=\midrule},
    every last row/.style={after row=\bottomrule}
  ]{appendix_schemes/ihdp/ihdp_delta_all_sig.csv}
\end{table}

\begin{figure}[htbp]
  \centering
  \small
  \includegraphics[width=0.85\textwidth]{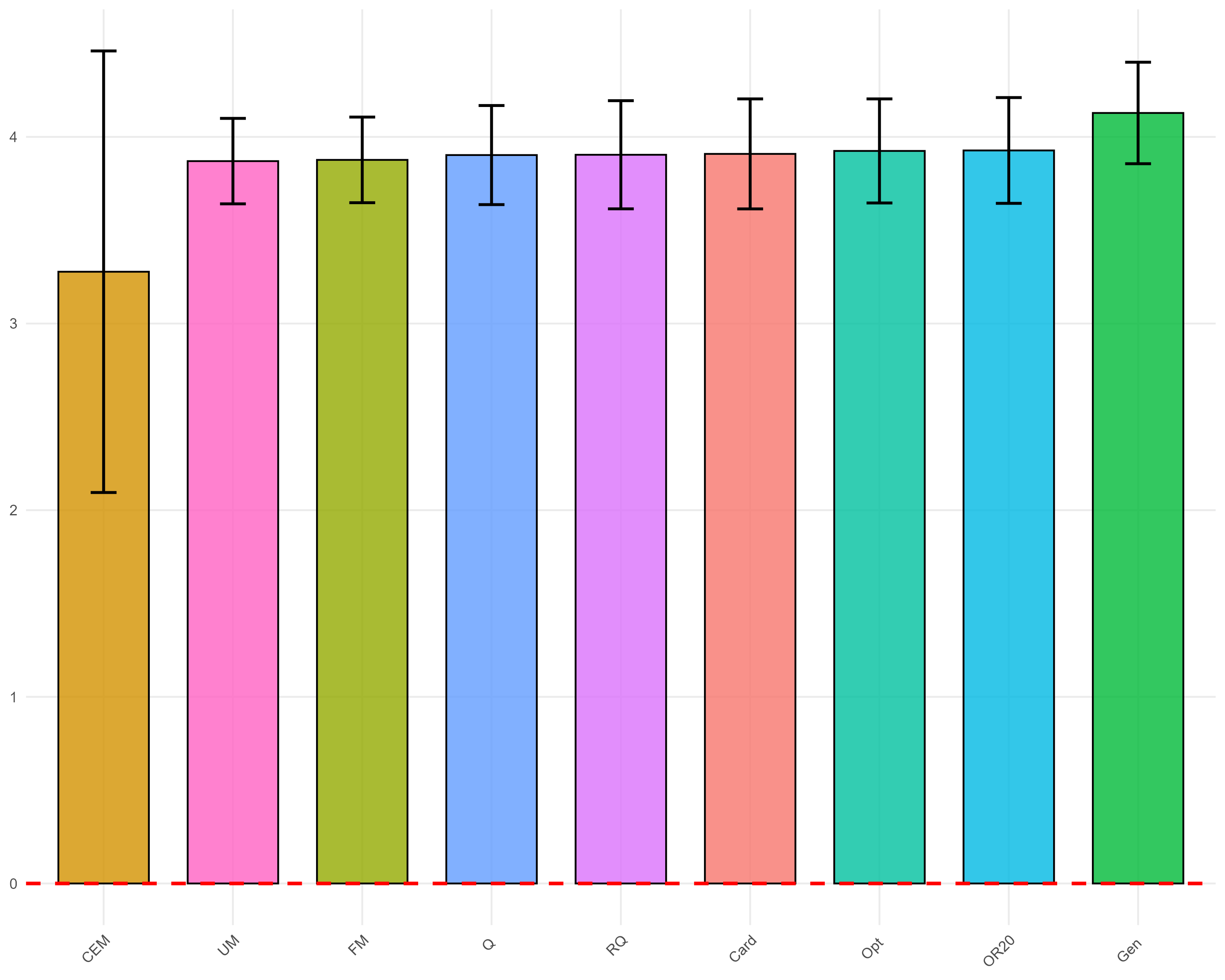} 
  \caption{ATE plot for all models, IHDP}
  \label{fig:ate_plot_ihdp}
\end{figure}

We now evaluate the distance and balance metrics reported in Table~\ref{tab:delta_all_dist_ihdp}. First, consider the Mahalanobis distance measures ($\text{Mean-Mahal}$, $\text{Median-Mahal}$, $\text{Max-Mahal}$, and $\text{SD-Mahal}$). For every value of $\delta$, the robust mode solution consistently yields lower Mahalanobis distance values than its strict mode counterpart. As $\delta$ increases, the first three distance metrics are monotonically non-increasing within each mode. In contrast, this monotonicity does not hold for $\text{SD-Mahal}$, which attains its minimum value under \texttt{OR15} ($0.6046$) and its maximum under \texttt{OS15} ($1.0865$). 

Across all four distance metrics, \texttt{OR20} emerges as the overall best \texttt{OSIP} solution: it achieves the minimum overall $\text{Mean-Mahal}$ ($3.0554$), minimum overall $\text{Median-Mahal}$ ($3.0934$), ties for the minimum $\text{Max-Mahal}$ ($4.4960$), and ranks second in $\text{SD-Mahal}$ ($0.6174$). Regarding covariate balance, only two solutions achieve full covariate balance ($\text{Nib} = 0;0$), namely \texttt{OR15} and \texttt{OR20}. Comparing the two, \texttt{OR20} outperforms \texttt{OR15} in $\text{Max-SMD}$ ($0.0897$ vs. $0.0968$) and $\text{Mean-SMD}$ ($0.0291$ vs. $0.0358$), while tying for the best $\text{Max-KS}$ value ($0.0896$). Consequently, we select \texttt{OR20} as our primary \texttt{OSIP} representative for comparison against external benchmark methods.


\begin{table}[htbp]
  \centering
  \small
  \caption{OSIP: Distances table for all values of $\delta$, IHDP}
  \label{tab:delta_all_dist_ihdp}
  \begin{filecontents*}{appendix_schemes/ihdp/ihdp_delta_all_dist.csv}
method,Mean-Mahal,Median-Mahal,Max-Mahal,SD-Mahal,Max-SMD,Mean-SMD,Max-KS,Nib
OS10,3.8932,3.9010,7.5043,1.0305,0.1094,0.0240,0.1045,1;0
OR10,3.2714,3.3242,4.7978,0.6381,0.1399,0.0412,0.1269,2;0
OS15,3.7253,3.6748,7.3125,1.0865,0.1369,0.0313,0.1194,2;0
OR15,3.1580,3.1456,4.4960,0.6046,0.0968,0.0358,0.0896,0;0
OS20,3.5805,3.5545,6.6828,0.9709,0.1279,0.0306,0.1119,3;0
OR20,3.0554,3.0934,4.4960,0.6174,0.0897,0.0291,0.0896,0;0
\end{filecontents*}
\pgfplotstabletypeset[
    col sep=comma,
    string type,
    every head row/.style={before row=\toprule, after row=\midrule},
    every last row/.style={after row=\bottomrule}
  ]{appendix_schemes/ihdp/ihdp_delta_all_dist.csv}
\end{table}

We now compare the \texttt{OR20} solution against all baseline methods on the \texttt{IHDP} dataset, consistent with our evaluation on the \texttt{Lindner} and \texttt{NSW-Mixtape} datasets. Because using the full set of variables causes severe cell sparsity, \texttt{CEM} is restricted to the continuous covariates $X_{1}\text{--}X_{6}$. Even under this relaxed specification, \texttt{CEM} retains only $9$ treated units, and $11$ control units. 

As detailed in Table~\ref{tab:delta0.20_all_ihdp_dist} \texttt{CEM} and \texttt{RQ} do not retain the full set of treated units.  This is a severe feature of the solution of \texttt{CEM} and for \texttt{RQ} it is a major disadvantage.  \texttt{CEM} suffers from this sample attrition due to high-dimensional cell sparsity. This extreme sample reduction substantially destabilizes its point estimate ($\hat{\tau}_{\text{ATE}} = 3.2776$) and drastically inflates the standard error to $0.6035$, underscoring the necessity of other frameworks like \texttt{OSIP} in moderate-to-high dimensional settings.  \texttt{RQ} also loses significant amount of data regarding treated units and it retains only $105$ treated units (and $105$ control units), and thus if possible we would like to avoid such bad behavior.  Focusing exclusively on the methods that retain the complete sample of $134$ treated units (\texttt{OR20}, \texttt{Opt}, \texttt{Card}, \texttt{FM}, \texttt{Gen}, \texttt{Q}), the estimated Average Treatment Effect remains remarkably consistent within the range of $3.8768$ to $4.1282$ where the unique maximizer is  of \texttt{Gen} and all other such methods have this value at most $3.9274$ (the $p$-value is always at most $0.0001$, the $t$-stat value is always larger than $25$). All methods evaluate to a Rosenbaum sensitivity threshold of $\Gamma > 5$, confirming that the estimated causal effect is robust against hidden confounding.  Thus, the data of this table suggest that we should not prefer \texttt{CEM} or \texttt{RQ}, and use other criteria to choose among the remaining solutions.

\begin{table}[htbp]
  \centering
  \small
  \caption{Significance table for all models, IHDP}
  \label{tab:delta0.20_all_ihdp_sig}
  \begin{filecontents*}{appendix_schemes/ihdp/consolidated_results_ihdp_delta_0.20_power_1_sig.csv}
method,treated,control,ate,se(ate),p-value,gamma-shift,ci-low,ci-high,t-stat
UM,134,556,3.8702,0.1168,$< 0.0001$,NA,3.6413,4.0992,33.1353
OR20,134,134,3.9274,0.1446,$< 0.0001$,$>5$,3.6440,4.2107,27.1604
Opt ,134,134,3.9248,0.1422,$< 0.0001$,$>5$,3.6460,4.2036,27.6006
Card,134,134,3.9089,0.1504,$< 0.0001$,$>5$,3.6142,4.2036,25.9900
CEM,9,11,3.2776,0.6035,0.0002,NA,2.0947,4.4605,5.4310
FM,134,556,3.8768,0.1170,$< 0.0001$,NA,3.6474,4.1062,33.1350
Gen ,134,134,4.1282,0.1387,$< 0.0001$,$>5$,3.8563,4.4002,29.7635
Q ,134,134,3.9027,0.1354,$< 0.0001$,$>5$,3.6372,4.1681,28.8235
RQ,105,105,3.9042,0.1478,$< 0.0001$,$>5$,3.6145,4.1940,26.4154
\end{filecontents*}
\pgfplotstabletypeset[
    col sep=comma,
    string type,
    every head row/.style={before row=\toprule, after row=\midrule},
    every last row/.style={after row=\bottomrule}
  ]{appendix_schemes/ihdp/consolidated_results_ihdp_delta_0.20_power_1_sig.csv}
\end{table}

As detailed in Table~\ref{tab:delta0.20_all_ihdp_dist}, \texttt{OR20} outperforms all baseline matching approaches in distance minimization and overall balance.  First consider the Mahalanobis distance metrics.  Here, \texttt{OR20} achieves a unique minimum value of the mean Mahalanobis distance ($3.0554$), a unique minimizer of the median Mahalanobis distance ($3.0934$), a unique minimizer of the maximum Mahalanobis distance ($4.496$) and a unique minimizer of the standard deviation of the Mahalanobis distance ($0.6174$).  Thus, with respect to distance Mahalanobis measures \texttt{OR20} is clearly our best candidate solution (among matching methods).   
 Furthermore, \texttt{OR20} achieves complete covariate balance ($N_{\text{ib}} = 0;0$) with a $\text{Max-SMD}$ of $0.0897$ and it is the unique method achieving this standard requirement. \texttt{RQ}  achieves only one imbalanced covariates and a $\text{Max-SMD}$ value of $0.1111$, but it requires discarding $29$ treated units ($N_{\text{treated}} = 105$). In contrast, \texttt{OR20} achieves full balance while retaining the complete treated cohort ($N_{\text{treated}} = 134$).  All other alternatives do not even get close to balance all covariate, and in particular their $\text{Max-SMD}$ value is at least $0.1293$. The $\text{Mean-SMD}$ of \texttt{OR20} is $0.0291$, that is second best following only the solution of \texttt{RQ}.  For the Max-KS criterion the solution \texttt{OR20} is our best solutions achieving a Max-KS value of $0.0896$.  Thus, \texttt{OR20} is both best solution with respect to Mahalanobis distance as well as regarding balancing the covariates.  We would like to understand better the reasons for this behavior of the methods with respect to the $\text{Max-SMD}$ metric, and thus turns our attention to the detailed balance table. 

\begin{table}[htbp]
  \centering
  \small
   \caption{Distances table for all models, IHDP}
  \label{tab:delta0.20_all_ihdp_dist}
  \begin{filecontents*}{appendix_schemes/ihdp/consolidated_results_ihdp_delta_0.20_power_1_dist.csv}
method,Mean-Mahal,Median-Mahal,Max-Mahal,SD-Mahal,Max-SMD,Mean-SMD,Max-KS,Nib
UM,NA,NA,NA,NA,0.228,0.0783,0.1626,9;3
OR20,3.0554,3.0934,4.496,0.6174,0.0897,0.0291,0.0896,0;0
Opt ,3.3812,3.4175,5.9534,0.8879,0.2286,0.0363,0.1269,3;1
Card,4.3654,4.3054,10.8213,1.3053,0.1476,0.0472,0.1194,3;0
CEM,NA,NA,NA,NA,0.3889,0.1043,0.3889,10;5
FM,NA,NA,NA,NA,0.2389,0.0515,0.1167,4;2
Gen ,4.052,4.1148,6.5856,1.2014,0.16,0.0299,0.097,2;0
Q ,3.9557,3.9138,7.5106,1.0375,0.1293,0.0315,0.1119,1;0
RQ,4.2355,4.152,6.8551,0.9307,0.1111,0.027,0.1143,1;0
\end{filecontents*}
\pgfplotstabletypeset[
    col sep=comma,
    string type,
    every head row/.style={before row=\toprule, after row=\midrule},
    every last row/.style={after row=\bottomrule}
  ]{appendix_schemes/ihdp/consolidated_results_ihdp_delta_0.20_power_1_dist.csv}
\end{table}

Table~\ref{tab:delta0.20_bal_ihdp} presents the covariate-level balance comparison between \texttt{OR20} and all benchmark models (excluding \texttt{CEM} due to its extreme sample attrition). The corresponding Love plot is illustrated in Figure~\ref{fig:love_plot_ihdp}. Among the continuous covariates ($X_{1}\text{--}X_{6}$), the unmatched baseline (\texttt{UM}) exhibits severe imbalance in five out of six variables ($X_1, X_2, X_4, X_5, X_6$). In addition, four binary covariates ($X_9, X_{14}, X_{17}, X_{25}$) are imbalanced. In total, $9$ out of $25$ covariates fail standard balance thresholds ($|\text{SMD}| > 0.10$) in the raw data, posing a challenging high-dimensional balancing task. 
The baseline matching methods struggle to overcome these baseline imbalances to varying degrees:
\begin{itemize}
    \item \texttt{Opt} successfully balances $X_1$ and $X_2$ as well as all binary covariates, but fails on $X_4$, $X_5$, and $X_6$.
    \item \texttt{Card} similarly eliminates binary covariate imbalances but fails to balance three continuous covariates ($X_1, X_5, X_6$).
    \item \texttt{FM} inherits much of \texttt{UM}'s continuous imbalance ($X_1, X_2, X_4$) and remains imbalanced on the binary covariate $X_{14}$.
    \item \texttt{Gen} fails on $X_3$ and $X_5$, notably producing \emph{worse} imbalance on both covariates than the original unmatched sample.
    \item \texttt{Q} fails on a single continuous covariate ($X_6$), which drives its maximum imbalance to $\text{Max-SMD} = 0.1293$.
    \item \texttt{RQ} likewise fails to balance $X_6$ ($\text{Max-SMD} = 0.1111$), demonstrating that even after discarding $29$ treated units, it cannot achieve full balance.
\end{itemize}
In sharp contrast, \texttt{OR20} manages to bring every single covariate within the $|\text{SMD}| \le 0.10$ threshold, achieving complete covariate balance across the full treated sample ($N_{\text{treated}} = 134$).

\begin{table}[htbp]
  \centering
  \small
   \caption{Balance table for all models, IHDP}
  \label{tab:delta0.20_bal_ihdp}
  \begin{filecontents*}{appendix_schemes/ihdp/love_plot_table_delta_0.20_ihdp.csv}
covariate,UM,OR20,Opt ,Card,FM,Gen ,Q ,RQ
X1,0.2018,0.0235,-0.0268,0.1369,0.2093,-0.0498,-0.0605,-0.058
X2,0.1694,0.0599,-0.09,0.0954,0.1806,-0.0982,-0.0763,-0.0626
X3,-0.0579,-0.0512,0.0408,0.0379,-0.0886,0.1422,0.0499,-0.0456
X4,-0.2114,0.0345,0.1144,-0.0802,-0.2389,-0.0651,0.0891,-0.0221
X5,-0.1499,-0.0897,-0.2286,-0.1476,-0.051,-0.16,-0.0991,0.0522
X6,0.228,-0.029,0.1449,0.1303,0.003,0.0766,0.1293,0.1111
X7,0.0005,0.0299,-0.0149,0.0075,0.0095,0.0149,0.0075,-0.0381
X8,-0.0058,0.0448,0,0,-0.0274,0,0.0149,-0.0095
X9,0.1626,-0.0299,0.0299,0.0597,0.0851,0,0,0.0286
X10,-0.0684,0,0,-0.0448,-0.0078,0.0224,0.0075,-0.0286
X11,-0.0217,0.0448,0.0299,0.0299,-0.0138,-0.0149,0.0075,0.0476
X12,-0.0045,-0.0448,-0.0224,-0.0448,-0.0078,-0.0149,-0.0299,-0.0286
X13,0.0227,0.0448,0.0522,-0.0075,-0.0181,0.0075,0.0373,0
X14,0.1047,-0.0149,-0.0522,0.0075,0.1101,0.0075,-0.0224,0.0476
X15,0.0105,0.0373,0.0224,-0.0373,-0.0417,0.0373,0.0448,-0.0095
X16,-0.0255,-0.0299,0,-0.0149,0.0008,0,-0.0224,-0.0095
X17,0.1002,0.0224,0,-0.0522,0.0375,-0.0075,0.0373,0.0381
X18,0.0087,0,0,0.0075,0,0,0,0
X19,0.0128,0.0075,0.0075,-0.0448,0.0041,0,0.0075,-0.019
X20,-0.0911,-0.0299,-0.0075,-0.0373,-0.0385,0,-0.0149,-0.0095
X21,0.0118,-0.0373,-0.0075,0.0448,0.0151,0.0149,0,0
X22,-0.0562,0,0,-0.0224,-0.0023,0,0,0
X23,-0.0174,0.0075,0,0,-0.0075,0,0.0075,0
X24,-0.0824,0,-0.0075,-0.0299,-0.0385,0,0,-0.0095
X25,0.1322,0.0149,0.0075,0.0597,0.0496,0.0149,-0.0224,0
\end{filecontents*}
\pgfplotstabletypeset[
    col sep=comma,
    string type,
    every head row/.style={before row=\toprule, after row=\midrule},
    every last row/.style={after row=\bottomrule}
  ]{appendix_schemes/ihdp/love_plot_table_delta_0.20_ihdp.csv}
\end{table}

\begin{figure}[htbp]
  \centering
  \small
  \includegraphics[width=0.85\textwidth]{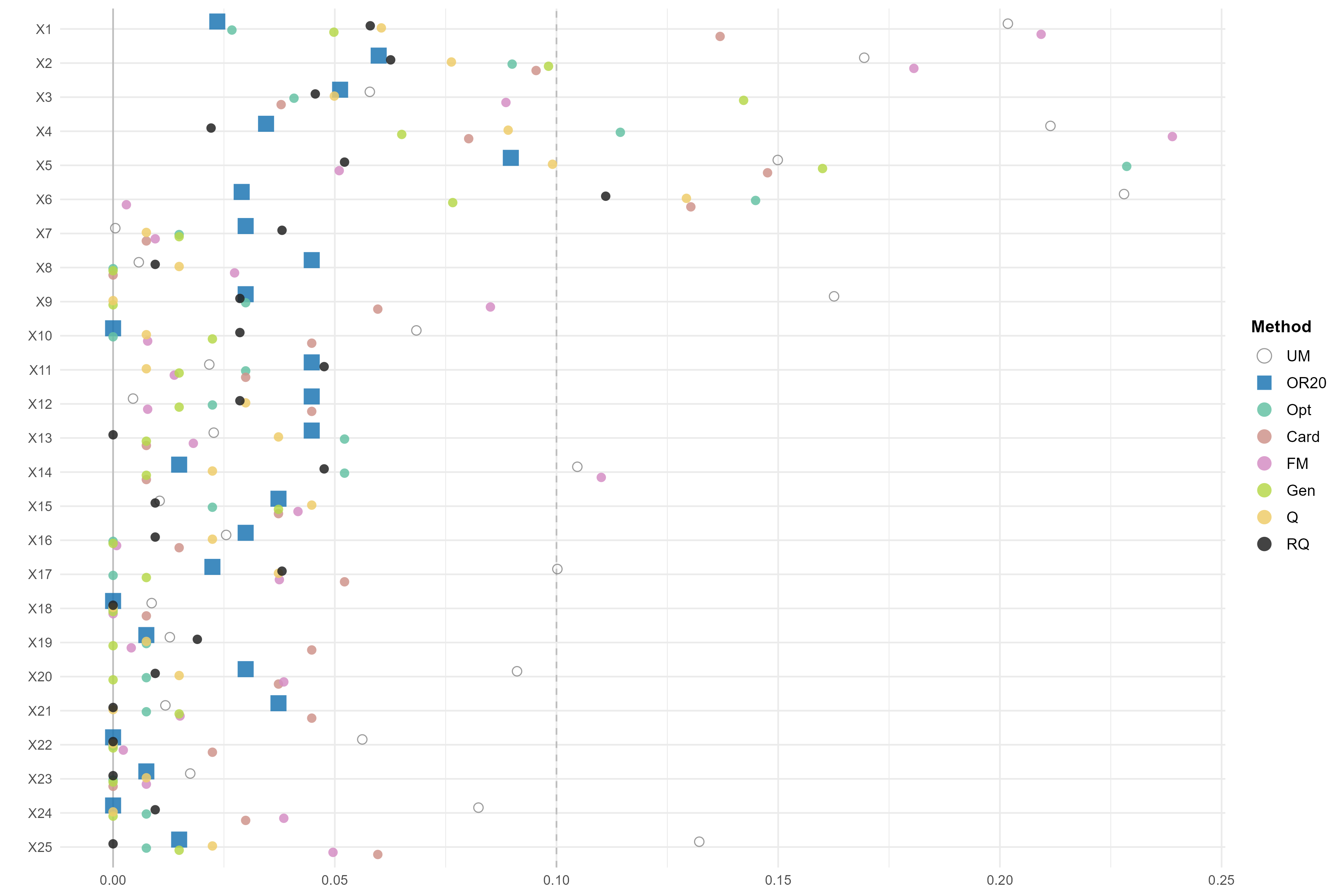} 
  \caption{Love plot for all models, IHDP}
  \label{fig:love_plot_ihdp}
\end{figure}

Thus, our suggested solution \texttt{OR20} performs better than all other alternatives, and we suggest using it for this \texttt{IHDP} dataset regardless of the exact priority rule suggested by a given expert.

\clearpage

\section{NHEFS dataset}
The National Health and Nutrition Examination Survey (NHEFS) dataset \citep{hernan2020causal, cdc_nhefs} and specifically its generalized form \texttt{nhefs\_complete} provided by the \texttt{causaldata} R package, originates from the Epidemiologic Follow-up Study initiated jointly by the National Center for Health Statistics (NCHS) and the National Institute on Aging. The dataset is widely utilized as a benchmark in causal inference literature (e.g., Hern\'an and Robins) to evaluate the effect of smoking cessation on weight gain. The results for the NHEFS dataset are again exceptionally significant because the underlying causal effect of smoking cessation on weight gain is pronounced and consistently estimated with low standard errors across the full sample.

The \texttt{nhefs\_complete} subset contains $1556$ observations (representing participants with complete record data) across $67$ variables. For the purpose of causal estimation and matching, the variables are structured as follows:

\begin{itemize}
    \item \textbf{Treatment (\texttt{qsmk})}: Binary indicator representing smoking cessation ($1 =$ quit smoking between baseline and follow-up, $0 =$ did not quit).
    \item \textbf{Outcome (\texttt{wt82\_71})}: Continuous measure representing weight change in kilograms between 1971 and 1982.
    \item \textbf{Key Baseline Confounders}:
    \begin{itemize}
        \item \textit{Demographic Factors}: Age (\texttt{age}), sex (\texttt{sex}), race (\texttt{race}), and education level (\texttt{school}).
        \item \textit{Smoking History}: Baseline smoking intensity (\texttt{smokeintensity}) and years of smoking (\texttt{smokeyrs}).
        \item \textit{Physical Activity and Health}: Recreational exercise level (\texttt{exercise}), daily physical activity level (\texttt{active}), and baseline body weight in 1971 (\texttt{wt71}).
    \end{itemize}
\end{itemize}

A comprehensive background, documentation, and original data files for the survey are available via the CDC NHEFS documentation portal \cite{cdc_nhefs}.

This dataset represents the largest sample evaluated in our benchmarking. The original sample contains $403$ treated and $1,163$ control units. After restricting units to the common support region $[0.0753, 0.6347]$, $402$ treated and $1,153$ control units remain, yielding a total sample size of $1,555$ units.

Table~\ref{tab:delta_all_est_conv_nhefs} provides key insights into the behavior of \texttt{OSIP} across parameter configurations for the \texttt{NHEFS} dataset. We evaluated \texttt{OSIP} for $\delta \in \{0.15, 0.175, 0.20\}$ in both strict and robust modes (letting \texttt{OS175} and \texttt{OR175} denote the outputs under $\delta=0.175$). At $\delta = 0.15$, the initial Step 1 dynamic programming solution reaches a local optimum in both modes, and none of the local search heuristics yield further improvements. This behavior largely persists at $\delta = 0.175$, with the exception of strict mode, where Simulated Annealing (SA) lowers the estimated cost from $519.9464$ to $513.9753$ after $14$ iterations. In contrast, at $\delta = 0.20$, every heuristic successfully improves upon the Step 1 baseline across both formulations. Notably, this configuration marks the first instance where Extended Local Search (ELS) achieves a strictly lower final cost than Local Search (LS), rather than merely accelerating convergence. Convergence for ELS was rapid across both modes at $\delta = 0.20$, requiring at most $4$ improving steps. Finally, across all values of $\delta$, the robust mode solution yields a strictly lower estimated cost than its strict counterpart, and increasing $\delta$ leads to a monotonic decrease in the final estimated cost.

\begin{table}[htbp]
  \centering
  \small
  \caption{OSIP: Estimated cost and conversion rates for all $\delta$ values, NHEFS}
  \label{tab:delta_all_est_conv_nhefs}
  \begin{filecontents*}{appendix_schemes/nhefs/cost_steps_nhefs.csv}
method,step 1,LS ,ELS ,SA
OS15,525.5083,NC,NC,NC
OR15,524.5724,NC,NC,NC
OS175,519.9464,NC,NC,513.9753; 14 
OR175,513.2459,NC,NC,NC
OS20,510.5231,508.0978; 3,505.7892; 4,509.7856; 2
OR20,500.554,497.7725; 2,495.61; 4,497.9888; 1
\end{filecontents*}
\pgfplotstabletypeset[
    col sep=comma,
    string type,
    every head row/.style={before row=\toprule, after row=\midrule},
    every last row/.style={after row=\bottomrule}
  ]{appendix_schemes/nhefs/cost_steps_nhefs.csv}
\end{table}

Table~\ref{tab:delta_all_sig_nhefs} reports the statistical significance metrics for the various \texttt{OSIP} configurations on the \texttt{NHEFS} dataset. Across all specifications, the results remain remarkably uniform: $p$-values are consistently $< 0.0001$, $t$-statistics strictly exceed $5.8071$, and Rosenbaum sensitivity bounds ($\Gamma$) remain tightly bounded between $1.5317$ and $1.5714$. Because every \texttt{OSIP} variant demonstrates overwhelming statistical significance,
we turn to distance and covariate balance metrics to discriminate among the candidate specifications and select the optimal output.


\begin{table}[htbp]
  \centering
  \small
  \caption{OSIP: Significance table for all values of $\delta$, NHEFS}
  \label{tab:delta_all_sig_nhefs}
  \begin{filecontents*}{appendix_schemes/nhefs/nhefs_delta_all_sig.csv}
method,ate,se(ate),p-value,gamma-shift,ci-low,ci-high,t-stat
OS15,3.4548,0.5440,$< 0.0001$,1.5606,2.3886,4.5210,6.3507
OR15,3.3295,0.5486,$< 0.0001$,1.5714,2.2543,4.4047,6.0691
OS175,3.2185,0.5323,$< 0.0001$,1.5453,2.1751,4.2619,6.0464
OR175,3.2918,0.5501,$< 0.0001$,1.5597,2.2136,4.3701,5.9840
OS20,3.2242,0.5294,$< 0.0001$,1.5375,2.1866,4.2619,6.0903
OR20,3.1678,0.5455,$< 0.0001$,1.5317,2.0985,4.2370,5.8071
\end{filecontents*}
\pgfplotstabletypeset[
    col sep=comma,
    string type,
    every head row/.style={before row=\toprule, after row=\midrule},
    every last row/.style={after row=\bottomrule}
  ]{appendix_schemes/nhefs/nhefs_delta_all_sig.csv}
\end{table}

Table~\ref{tab:delta_all_dist_nhefs} reports the distance and balance metrics across all evaluated \texttt{OSIP} configurations. Notably, two strict variants: \texttt{OS175} ($\text{Max-SMD} = 0.1160$) and \texttt{OS20} ($\text{Max-SMD} = 0.1019$) fail to achieve complete covariate balance ($\text{Nib} = 1;0$), exhibiting higher maximum standardized mean differences than the remaining four specifications. Consistent with the estimated cost trajectory, every robust variant strictly outperforms its corresponding strict counterpart across all four Mahalanobis distance metrics for each value of $\delta$. Finally, comparing the sole fully balanced strict variant (\texttt{OS15}) with the balanced robust variants (\texttt{OR175} and \texttt{OR20}) reveals that while \texttt{OS15} attains similar balance metrics ($\text{Max-SMD} = 0.0773$), it incurs substantially higher Mahalanobis matching distance (e.g., $\text{Mean-Mahal} = 1.4173$ for \texttt{OS15}, compared to $1.3083$ and $1.2749$ for \texttt{OR175} and \texttt{OR20}, respectively).


\begin{table}[htbp]
  \centering
  \small
  \caption{OSIP: Distances table for all values of $\delta$, NHEFS}
  \label{tab:delta_all_dist_nhefs}
  \begin{filecontents*}{appendix_schemes/nhefs/nhefs_delta_all_dist.csv}
method,Mean-Mahal,Median-Mahal,Max-Mahal,SD-Mahal,Max-SMD,Mean-SMD,Max-KS,Nib
OS15,1.4173,1.1818,4.3111,0.8737,0.0773,0.0337,0.0920,0;0
OR15,1.3479,1.1565,3.8380,0.7610,0.0891,0.0282,0.0821,0;0
OS175,1.3624,1.2018,4.1888,0.8107,0.1160,0.0404,0.1095,1;0
OR175,1.3083,1.1639,3.8547,0.7233,0.0736,0.0267,0.0896,0;0
OS20,1.3461,1.1717,4.8149,0.8114,0.1019,0.0384,0.1045,1;0
OR20,1.2749,1.1468,3.7351,0.6968,0.0912,0.0343,0.0970,0;0
\end{filecontents*}
\pgfplotstabletypeset[
    col sep=comma,
    string type,
    every head row/.style={before row=\toprule, after row=\midrule},
    every last row/.style={after row=\bottomrule}
  ]{appendix_schemes/nhefs/nhefs_delta_all_dist.csv}
\end{table}

Since the significance measures of all \texttt{OSIP} solutions are similar we will use the distance measures and the balancing measures as our tie-breaking rules. \texttt{OR175} achieves the most refined covariate balance, yielding the lowest maximum standardized mean difference ($\text{Max-SMD} = 0.0736$) and mean standardized difference ($\text{Mean-SMD} = 0.0267$). Alternatively, \texttt{OR20} minimizes global distance metrics, obtaining the lowest overall mean Mahalanobis distance ($1.2749$) and maximum Mahalanobis distance ($3.7351$).   This pair of solutions are analyzed further below in comparison to all other methods.

Table~\ref{tab:delta0.20_all_nhefs_sig} summarizes the statistical significance and sensitivity metrics across all benchmark models for the \texttt{NHEFS} dataset. Within the common support region, $402$ treated units are available, and most methods retain this full sample ($N_{\text{treated}} = 402$). However, \texttt{CEM} suffers extreme sample loss, retaining only $32$ treated units, while \texttt{RQ} retains $308$ treated units. This severe sample reduction in \texttt{CEM} inflates its standard error to $1.6884$, yielding a $p$-value of $0.0729$---rendering it the only model that fails to achieve statistical significance at standard levels ($p$-value $< 0.05$). All other models maintain $p$-value $< 0.0001$.

Focusing on the methods that preserve the entire treated sample, \texttt{Opt} exhibits noticeably weaker statistical robustness than its peers:
\begin{itemize}
    \item Its $t$-statistic is $5.4797$, whereas all other full-sample matched specifications achieve a $t$-statistic of at least $5.8071$ (\texttt{OR20}).
    \item Its Rosenbaum sensitivity parameter ($\Gamma$) drops to $1.3581$, whereas every other full-sample matched model demonstrates substantial robustness with $\Gamma \ge 1.5317$.
\end{itemize}
Because the remaining full-sample methods (\texttt{OR175}, \texttt{OR20}, \texttt{Card}, \texttt{Gen}, and \texttt{Q}) exhibit similarly strong significance metrics, secondary balance and distance metrics must be utilized as tie-breakers to evaluate their final performance.


\begin{table}[htbp]
  \centering
  \small
  \caption{Significance table for all models, NHEFS}
  \label{tab:delta0.20_all_nhefs_sig}
  \begin{filecontents*}{appendix_schemes/nhefs/consolidated_results_nhefs_delta_all_power_1_sig.csv}
method,treated,control,ate,se(ate),p-value,gamma-shift,ci-low,ci-high,t-stat
UM,402,1153,3.4201,0.4358,$< 0.0001$,NA,2.5659,4.2742,7.8479
OR175,402,402,3.2918,0.5501,$< 0.0001$,1.5597,2.2136,4.3701,5.984
OR20,402,402,3.1678,0.5455,$< 0.0001$,1.5317,2.0985,4.237,5.8071
Opt ,402,402,3.0111,0.5495,$< 0.0001$,1.3581,1.934,4.0882,5.4797
Card,402,402,3.2707,0.5486,$< 0.0001$,1.5631,2.1956,4.3459,5.9619
CEM,32,33,3.0838,1.6884,0.0729,NA,-0.2255,6.3932,1.8265
FM,402,1153,3.2917,0.4456,$< 0.0001$,NA,2.4183,4.1651,7.3871
Gen ,402,402,3.6148,0.5541,$< 0.0001$,1.5782,2.5288,4.7008,6.5237
Q ,402,402,3.2149,0.5339,$< 0.0001$,1.5448,2.1685,4.2612,6.0215
RQ,308,308,3.1365,0.5773,$< 0.0001$,1.5717,2.0049,4.2681,5.4331
\end{filecontents*}
\pgfplotstabletypeset[
    col sep=comma,
    string type,
    every head row/.style={before row=\toprule, after row=\midrule},
    every last row/.style={after row=\bottomrule}
  ]{appendix_schemes/nhefs/consolidated_results_nhefs_delta_all_power_1_sig.csv}
\end{table}

\begin{figure}[htbp]
  \centering
  \small
  \includegraphics[width=0.85\textwidth]{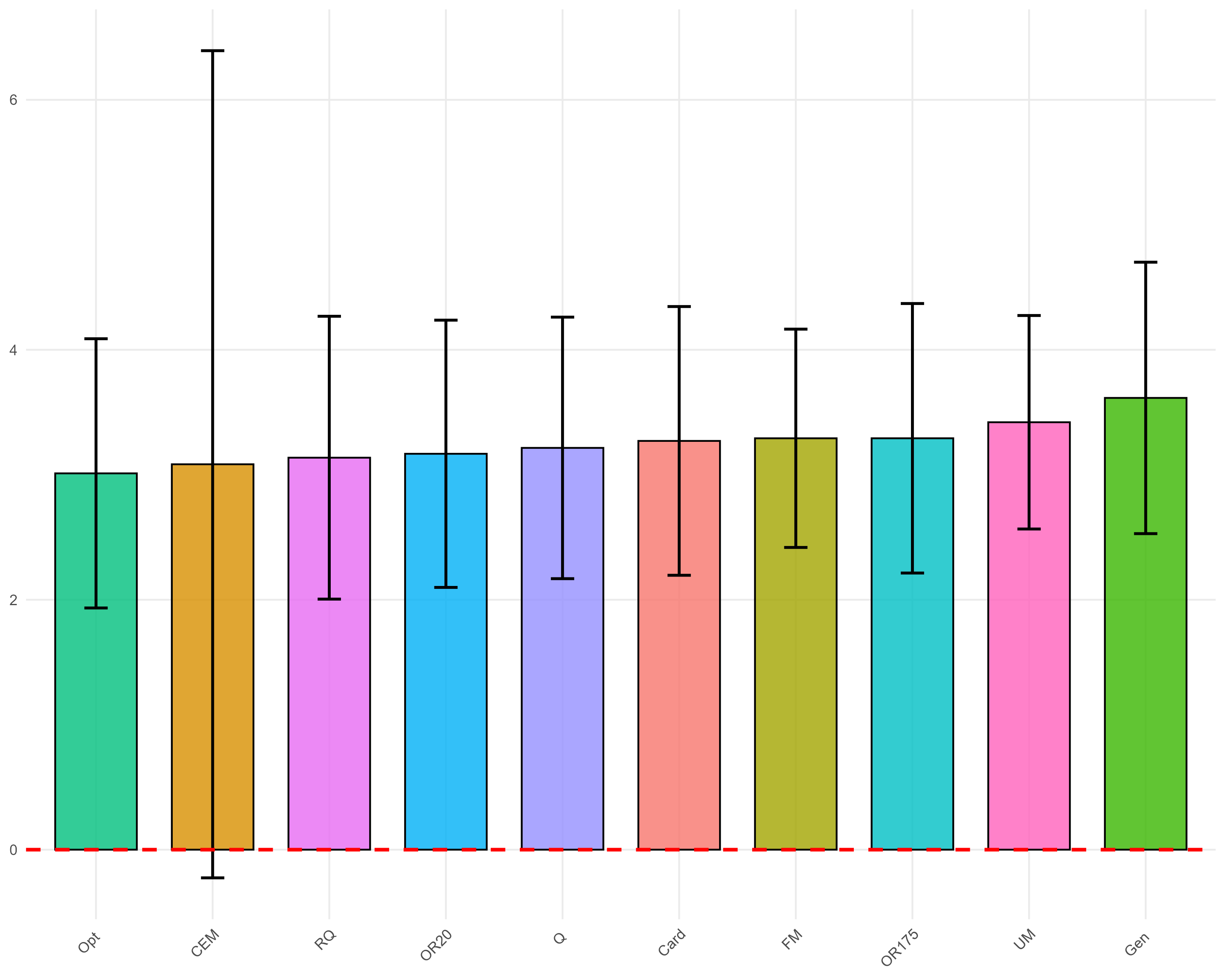} 
  \caption{ATE plot for all models, NHEFS}
  \label{fig:ate_plot_nhefs}
\end{figure}

Turning to the Mahalanobis distance and covariate balance metrics in Table~\ref{tab:delta0.20_all_nhefs_dist}, we observe substantial performance variation across the candidate models:

\begin{itemize}
    \item \textbf{Covariate Imbalance in Baselines:} Both \texttt{OSIP} variants (\texttt{OR175} and \texttt{OR20}) achieve complete covariate balance ($\text{Nib} = 0;0$) with $\text{Max-SMD} \le 0.0912$. In contrast, several baselines fail to achieve balance: \texttt{Opt} leaves $2$ covariates imbalanced ($\text{Max-SMD} = 0.1353$), \texttt{Card} leaves $3$ imbalanced ($\text{Max-SMD} = 0.1490$), and \texttt{FM} leaves $1$ imbalanced ($\text{Max-SMD} = 0.1234$). Even with significant sample loss, \texttt{CEM} fails to achieve balance ($\text{Nib} = 1;0$, $\text{Max-SMD} = 0.1708$). While \texttt{RQ} achieves tight balance ($\text{Max-SMD} = 0.0576$), its severe sample loss precludes it from being a top-tier choice.
    
    \item \textbf{Full-Sample Balanced Models:} Among the four methods that retain all $402$ treated units and achieve full covariate balance (\texttt{OR175}, \texttt{OR20}, \texttt{Gen}, and \texttt{Q}), \texttt{OR175} delivers the best balance profile, attaining the lowest $\text{Max-KS}$ ($0.0896$) and the second-lowest $\text{Mean-SMD}$ ($0.0267$, behind \texttt{Gen}'s $0.0224$).
    
    \item \textbf{Mahalanobis Distance Efficiency:} While \texttt{Opt} yields the lowest overall Mahalanobis distances (e.g., $\text{Mean-Mahal} = 1.2186$), its inability to balance all covariates disqualifies it. Among the valid, fully balanced models, \texttt{OR20} achieves the smallest matching distance ($\text{Mean-Mahal} = 1.2749$, $\text{Max-Mahal} = 3.7351$), closely followed by \texttt{OR175} ($\text{Mean-Mahal} = 1.3083$).
\end{itemize}

Consequently, \texttt{OR175} stands out as the primary choice when prioritizing balance metrics, while \texttt{OR20} represents the optimal trade-off when minimizing overall matching distance.

\begin{table}[htbp]
  \centering
  \small
   \caption{Distances table for all models, NHEFS}
  \label{tab:delta0.20_all_nhefs_dist}
  \begin{filecontents*}{appendix_schemes/nhefs/consolidated_results_nhefs_delta_all_power_1_dist.csv}
method,Mean-Mahal,Median-Mahal,Max-Mahal,SD-Mahal,Max-SMD,Mean-SMD,Max-KS,Nib
UM,NA,NA,NA,NA,0.2696,0.084,0.14,4;2
OR175,1.3083,1.1639,3.8547,0.7233,0.0736,0.0267,0.0896,0;0
OR20,1.2749,1.1468,3.7351,0.6968,0.0912,0.0343,0.097,0;0
Opt ,1.2186,1.1,3.9272,0.6609,0.1353,0.0356,0.1393,2;0
Card,1.7554,1.5482,5.4054,0.9504,0.149,0.0558,0.0896,3;0
CEM,NA,NA,NA,NA,0.1708,0.0293,0.2500,1;0
FM,NA,NA,NA,NA,0.1234,0.0226,0.0843,1;0
Gen ,1.5317,1.2055,4.9861,0.982,0.0809,0.0224,0.1045,0;0
Q ,1.4432,1.2686,4.6315,0.8322,0.0938,0.0365,0.107,0;0
RQ,1.4752,1.3397,3.9726,0.7974,0.0576,0.0125,0.0714,0;0
\end{filecontents*}
\pgfplotstabletypeset[
    col sep=comma,
    string type,
    every head row/.style={before row=\toprule, after row=\midrule},
    every last row/.style={after row=\bottomrule}
  ]{appendix_schemes/nhefs/consolidated_results_nhefs_delta_all_power_1_dist.csv}
\end{table}

\clearpage

Finally, we examine the individual covariate Standardized Mean Difference (SMD) values provided in Table~\ref{tab:bal_nhefs_all} and visualized in the Love plot in Figure~\ref{fig:love_plot_nhefs}. The unmatched baseline (\texttt{UM}) exhibits severe imbalance across four key covariates: \texttt{age} ($\text{SMD}=0.2696$), \texttt{smokeintensity} ($\text{SMD}=-0.2010$), \texttt{smokeyrs} ($\text{SMD}=0.1582$), and \texttt{wt71} ($\text{SMD}=0.1264$).

The candidate models demonstrate varying capabilities in correcting these baseline imbalances:
\begin{itemize}
    \item Both \texttt{OSIP} variants successfully bring all four problematic covariates well within the standard threshold without degrading the balance of remaining variables. \texttt{OR175} attains its maximum imbalance at \texttt{smokeintensity} ($\text{Max-SMD} = 0.0736$), whereas \texttt{OR20} reaches its maximum at \texttt{age} ($\text{Max-SMD} = 0.0912$).
    \item \texttt{Opt} achieves exact zero imbalance on \texttt{sex}, \texttt{race}, and \texttt{active-2}, yet fails to correct the initial imbalances of \texttt{age} ($0.1353$) and \texttt{smokeintensity} ($-0.1282$).
    \item \texttt{Card} similarly fails on \texttt{age} ($0.1459$) and \texttt{smokeintensity} ($-0.1490$), while exacerbating the imbalance of \texttt{school} from $0.0535$ in \texttt{UM} up to $0.1405$.
    \item Both \texttt{CEM} and \texttt{FM} achieve tight balance across most variables, but neither resolves the severe imbalance in \texttt{smokeintensity} ($-0.1708$ and $-0.1234$, respectively).
    \item \texttt{Gen} eliminates all initial imbalances while attaining exact zero balance on three indicator covariates (\texttt{active-0}, \texttt{active-1}, and \texttt{active-2}), reaching its peak imbalance at \texttt{age} ($\text{Max-SMD} = 0.0809$).
    \item \texttt{Q} achieves full balance across all variables, securing exact zero balance on \texttt{active-2} and attaining its peak imbalance at \texttt{wt71} ($\text{Max-SMD} = 0.0938$).
    \item \texttt{RQ} balances all covariates effectively, reaching a low peak imbalance of $\text{Max-SMD} = 0.0576$ at \texttt{smokeyrs}.
\end{itemize}

\begin{table}[htbp]
  \centering
  \small
   \caption{Balance table for all models, NHEFS}
  \label{tab:bal_nhefs_all}
  \begin{filecontents*}{appendix_schemes/nhefs/love_plot_table_all_nhefs.csv}
covariate,UM,OR175,OR20,Opt ,Card,CEM,FM,Gen ,Q ,RQ
sex,-0.0789,-0.0075,-0.005,0,0.0522,0,-0.0169,-0.0448,-0.0249,-0.0097
race,-0.0536,0,-0.005,0,-0.0348,0,-0.0212,-0.0075,0.0075,0
age,0.2696,0.0558,0.0912,0.1353,0.1459,-0.0051,0.0548,0.0809,0.0526,0.0559
school,0.0535,-0.0191,-0.0278,-0.031,0.1405,0.0619,0.0071,-0.0128,-0.076,-0.0034
smokeintensity,-0.201,-0.0736,-0.0793,-0.1282,-0.149,-0.1708,-0.1234,-0.0713,-0.0779,0.0033
smokeyrs,0.1582,0.0328,0.0671,0.0738,0.0321,-0.022,0.0321,-0.0154,0.062,0.0576
exercise-0,-0.0436,0.0075,0.01,0.0025,0,0,-0.0058,0.0025,0.0149,0
exercise-1,0.0198,-0.0224,-0.0274,-0.005,0.0721,0,-0.009,0.0025,-0.0249,-0.0032
exercise-2,0.0239,0.0149,0.0174,0.0025,-0.0721,0,0.0148,-0.005,0.01,0.0032
active-0,-0.0316,0.0075,0.0025,-0.0075,-0.0025,0,-0.0001,0,0.0149,0
active-1,0.0123,-0.0274,-0.0174,0.0075,-0.0075,0,0.0028,0,-0.0149,-0.0032
active-2,0.0193,0.0199,0.0149,0,0.01,0,-0.0027,0,0,0.0032
wt71,0.1264,0.0591,0.081,0.0699,-0.0069,0.0923,-0.0033,0.049,0.0938,0.0196
\end{filecontents*}
\pgfplotstabletypeset[
    col sep=comma,
    string type,
    every head row/.style={before row=\toprule, after row=\midrule},
    every last row/.style={after row=\bottomrule}
  ]{appendix_schemes/nhefs/love_plot_table_all_nhefs.csv}
\end{table}

\begin{figure}[htbp]
  \centering
  \small
  \includegraphics[width=0.85\textwidth]{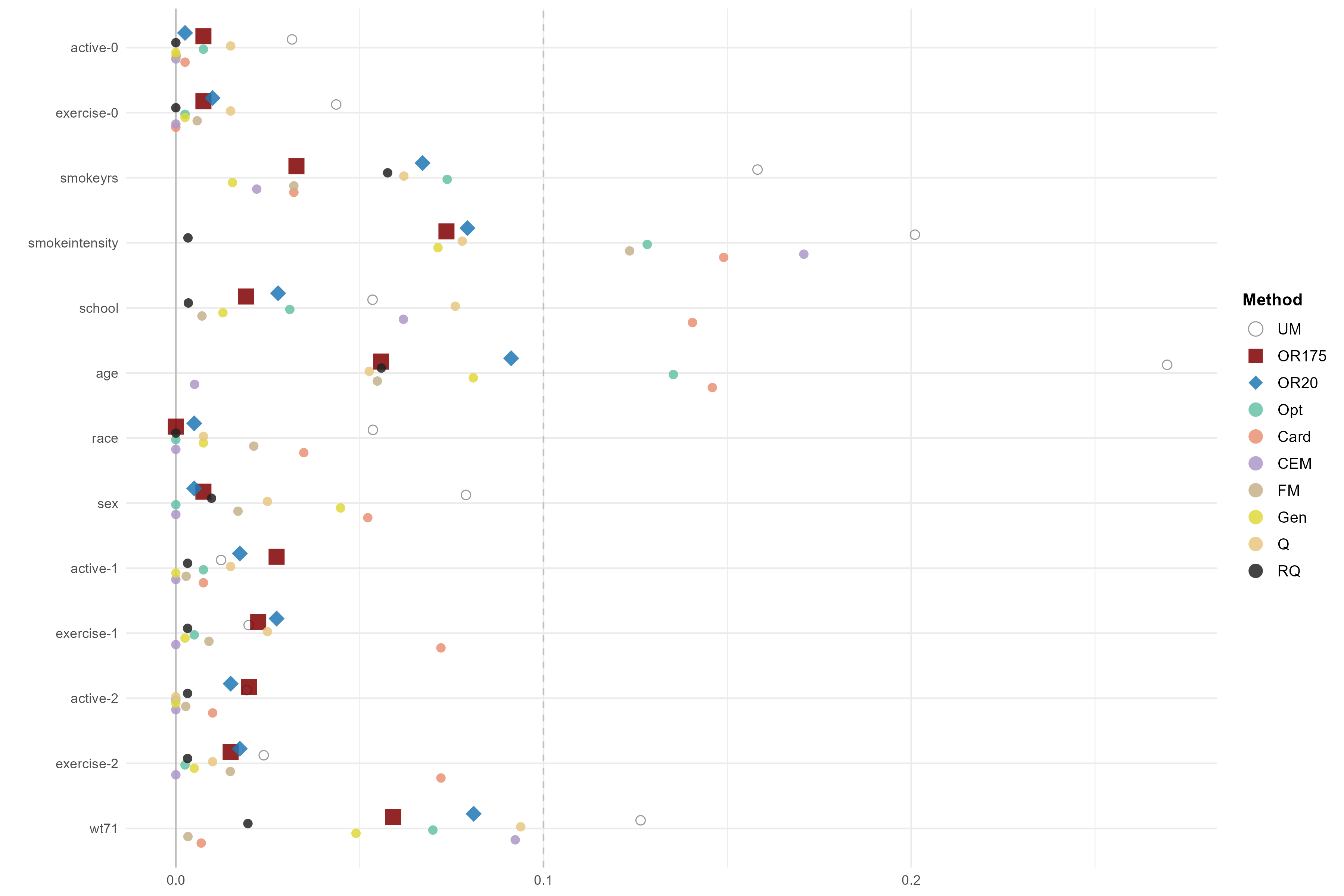} 
  \caption{Love plot for all models, NHEFS}
  \label{fig:love_plot_nhefs}
\end{figure}

In summary, as established through our benchmarking analysis, the two robust \texttt{OSIP} variants, \texttt{OR175} and \texttt{OR20}, emerge as the overall preferred solutions for the \texttt{NHEFS} dataset. Practitioners should select \texttt{OR20} when prioritizing minimal Mahalanobis matching distance, or choose \texttt{OR175} when prioritizing strict covariate balance and Kolmogorov-Smirnov metrics.


\end{document}